\documentclass[lettersize, journal]{IEEEtran}
\IEEEoverridecommandlockouts
% The preceding line is only needed to identify funding in the first footnote. If that is unneeded, please comment it out.
\usepackage{cite}
\usepackage{amsmath,amssymb,amsfonts}
\usepackage{verbatim}
\usepackage{nohyperref}
\usepackage{graphicx}
\usepackage{textcomp}
\usepackage{cases}
\usepackage{amsmath,amssymb,amsfonts}
\usepackage{fancyhdr,graphicx,amsmath,amssymb}
\usepackage[ruled,vlined]{algorithm2e}
\usepackage[monochrome]{xcolor}
\usepackage{subfigure}
\usepackage[utf8]{inputenc}
\usepackage{soul}
\usepackage[noend]{algpseudocode}
\usepackage{multirow}
\usepackage{fontenc}
\usepackage{amsthm}
\usepackage{enumitem}
\usepackage[caption=false,font=normalsize,labelfont=sf,textfont=sf]{subfig}
\usepackage{multicol}

\newtheorem{theorem}{Theorem}
\newtheorem{lemma}{Lemma}
\newtheorem{corollary}{Corollary}
\newtheorem{definition}{Definition}

\def\BibTeX{{\rm B\kern-.05em{\sc i\kern-.025em b}\kern-.08em
    T\kern-.1667em\lower.7ex\hbox{E}\kern-.125emX}}

\begin{document}

\title{AoI, Timely-Throughput, and Beyond: A Theory of Second-Order Wireless Network Optimization
}

\author{
\IEEEauthorblockN{Daojing Guo, Khaled Nakhleh, I-Hong Hou, Sastry Kompella, Clement Kam}

\thanks{
Daojing Guo is with Amazon Web Services (AWS), Seattle, WA (email: daojing.guo@gmail.com). The work of Daojing Guo was done when he was with the ECE Department, Texas A\&M University, College Station, TX.\\
Khaled Nakhleh and I-Hong Hou are with the ECE Department, Texas A\&M University, College Station, TX (email: \{khaled.jamal, ihou\}@tamu.edu). \\
Sastry Kompella is with Nexcepta Inc, Gaithersburg, MD (email: sk@ieee.org). \\
Clement Kam is with the U.S. Naval Research Laboratory (NRL) (email: ckk@ieee.org). \\
This material is based upon work supported in part by NSF under Award Numbers ECCS-2127721 and CCF-2332800 and in part by the U.S. Army Research Laboratory and the U.S. Army Research Office under Grant Number W911NF-22-1-0151. \\
Part of this work has been presented at IEEE INFOCOM 2022 \cite{guo2022}. 
}
}

%\input{cover_page}
%\newpage{} 

\maketitle
\begin{abstract}
This paper introduces a new theoretical framework for optimizing second-order behaviors of wireless networks. Unlike existing techniques for network utility maximization, which only consider first-order statistics, this framework models every random process by its mean and temporal variance. The inclusion of temporal variance makes this framework well-suited for modeling {\color{blue}Markovian} fading wireless channels and emerging network performance metrics such as age-of-information (AoI) and timely-throughput. Using this framework, we sharply characterize the second-order capacity region of wireless access networks. We also propose a simple scheduling policy and prove that it can achieve every interior point in the second-order capacity region. 
To demonstrate the utility of this framework, we apply it to an unsolved network optimization problem where some clients wish to minimize AoI while others wish to maximize timely-throughput. We show that this framework accurately characterizes AoI and timely-throughput. Moreover, it leads to a tractable scheduling policy that outperforms other existing work.
\end{abstract}

\begin{IEEEkeywords}
Age-of-information (AoI), timely-throughput, Brownian motion, wireless networks.
\end{IEEEkeywords}

\IEEEpeerreviewmaketitle

\section{introduction}

\IEEEPARstart{T}{here} are two seemingly contradictory trends happening in the field of wireless network optimization. On one hand, the study of network utility maximization (NUM) has witnessed tremendous success in the past two decades. Techniques based on dual decomposition, Lyapunov function, etc., have been shown to produce tractable and optimal solutions in complex networks for a wide range of objectives, including maximizing spectrum efficiency, minimizing power consumption, enforcing fairness among clients, and the combination of these objectives. Recent studies have also established iterative algorithms that not only converge to the optimum, but also have provably fast convergence rate \cite{huang2014power, huang2009delay, liu2016heavy, chen2017learn, liu2016achieving}. On the other hand, there have been growing interests in new performance metrics for emerging network applications, such as quality-of-experience (QoE) \cite{joseph2015,zhangxu2019, zhangyinjie2020,bampis2021, Bhattacharyya2022} for video streaming and age-of-information (AoI) \cite{kadota2019,lou2020,yates2021,chen2022,pappas2022age} for real-time state estimation, and timely-throughput \cite{chen2018, singh2019, yang2019, sun2021, singh2021} for real-time communications. Surprisingly, except for a few special cases, the problem of optimizing these new performance metrics remains largely open. This raises the question: Why do existing NUM techniques fail to solve the optimization problem for these new performance metrics?

The fundamental reason is that current NUM techniques are only applicable to first-order performance metrics, while emerging new performance metrics involve higher-order behaviors. Existing NUM problems typically define the utility of a flow $n$ as $U_n(x_n)$, where $x_n$ is an asymptotic first-order performance metric, such as throughput (\emph{long-term average} number of packet deliveries per unit time), power consumption (\emph{long-term average} amount of energy consumption per unit time), and channel utilization (\emph{long-term average} number of transmissions per unit time). However, emerging performance metrics like QoE and AoI require the characterization of short-term network behaviors, and hence cannot be fully captured by asymptotic first-order statistics. 

To bridge the gap between NUM techniques and emerging performance metrics, we present a new framework of second-order wireless optimization. This framework consists of the second-order models, that is, the means and the temporal variances, of all random processes, including the channel qualities and packet deliveries of wireless clients. The incorporation of temporal variances enables this framework to better characterize {\color{blue}Markovian} fading wireless channels, such as Gilbert-Elliott channels, and emerging performance metrics. 

Using this framework, we sharply characterize the second-order capacity region of wireless networks, which entails the set of means and temporal variances of packet deliveries that are feasible under the constraints of the second-order models of channel qualities. As a result, the problem of optimizing emerging performance metrics is reduced to one that finds the optimal means and temporal variances of packet deliveries within the second-order capacity region. We also propose a simple scheduling policy and show that it can achieve every interior point of the second-order capacity region.

To demonstrate the utility of our framework, we apply it to an unsolved network optimization problem. This problem considers a wireless system where some clients wish to minimize their AoIs while other clients wish to maximize their timely-throughputs. Moreover, the wireless channel of each client follows the Gilbert-Elliott (GE) model. We theoretically derive the closed-form expressions of the means and temporal variances for Gilbert-Elliott channels. We also show that both AoI and timely-throughput can be well-approximated by the mean and the temporal variance of the packet delivery process. As a result, we are able to directly apply our theoretical framework to obtain the optimal scheduling policy {\color{blue}under the approximations}. Simulation results show that our policy significantly outperforms existing policies. Importantly, despite being a generic policy for second-order network optimization, our policy is able to achieve smaller AoI than those specifically designed to minimize AoI, and higher timely-throughput than those specifically designed to maximize timely-throughput.

The rest of the paper is organized as follows: Section \ref{sec:model} formally defines the second-order models of channel qualities and packet deliveries and the problem of second-order optimization. Section \ref{sec: aoi} uses the second-order models to formulate a yet unsolved network optimization problem: the problem of minimizing AoI of real-time sensing clients and maximizing timely-throughput of live video streaming clients over Gilbert-Elliott channels. Section \ref{sec: capacity region} derives an outer bound of the second-order capacity region. Section \ref{sec: scheduling policy} proposes a simple scheduling policy and shows that it achieves every interior point of the second-order capacity region. Section \ref{sec:simulation} presents our simulation results. Section \ref{sec:related} surveys some related studies. Finally, Section \ref{sec:conclusion} concludes the paper.
\section{System Model for Second-Order Wireless Network Optimization}
\label{sec:model}

We begin by describing a generic network optimization problem. Consider a wireless system where one access point (AP) serves $N$ clients, numbered as $n = 1,2,\dots, N$. Time is slotted and denoted by $t = 1,2,3,\dots.$ We consider the ON-OFF channel model where the AP can schedule a client for transmission if and only if the channel for the client is ON. Let $X_n(t)$ be the indicator function that the channel for client $n$ is ON at time $t$. We assume that the sequence $\{X_n(1), X_n(2),\dots\}$ is governed by a stochastic positive-recurrent Markov process with finite states. In each time slot, if there is at least one client having an ON channel, then the AP selects a client with an ON channel and transmits a packet to it. Let $Z_n(t)$ be the indicator function that client $n$ receives a packet at time $t$. The empirical performance of client $n$ is modeled as a function of the entire sequence $\{Z_n(1), Z_n(2),\dots\}$. %We note that the performance model is very general and covers virtually all existing network performance metrics, including both traditional ones like throughput and emerging ones like AoI. 
The network optimization problem is to find a scheduling policy that maximizes the total performance of the network.

Solving this generic network optimization problem may be difficult because it requires solving {\color{blue} a high-dimensional Markov decision process, as the state of the system consists of the states of all channels and the delivery processes of all clients.}  As a result, except for a few special cases, there remain no tractable optimal solutions for many emerging network performance metrics like AoI. To circumvent this challenge, we propose capturing each random process by its second-order model, namely, its mean and temporal variance. %{\color{blue}We will then approximate the performance of each client by the mean and temporal variance of its delivery process.}

We first define the second-order model for channels. With a slight abuse of notations, let $X_S(t):=\max\{X_n(t)|n\in S\}$ be the indicator function that at least one client in $S$ has an ON channel at time $t$. Since all channels are governed by stochastic positive-recurrent Markov processes, the strong law of large numbers for Markov chains states that $\frac{\sum_{t=1}^TX_S(t)}{T}$ converges to a constant almost surely as $T\rightarrow\infty$. Hence, we can define the mean of $X_S$ as
\begin{equation}
    {\color{blue}m_S:=\lim_{T\rightarrow\infty}\frac{\sum_{t=1}^TX_S(t)}{T}.}
\end{equation}
The Markov central limit theorem further states that $\frac{\sum_{t=1}^TX_S(t)-Tm_S}{\sqrt{T}}$ converges in distribution to a Gaussian random variable as $T\rightarrow\infty$. Hence, we define the temporal variance of $X_S$ as
\begin{equation}
    v_S^2:=\lim_{T\rightarrow\infty}E[(\frac{\sum_{t=1}^TX_S(t)-Tm_S}{\sqrt{T}})^2].
\end{equation}
The second-order channel model is then expressed as the collection of the means and temporal variances of all $X_S$, namely, $\{(m_S, v_S^2)|S\subseteq\{1,2,\dots,N\}\}$.

The second-order model for packet deliveries is defined similarly. {\color{blue}We assume that the AP employs a scheduling policy under which the packet delivery process of each client $n$, $\{Z_n(1), Z_n(2),\dots\}$, follows a positive-recurrent Markov process. Then, we can define the mean and the temporal variance of $Z_n$ as}
\begin{equation}
    \mu_n:=\lim_{T\rightarrow\infty}\frac{\sum_{t=1}^TZ_n(t)}{T},
\end{equation}
and
\begin{equation}
 {\color{blue}\sigma_n^2:=\lim_{T\rightarrow\infty}E[(\frac{\sum_{t=1}^TZ_n(t)-T\mu_n}{\sqrt{T}})^2].}
\end{equation}
The second-order delivery model is $\{(\mu_n,\sigma_n^2)|1\leq n \leq N\}$. The performance of client $n$ is modeled as a function of $(\mu_n,\sigma_n^2)$, which we denote by $F_n(\mu_n,\sigma_n^2)$.

Since clients want to have large means and small variances for their delivery processes, we define the second-order capacity region of a network as follows:
\begin{definition}
[Second-order capacity region] Given a second-order channel model $\{(m_S, v_S^2)|S\subseteq\{1,2,\dots,N\}\}$, the second-order capacity region is the set of all $\{(\mu_n,\sigma_n^2)|1\leq n \leq N\}$ such that there exists a scheduling policy under which $\lim_{T\rightarrow\infty}\frac{\sum_{t=1}^TZ_n(t)}{T}=\mu_n$, {\color{blue} almost surely}, and {\color{blue}$\lim_{T\rightarrow\infty}E[(\frac{\sum_{t=1}^TZ_n(t)-T\mu_n}{\sqrt{T}})^2]\leq\sigma_n^2,\forall n$}. $\Box$
\end{definition}

The second-order network optimization problem entails finding the scheduling policy that maximizes $\sum_{n=1}^NF_n(\mu_n,\sigma_n^2)$.

\section{A Motivating Example and Its Second-Order Model} \label{sec: aoi}

To show the effectiveness of our methodology, we apply it to a network optimization problem that is yet to be solved. The network consists of two types of clients: \emph{real-time sensing} clients indexed as $n=1,2,\dots, I$, and \emph{live video streaming} clients indexed as $n=I+1, I+2, \dots, I+J$ with the total number of clients $N = I+J$. Real-time sensing clients generate surveillance updates to the AP to make control decisions, and each client aims to minimize the age-of-information to ensure data freshness for good decision-making. Live video streaming clients require both low video latency and smooth playback, and are measured by their relative deadline and timely-throughput. Previous research has focused on optimizing real-time sensing \cite{moltafet2020, buyukates20} and live streaming clients \cite{zuo2022, singh2021} separately. However, due to the substantial differences between the two client types, achieving network-wide performance optimization when both are present remains as a significant challenge.

To make the optimization problem even harder, we further consider that each wireless link follows the Markovian Gilbert-Elliott channel model, whose channel quality is not i.i.d. over time. In this section, we will derive the second-order models of Gilbert-Elliott channels, real-time sensing clients, and live video streaming clients. This enables us to address all of them in a unified theoretical framework.

\subsection{The Second-Order Model of Gilbert-Elliott Channels}

\begin{figure}
\centering
\includegraphics[width=2in]{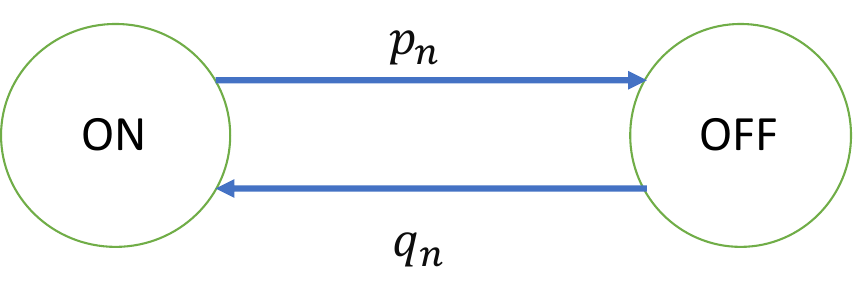}
\caption{The Gilbert-Elliott model.}\label{fig:GE} 
\vspace{-5pt}
\end{figure}

In Gilbert-Elliott channels \cite{Gilbert1960GEmodel,Elliot1963GEmodel}, the channel for each client $n$ is modeled as a two-state Markov process, as shown in Fig. ~\ref{fig:GE}. The channel is ON if it is in the good (G) state, and is OFF if it is in the bad (B) state. The transition probabilities from G to B and from B to G are $p_n$ and $q_n$, respectively. The channels are independent from each other.
We now show the second-order model of Gilbert-Elliott channels.
\begin{lemma} \label{lemma:second-order-GE}
Under the Gilbert-Elliott channels, for all $S$,
\begin{align}
    m_S=& 1-\prod_{n\in S}\frac{p_n}{p_n+q_n},\label{eq:GE mean}\\ 
    v_S^2=&2\sum_{k=1}^\infty\Big(\prod_{n\in S}G_n(k+1)-\prod_{n\in S}\frac{p_n}{p_n+q_n}\Big)\prod_{n\in S}\frac{p_n}{p_n+q_n}\nonumber\\
    &+\prod_{n\in S}\frac{p_n}{p_n+q_n}-(\prod_{n\in S}\frac{p_n}{p_n+q_n})^2,\label{eq:GE variance}
\end{align}
where $G_n(k)=\frac{p_n}{p_n+q_n}+\frac{q_n}{p_n+q_n}(1-p_n-q_n)^{k-1}$.
\end{lemma}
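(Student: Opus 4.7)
The plan is to compute $m_S$ and $v_S^2$ by reducing each to per-client quantities via the independence of the $N$ Gilbert-Elliott (GE) chains, and then summing a geometrically decaying covariance series.

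First I would rewrite the aggregate indicator as $X_S(t)=1-\prod_{n\in S}(1-X_n(t))$. Since each chain is irreducible on $\{G,B\}$ with transition probabilities $p_n,q_n$, its stationary distribution is $\pi_{G,n}=q_n/(p_n+q_n)$, $\pi_{B,n}=p_n/(p_n+q_n)$, so in steady state $E[1-X_n(t)]=p_n/(p_n+q_n)$. Independence across clients then gives $m_S=E[X_S(t)]=1-\prod_{n\in S}p_n/(p_n+q_n)$, establishing \eqref{eq:GE mean}. Writing $P:=\prod_{n\in S}p_n/(p_n+q_n)$ will make the subsequent algebra cleaner.

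Second, for the temporal variance I would invoke the standard identity for a positive-recurrent stationary Markov chain, $v_S^2=\mathrm{Var}(X_S(1))+2\sum_{k=1}^\infty\mathrm{Cov}(X_S(1),X_S(k+1))$, which follows by expanding $E[(\sum_{t=1}^T X_S(t)-Tm_S)^2]/T$ and taking $T\to\infty$. The variance term is $m_S(1-m_S)=(1-P)P=P-P^2$, matching the last line of \eqref{eq:GE variance}. For the covariance, I would diagonalize each two-state transition matrix to obtain the $k$-step probabilities; the eigenvalue $1-p_n-q_n$ yields $P(X_n(k+1)=0\mid X_n(1)=0)=\frac{p_n}{p_n+q_n}+\frac{q_n}{p_n+q_n}(1-p_n-q_n)^{k}$, which is exactly $G_n(k+1)$ as defined. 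Then $P(X_n(1)=0,X_n(k+1)=0)=\pi_{B,n}G_n(k+1)$ and, using independence across $n$, $E[\prod_{n\in S}(1-X_n(1))(1-X_n(k+1))]=P\prod_{n\in S}G_n(k+1)$. Substituting back into $E[X_S(1)X_S(k+1)]$ and subtracting $m_S^2$ gives $\mathrm{Cov}(X_S(1),X_S(k+1))=P\bigl(\prod_{n\in S}G_n(k+1)-P\bigr)$, which upon doubling and summing produces the first line of \eqref{eq:GE variance}.

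The only step that needs more than mechanical care is the passage to the infinite-sum representation of $v_S^2$. I would justify it by noting that $\prod_{n\in S}G_n(k+1)-P$ is a linear combination of products of terms $(1-p_n-q_n)^k$, each with $|1-p_n-q_n|<1$ (assuming irreducibility of each GE chain), so the covariances decay geometrically and the series converges absolutely. This geometric mixing also lets one interchange limit and expectation in the Markov CLT variance formula, closing the argument. The per-client algebra and the product-of-independents step are routine; the main subtlety is simply bookkeeping of the stationary probabilities and the shift-by-one convention in $G_n(k+1)$.
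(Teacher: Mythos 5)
Your proposal is correct and follows essentially the same route as the paper: pass to the complement indicators $\prod_{n\in S}(1-X_n(t))$, use stationarity and independence for the mean, and apply the Markov-chain CLT variance formula $\mathrm{Var}+2\sum_k\mathrm{Cov}$ with the conditional probabilities $G_n(k)$ (which the paper obtains by solving a first-order recursion rather than diagonalizing, an immaterial difference). Your added remark on absolute convergence of the geometrically decaying covariance series is a small justification the paper omits.
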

\begin{proof}
Let $Y_n(t):= 1- X_n(t)$ be the indicator function that client $n$ has an OFF channel at time $t$. Let $Y_S(t):=1-X_S(t)$ be the indicator function that all clients in the subset $S$ have OFF channels at time $t$. Hence, we have $Y_S(t)=\prod_{n\in S}Y_n(t)$. Suppose the Markov process of each channel is in the steady-state at time $t$, then we have $Prob(Y_n(t)=1)=\frac{p_n}{p_n+q_n}$. Hence, $E[Y_S(t)]=\prod_{n\in S}\frac{p_n}{p_n+q_n}$ and $E[X_S(t)] = 1-E[Y_S(t)]=1-\prod_{n\in S}\frac{p_n}{p_n+q_n}$. This establishes (\ref{eq:GE mean}).

Next, we establish (\ref{eq:GE variance}). We have $(\sum_{t=1}^TX_S(t)-Tm_S)^2=(\sum_{t=1}^TY_S(t)-T(1-m_S))^2$. By the Markov central limit theorem \cite{geyer2011introduction}, we can calculate $v_S^2$ by assuming that the Markov process of each channel is in the steady-state at time $1$ and using the following formula:
\begin{equation}
    v_S^2=Var(Y_S(1))+2\sum_{k=1}^\infty Cov(Y_S(1), Y_S(1+k)).
\end{equation}
Since $Y_S(1)$ is a Bernoulli random variable with mean $\prod_{n\in S}\frac{p_n}{p_n+q_n}$, we have 
\begin{equation}Var(Y_S(1))=\prod_{n\in S}\frac{p_n}{p_n+q_n}-(\prod_{n\in S}\frac{p_n}{p_n+q_n})^2. \label{eq:Y_variance}
\end{equation}

Let $G_n(k)=Prob(Y_n(k)=1|Y_n(1)=1)$. Then,
\begin{align}
    &E[Y_S(1)Y_S(1+k)]=Prob(Y_S(1+k)=1|Y_S(1)=1) \notag \\
    &\times Prob(Y_S(1)=1)\notag\\
    =&Prob(Y_n(1+k)=1,\forall n\in S|Y_n(1)=1, \forall n\in S)\notag \\
    &\times \prod_{n\in S}\frac{p_n}{p_n+q_n}\notag\\
    =&\prod_{n\in S}G_n(k+1)\prod_{n\in S}\frac{p_n}{p_n+q_n},
\end{align}
and
\begin{align}
    &Cov(Y_S(1), Y_S(1+k))\notag\\
    =&E[Y_S(1)Y_S(1+k)]-E[Y_S(1)]E[Y_S(1+k)]\notag\\
    =&\Big(\prod_{n\in S}G_n(k+1)-\prod_{n\in S}\frac{p_n}{p_n+q_n}\Big)\prod_{n\in S}\frac{p_n}{p_n+q_n}.\label{eq:Y_cov}
\end{align}
Combining (\ref{eq:Y_variance}) and (\ref{eq:Y_cov}) establishes (\ref{eq:GE variance}).

It remains to find the closed-form expression of $G_n(k)$. We have
\begin{align}
    &G_n(k)=Prob(Y_n(k)=1|Y_n(1)=1)\notag\\
    =&G_n(k-1)(1-q_n)+(1-G_n(k-1))p_n\notag\\
    =&p_n+(1-p_n-q_n)G_n(k-1),
\end{align}
if $k>1$, and $G_n(k)=1$, if $k=1$. Solving this recursive equation yields $G_n(k)=\frac{p_n}{p_n+q_n}+\frac{q_n}{p_n+q_n}(1-p_n-q_n)^{k-1}$. This completes the proof.
\end{proof}

When $p_n+q_n=1$, the Gilbert-Elliott channel reduces to the i.i.d. channel model where $X_n(t)=1$ with probability $q_n$, independent from any prior events. By replacing $p_n=1-q_n$, we obtain the second-order model of i.i.d. channels as below:
\begin{corollary}
Under the i.i.d. channels with $Prob(X_n(t)=1)=q_n$,
\begin{align}
    m_S=& 1-\prod_{n\in S}(1-q_n), 
    v_S^2=& \prod_{n\in S}(1-q_n)-\prod_{n\in S}(1-q_n)^2,
\end{align}
for all $S$. $\Box$
\end{corollary}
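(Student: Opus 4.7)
The plan is to obtain this corollary as a direct specialization of Lemma~\ref{lemma:second-order-GE} under the parameter substitution $p_n = 1 - q_n$, corresponding to the condition $p_n + q_n = 1$. The mean formula falls out immediately: I would observe that $\frac{p_n}{p_n + q_n} = 1 - q_n$, so (\ref{eq:GE mean}) collapses to $m_S = 1 - \prod_{n \in S}(1 - q_n)$ with essentially no work.

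For the variance, the main content is showing that the infinite sum in (\ref{eq:GE variance}) vanishes. My plan is to examine $G_n(k) = \frac{p_n}{p_n+q_n} + \frac{q_n}{p_n+q_n}(1-p_n-q_n)^{k-1}$ under the constraint $p_n+q_n=1$. The factor $(1 - p_n - q_n)^{k-1}$ then equals zero for every $k \geq 2$, so $G_n(k) = 1 - q_n = \frac{p_n}{p_n+q_n}$ for all $k \geq 2$. Since the sum in (\ref{eq:GE variance}) ranges over $k \geq 1$ and uses $G_n(k+1)$ (so the relevant index is always $\geq 2$), every product $\prod_{n\in S} G_n(k+1)$ equals $\prod_{n \in S}\frac{p_n}{p_n+q_n}$, and each summand $\bigl(\prod_{n\in S} G_n(k+1) - \prod_{n\in S}\frac{p_n}{p_n+q_n}\bigr)\prod_{n\in S}\frac{p_n}{p_n+q_n}$ is identically zero. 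This is the analytic manifestation of the fact that the i.i.d.\ channel has no temporal correlation.

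After the sum vanishes, only the boundary terms $\prod_{n\in S}\frac{p_n}{p_n+q_n} - \bigl(\prod_{n\in S}\frac{p_n}{p_n+q_n}\bigr)^2$ remain, which I would rewrite as $\prod_{n\in S}(1 - q_n) - \prod_{n\in S}(1 - q_n)^2$ using $(\prod_n a_n)^2 = \prod_n a_n^2$. There is really no obstacle here beyond careful index bookkeeping on $G_n(k+1)$ versus $G_n(k)$; the whole argument is a one-line substitution plus a vanishing-covariance observation. As an independent sanity check, I would note that under i.i.d.\ channels $X_S(t)$ is itself an i.i.d.\ Bernoulli sequence with parameter $m_S$, so the Markov CLT variance must reduce to the per-slot Bernoulli variance $m_S(1 - m_S)$, and this matches the claimed formula since $1 - m_S = \prod_{n \in S}(1 - q_n)$.
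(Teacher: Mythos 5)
Your proposal is correct and matches the paper's own (implicit) argument, which simply substitutes $p_n = 1-q_n$ into Lemma~\ref{lemma:second-order-GE}; your observation that $(1-p_n-q_n)^{k}=0$ for $k\geq 1$ kills the covariance sum exactly as intended, and the Bernoulli sanity check $v_S^2=m_S(1-m_S)$ confirms the result.
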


\subsection{The Second-Order Model of Real-Time Sensing}

Real-time sensing clients are sensors that generate information updates to be transmitted to the AP. The performance of a sensor $n$ is measured by its long-term average age-of-information (AoI), denoted by $\overline{AoI}_n$.

In a nutshell, the AoI corresponding to a sensor at a given time is defined as the age of the newest information update that it has ever delivered to the centralized server. We consider that each sensor $n$ generates new updates by a Bernoulli random process. In each time slot $t$, sensor $n$ generates a new update with probability $\lambda_n$, independent from any prior events. In most scenarios, newer updates are more relevant to the server's decision making. Hence, each sensor only keeps the most recent update in its memory, and it transmits the most recent update whenever it is scheduled for transmission. The controller knows $\lambda_n$ but not the exact times at which sensors generate new updates. Hence, we assume that the scheduling decision is independent from {\color{blue}update generation processes}. Let $g_n(k)$ be the time when sensor $n$ generates the $k-$th update and let $AoI_n(t)$ be the AoI of sensor $n$ at time $t$. Then we have $AoI_n(t) = t - \max\{g_n(k)|g_n(k)<t, k = 1, 2, ,\dots\}$, if sensor $n$ delivers a packet at time $t$, and $AoI_n(t) = AoI_n(t-1)+1$, otherwise.
 
% In other words, a sensor discards all its prior updates every time it generates a new update.
%Similar to the case studied by Kadota and Modiano \cite{kadota2019minimizing},  An important difference between the system in \cite{kadota2019minimizing} and ours is that  \cite{kadota2019minimizing} considers that the controller knows when each sensor generates a new update. In practice, however, the controller cannot know whether a sensor has generated a new update until it schedules the sensor for transmission. In this paper, we further address the issue that the controller only knows 

Let $A_n(m):=\min\{\tau|\sum_{t=1}^\tau Z_n(t)=m\}$ be the time of the $m$-th delivery for client $n$, and let $B_n(m) := A_n(m+1)-A_n(m)$ be the time between the $m$-th and the $(m+1)$-th deliveries. Since scheduling decisions are independent from {\color{blue}update generation processes}, we have the following:
\begin{lemma}
If $\{B_n(0), B_n(1), \dots\}$ is independent from the update generation processes of sensor $n$, then the long-term average AoI of sensor $n$ is
\begin{equation}
    \overline{AoI}_n=\frac{E[B_n^2]}{2E[B_n]} + \frac{1}{\lambda_n}-\frac{1}{2},
\end{equation}
where $E[B_n^2]:=\lim_{b\rightarrow\infty}\sum_{m=1}^b B_n(m)^2/b$ and $E[B_n]:=\lim_{b\rightarrow\infty}\sum_{m=1}^b B_n(m)/b$.
\end{lemma}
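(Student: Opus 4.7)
The plan is to decompose the time-averaged AoI into contributions from individual inter-delivery intervals, evaluate each contribution using the stated independence assumption, and combine via a renewal-type ratio of averages.

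First, I would introduce the quantity $Y_m := A_n(m) - \max\{g_n(k)\mid g_n(k)<A_n(m)\}$, the AoI measured at the instant of the $m$-th delivery. Because $AoI_n(t)$ increments by one every slot between successive deliveries, the sum of the AoI staircase over the interval $[A_n(m), A_n(m+1))$ of length $B_n(m)$ can be written in closed form as $Y_m B_n(m) + \tfrac{1}{2} B_n(m)(B_n(m)-1)$. Summing these per-interval areas over all intervals that cover $[1,T]$ and dividing by $T$ expresses $\overline{AoI}_n$ as the ratio of two Ces\`aro averages: the numerator is the average of $Y_m B_n(m)+\tfrac{1}{2}B_n(m)(B_n(m)-1)$, while the denominator is the average of $B_n(m)$, which converges to $E[B_n]$.

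Next I would evaluate the two pieces of the numerator separately. The piece depending only on $B_n(m)$ converges to $\tfrac{1}{2}\bigl(E[B_n^2]-E[B_n]\bigr)$ by definition. For the cross term $Y_m B_n(m)$, the hypothesis that $\{B_n(m)\}$ is independent of the Bernoulli update generation process lets me argue two things: (i) conditional on any delivery time $A_n(m)$, $Y_m$ has the stationary age distribution of a Bernoulli($\lambda_n$) process, which is geometric on $\{1,2,\dots\}$ with mean $1/\lambda_n$, and in particular is independent of $A_n(m)$; and (ii) $Y_m$ is therefore independent of $B_n(m)$. A law-of-large-numbers argument then yields $\tfrac{1}{M}\sum_{m=1}^M Y_m B_n(m) \to (1/\lambda_n)\,E[B_n]$. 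Substituting and simplifying produces the claimed formula $\frac{E[B_n^2]}{2E[B_n]}+\frac{1}{\lambda_n}-\frac{1}{2}$.

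The main obstacle will be justifying the factorization $\lim_M \tfrac{1}{M}\sum_m Y_m B_n(m)= E[Y_m]\,E[B_n]$. The difficulty is twofold: the sequence $\{B_n(m)\}$ may exhibit arbitrary temporal correlations under the scheduling policy, and $Y_m$ depends on $A_n(m)$, which is itself a function of the past $B_n$ values. The stated independence of the $B_n$ sequence from the update-generation process is exactly what is needed to decouple these dependencies, so the core technical step is to translate that independence into a valid pointwise independence between $Y_m$ and $B_n(m)$ together with an ergodic averaging argument --- most cleanly by conditioning on the entire $B_n$ trajectory and invoking a strong law for the conditionally i.i.d. sequence $\{Y_m\}$.
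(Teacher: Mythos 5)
Your proposal is correct and follows essentially the same route the paper takes: the paper's one-line proof simply invokes the area/renewal decomposition from Proposition 2 of Kadota--Modiano together with the independence of the delivery process from the update-generation process, which is exactly the staircase decomposition, the ratio of Ces\`aro averages, and the computation $E[Y_m]=1/\lambda_n$ (geometric age of a Bernoulli process) that you carry out explicitly. Your writeup is in fact a more self-contained version of the argument the paper defers to the citation.
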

\begin{proof}
This lemma can be established by combining techniques in the proof of Proposition 2 in \cite{kadota2019minimizing} and the fact that $B_n(m)$ is independent from {\color{blue}update generation processes}.
\end{proof}

We aim to express $\overline{AoI}_n$ as a function of the second-order delivery model of client $n$, $(\mu_n, \sigma_n^2)$. Since there can be multiple sequences of $\{Z_n(1), Z_n(2),\dots\}$ with the same $(\mu_n, \sigma_n^2)$, we will derive $\overline{AoI}_n$ with respect to a \emph{second-order reference delivery process} as defined below.

Let $BM_{\mu_n,\sigma_n^2}(t)$ be a Brownian motion random process with mean $\mu_n$ and variance $\sigma_n^2$ \cite{chen2001fundamentals} and initial value of $BM_{\mu_n,\sigma_n^2}(0)=0$. An important property of the Brownian motion random process is that for any $t_1<t_2$, $BM_{\mu_n,\sigma_n^2}(t_2)-BM_{\mu_n,\sigma_n^2}(t_1)$ is a Gaussian random variable with mean $(t_2-t_1)\mu_n$ and variance $(t_2-t_1)\sigma_n^2$. {\color{blue}Given such a Brownian motion random process, we define the reference delivery process such that, if, at the end of a time slot, the Brownian motion random process has increased by one since the last packet delivery, then there is a packet delivery in the reference delivery process.} The formal definition of the reference delivery process is shown below:

%Our goal is to define a sequence $\{Z'_n(1), Z'_n(2),\dots\}$ such that $\sum_{\tau=1}^t Z'_n(\tau)\approx BM_{\mu_n,\sigma_n^2}(t)$.

\begin{definition}
Given $(\mu_n, \sigma_n^2)$, the second-order reference delivery process, denoted by $\{Z'_n(1), Z'_n(2),\dots\}$ is defined to be
\begin{equation}
    Z'_n(t)=\left\{\begin{array}{ll}1&\mbox{if $BM_{\mu_n,\sigma_n^2}(t)-BM_{\mu_n,\sigma_n^2}(t^-)\geq 1$,}\\
    0&\mbox{else,}
    \end{array}
    \right.\label{eq:Z'def}
\end{equation}
where $t^-:=\max\{\tau|\tau<t, Z'_n(\tau)=1\}$. {\color{blue}To address the boundary condition, we define $Z'_n(0)=1$.} $\Box$
\end{definition}

We now derive $\overline{AoI}_n$ with respect to the sequence $\{Z'_n(1), Z'_n(2),\dots\}$. Consider the time between the $m$-th and the $(m+1)$-th deliveries, which is denoted by $B_n(m)$, under the sequence $\{Z'_n(1), Z'_n(2),\dots\}$. From (\ref{eq:Z'def}), $B_n(m)$ can be approximated by the amount of time needed for the Brownian motion random process to increase by 1, which is equivalent to the first-hitting time for a fixed level 1 and we denote it by $H_n$. It has been shown that the the first-hitting time for a fixed level 1 follows the inverse Gaussian distribution $IG(\frac{1}{\mu_n}, \frac{1}{\sigma_n^2})$ \cite{Schrodinger1915, folks1978inverse}. Hence, we have $E[H_n]=1/\mu_n$ and $E[H_n^2]=\sigma_n^2/\mu_n^3+1/\mu_n^2$. We now have
\begin{align}
&\overline{AoI}_n=\frac{E[B_n^2]}{2E[B_n]}+\frac{1}{\lambda_n}-\frac{1}{2}\nonumber\\
    \approx&\frac{E[H_n^2]}{2E[H_n]}+\frac{1}{\lambda_n}-\frac{1}{2}=\frac{1}{2}(\frac{\sigma_n^2}{\mu_n^2}+\frac{1}{\mu_n}) +\frac{1}{\lambda_n}-\frac{1}{2}. \label{eq:AoI approx}
\end{align}

\subsection{The Second-Order Model of Live Video Streaming}\label{subsec:live_video_streaming}

Some clients watch live video streams that have stringent deadline delivery requirements from the centralized server to the clients. Live video streams generate video frames at a constant rate and these frames should be played by the end users after a fixed delay. Specifically, we say that the stream of client $n$ generates one packet every $w_n$ slots. Each packet has a strict relative deadline of $\ell_n \cdot w_n$ slots, that is, a packet generated at time $t$ needs to be delivered by time $t + \ell_n \cdot w_n$.
Packets that cannot be delivered by their deadlines are considered to be expired and are dropped from the system. When the AP schedules client $n$ for transmission, it transmits the packet with the earliest deadline among all available packets for client $n$ so as to minimize packet drops. If the AP schedules client $n$ for transmission but there are no available packets, i.e. all packets for client $n$ are either delivered or dropped, then the AP transmits a dummy packet that contains no information.

In the context of video streaming, each packet drop causes an outage in the video playback. Hence, when $\ell_n$ is given and fixed, we measure the performance of a live video streaming client $n$ by its outage rate, defined as the average number of packet drops per time slot. We use $\overline{Out}_n$ to denote the outage rate of clients $n= I+1, I+2,\hdots, I+J$. Hsieh and Hou \cite{hsieh20} has shown that, when $\mu_n = 1 / w_n$,
\begin{equation} \label{eq:outage_approx}
    \overline{Out}_n \approx \frac{\sigma_n^2}{2 \ell_n}.
\end{equation}

We also note that the timely-throughput, i.e. the throughput of timely packet deliveries, of client $n$ is $1/w_n - \overline{Out}_n$.

\subsection{Model Validation} \label{subsec:model_validation}

We now verify whether the second-order model provides a good approximation of AoI and timely-throughput over Gilbert-Elliott channels. We consider a system with only one client. The centralized server schedules the client for transmission whenever the client has an ON channel. Hence, we have $\mu_1=m_{\{1\}}$ and $\sigma_1^2=v_{\{1\}}^2$. {\color{blue}We will further validate the model in multi-user systems in Section~\ref{sec:simulation}.}

We first evaluate the case when the sole client is a real-time sensing client.
Given, $p_1$, $q_1$, and $\lambda_1$, we can combine \eqref{eq:GE mean}, \eqref{eq:GE variance}, and \eqref{eq:AoI approx} to obtain a theoretical approximation of the AoI. We note that (\ref{eq:GE variance}) involves a summation of infinite terms $\sum_{k=1}^\infty (G_1(k)-\frac{p_1}{p_1+q_1})$. {\color{blue}Since $G_1(k)$ converges to $\frac{p_1}{p_1+q_1}$ exponentially fast, we replace this term with $\sum_{k=1}^{K} (G_1(k)-\frac{p_1}{p_1+q_1})$ when calculating $v_{\{1\}}^2$ and evaluate the cases when $K=100$ and when $K=1000$.}
For each $(p_1, q_1, \lambda_1)$, we obtain the empirical AoI by simulating the system for 1000 runs, where each run contains 50,000 time slots.

The results averaged over a 1000 runs are shown in Fig. \ref{fig:single_client_validation} for four channel and packet generation probability settings. It can be observed that the theoretical AoI is very close to the empirical AoI in almost all cases. {\color{blue}The only point when there is a considerable difference betwen the theoretical AoI and the empirical AoI is when $p_1=q_1=0.01$ and $K = 100$. This is because, when $p_1=q_1=0.01$, we have $G_1(101)-\frac{p_1}{p_1+q_1}=0.066$. Thus, setting $K=100$ results in a non-negligible error in calculating $v_{\{1\}}^2$. On the other hand, we have $G_1(1001)-\frac{p_1}{p_1+q_1}<10^{-9}$. Thus, setting $K=1000$ makes the theoretical AoI and the empirical AoI almost identical. We recommend choosing a $K$ with $G_1(K)-\frac{p_1}{p_1+q_1}<0.001$ when calculating the temporal variance.}

\begin{figure}[ht]

\begin{center} 
\subfigure[$q=0.01$. $\lambda = 1$.]{\includegraphics[width=1.7in, height=1.37in]{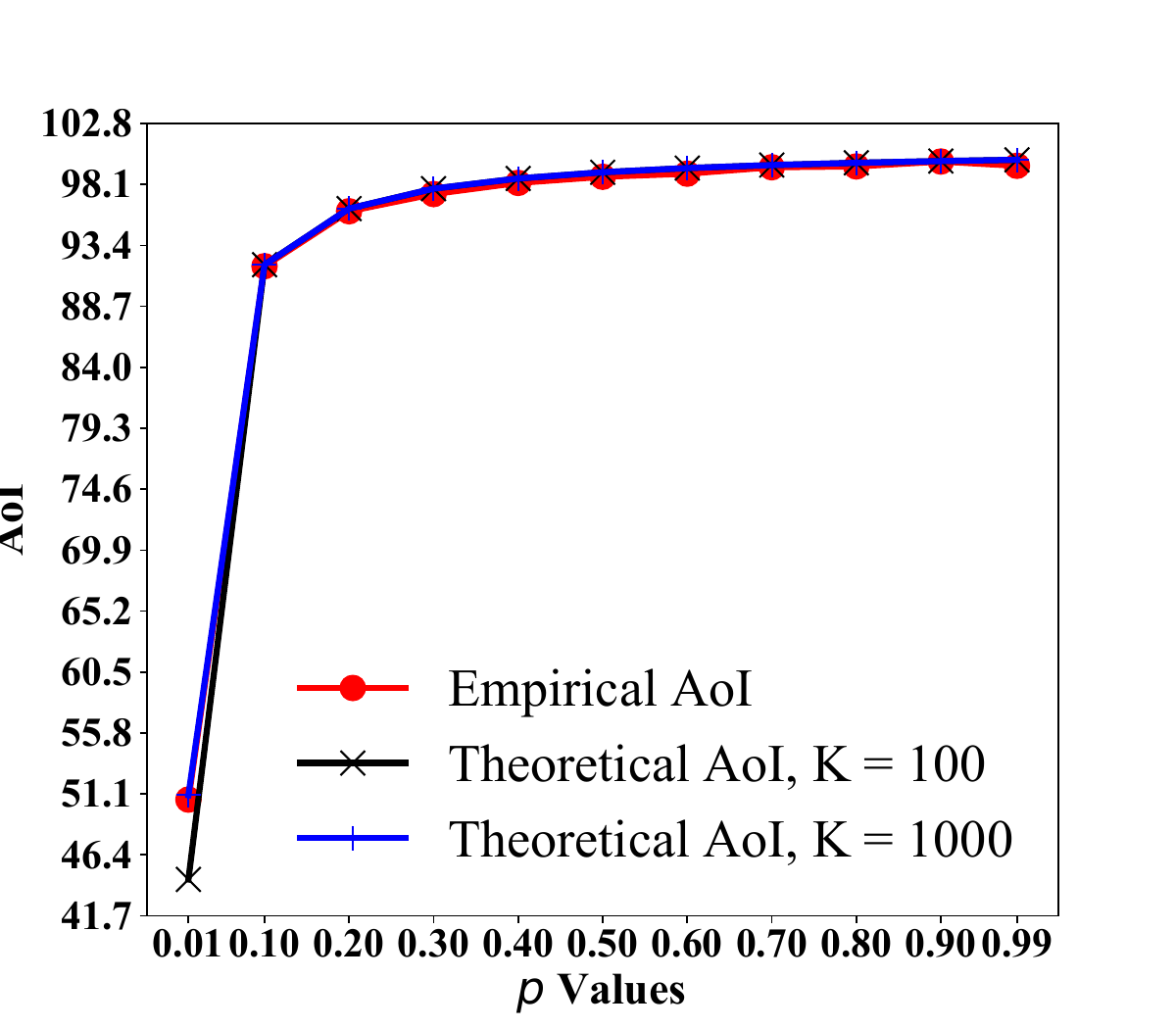}}
\subfigure[$q=0.01$. $\lambda = 0.1$.]{\includegraphics[width=1.7in, height=1.37in]{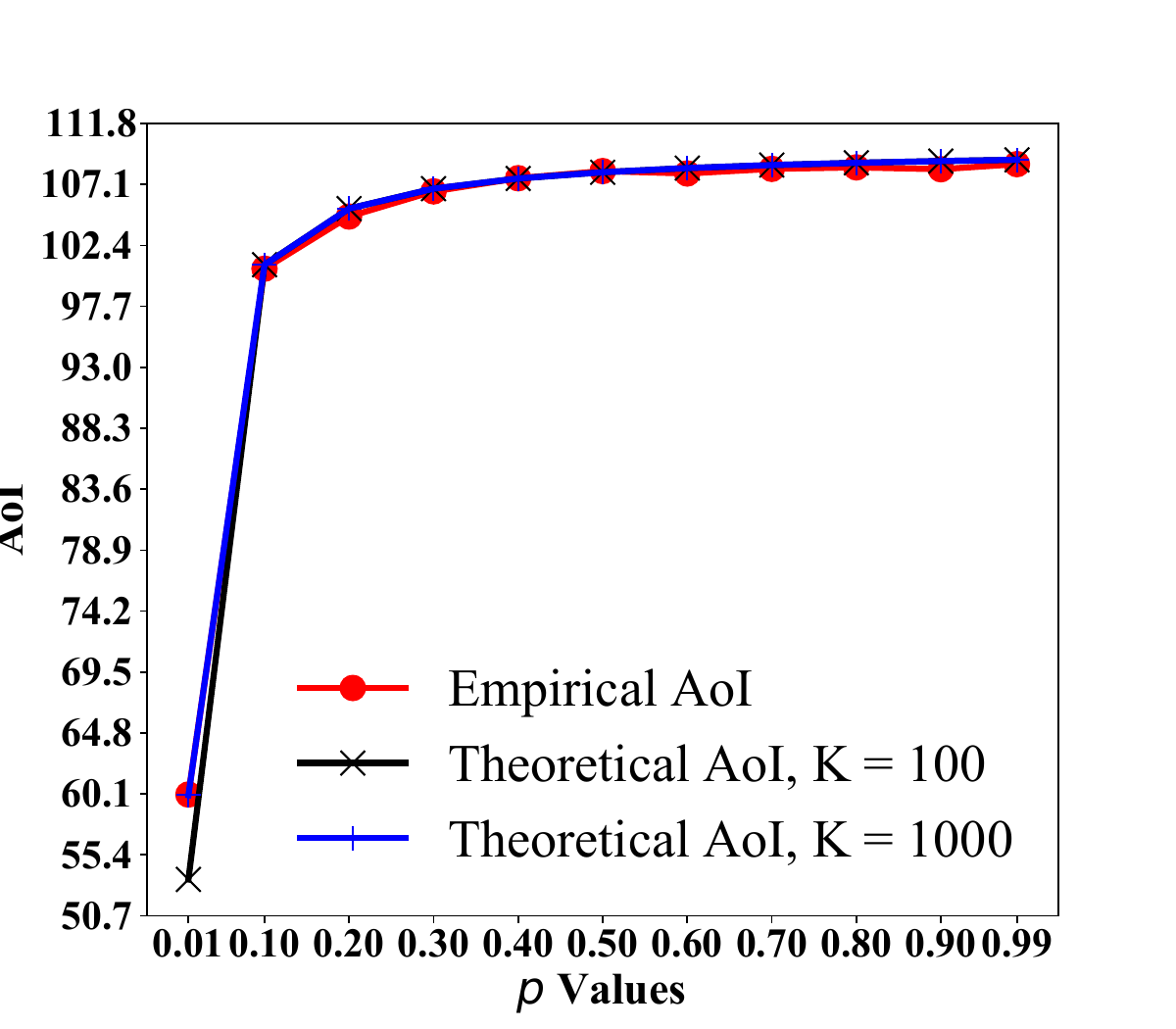}}
\subfigure[$q=0.2$. $\lambda = 1$.]{\includegraphics[width=1.7in, height=1.37in]{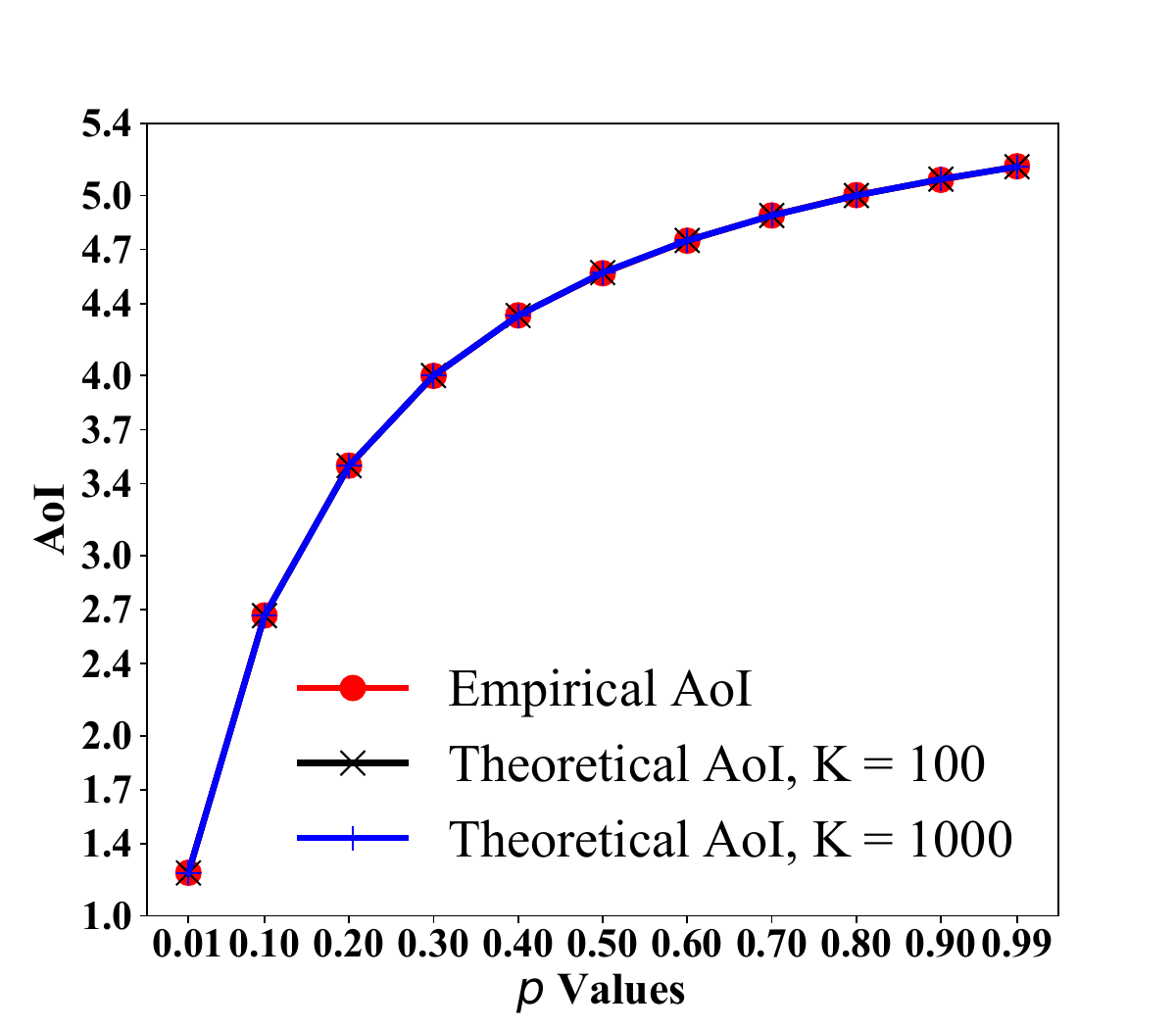}}
\subfigure[$q=0.2$. $\lambda = 0.1$.]{\includegraphics[width=1.7in, height=1.37in]{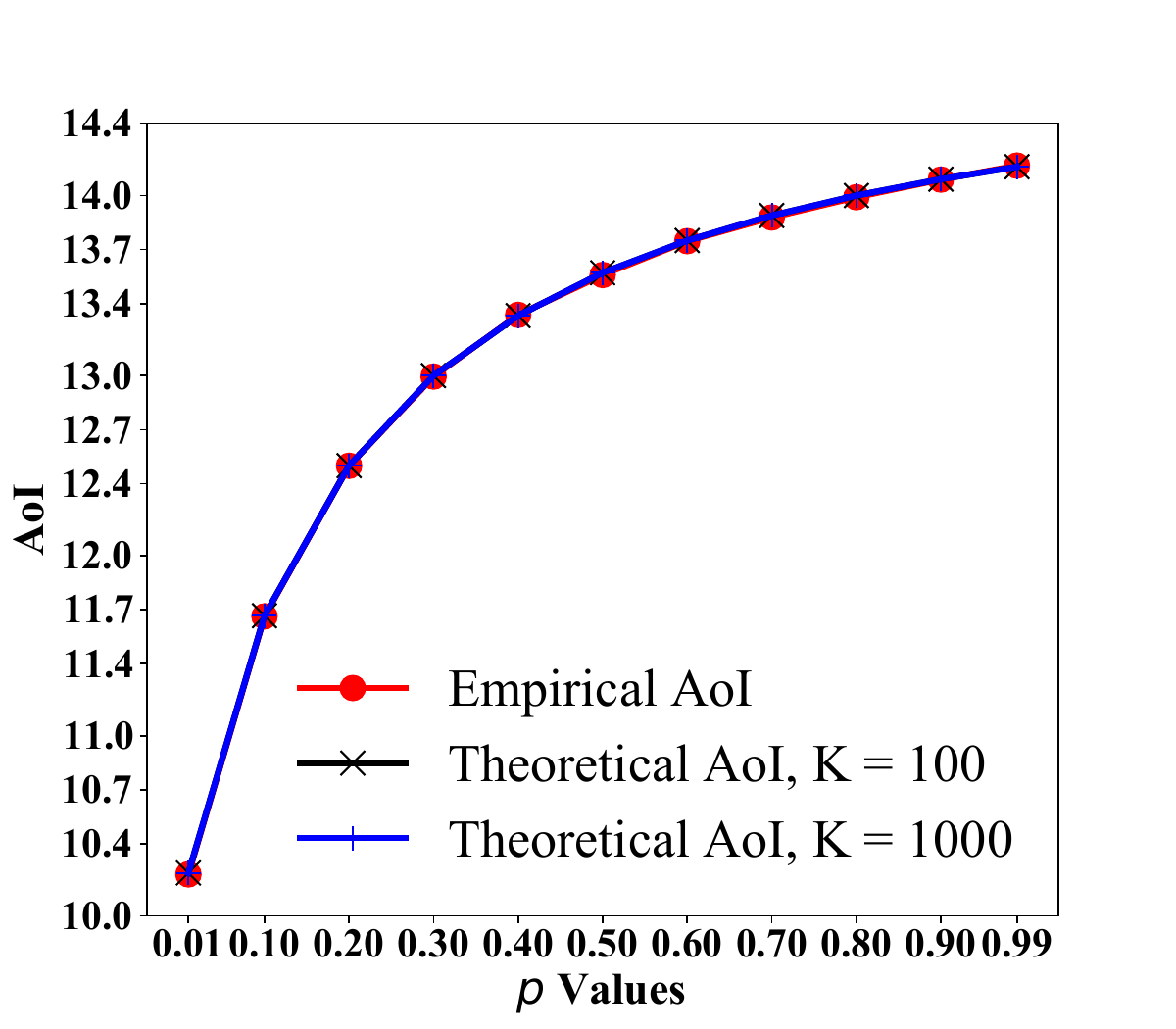}}
\subfigure[$q=0.8$. $\lambda = 1$.]{\includegraphics[width=1.7in, height=1.37in]{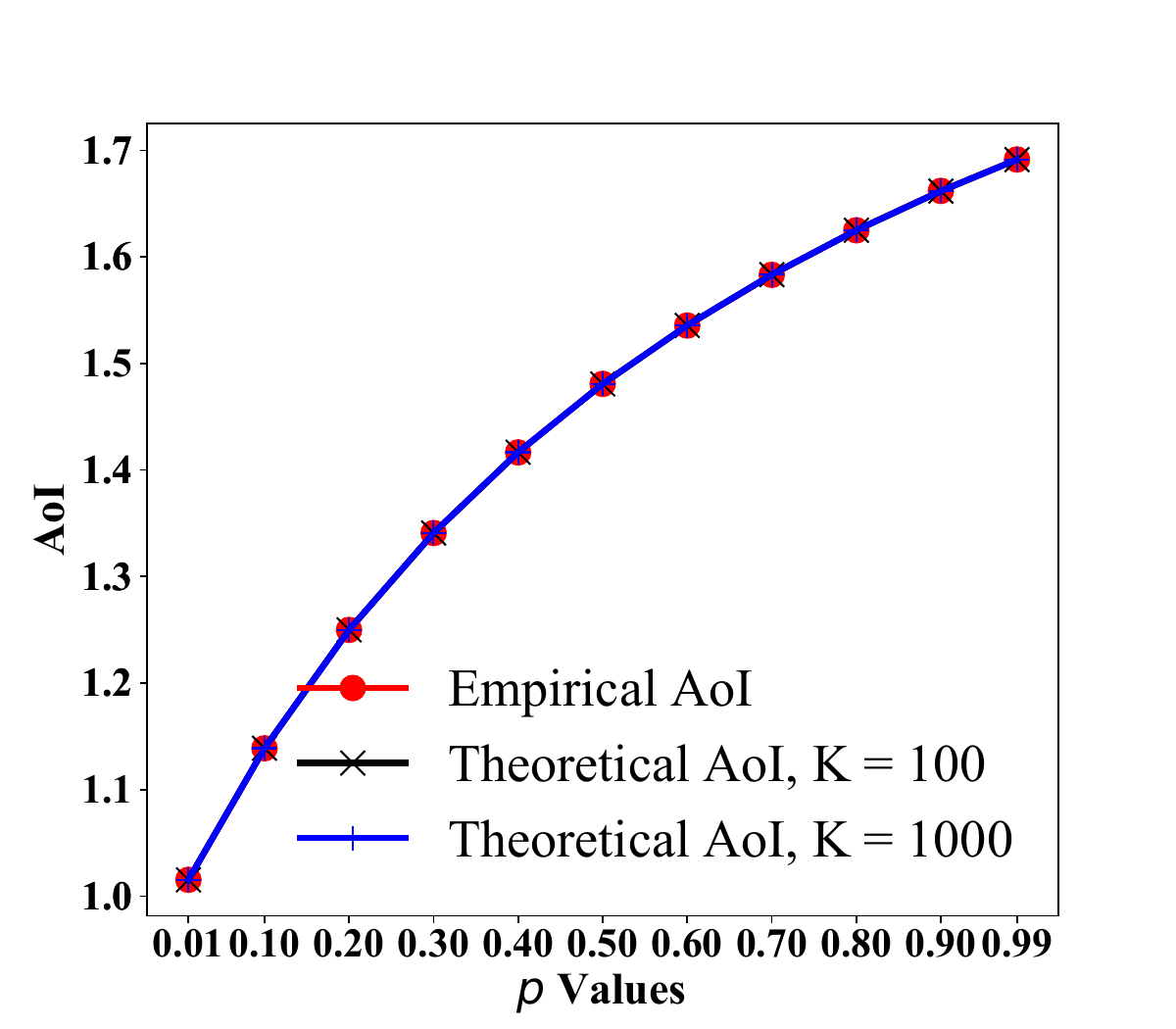}}
\subfigure[$q=0.8$. $\lambda = 0.1$.]{\includegraphics[width=1.7in, 
height=1.37in]{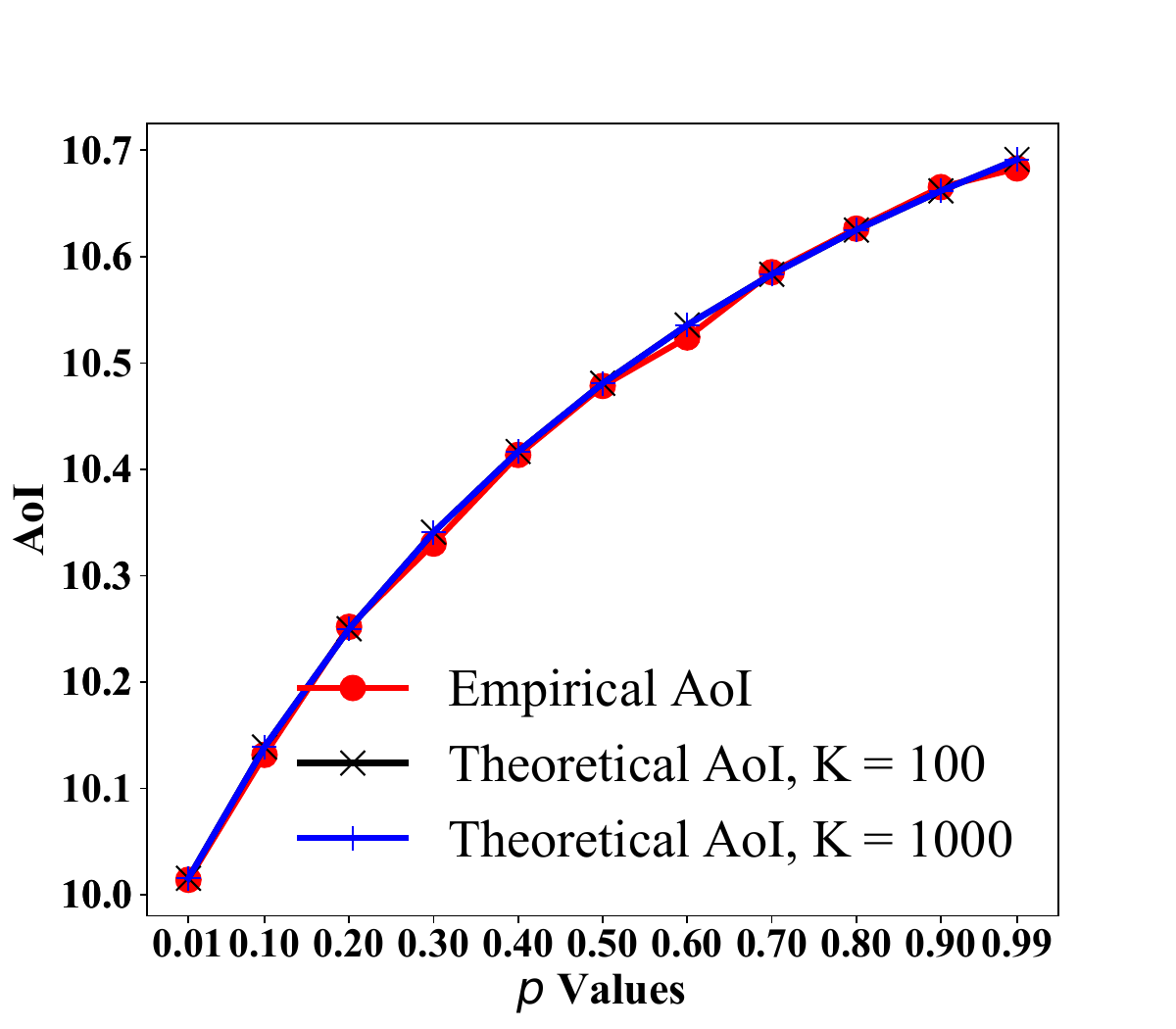}}
\end{center}
\caption{Model validation for single real-time sensing client.}
\label{fig:single_client_validation}
\end{figure}

Next, we evaluate the case when the sole client is a live video streaming client. Given $w_1$ and $\ell_1$, we first choose appropriate $p_1$ and $q_1$ values so that the resulting $m_{1} = 1/w_1$. We then combine \eqref{eq:GE mean}, \eqref{eq:GE variance}, and \eqref{eq:outage_approx} to obtain theoretical approximation of the outage rate $\overline{Out}$. We also obtain the empirical outage rate by simulating the system for $1000$ runs, each containing $500,000$ time slots.

Simulation results averaged over a 1000 runs are shown in Fig.~\ref{fig:timely_throughput_single_client_validation} for four channel and period settings. It is shown that the empirical and theoretical outage rate become virtually the same when the delay $\ell_1$ increases under all considered channel and period settings.

\begin{figure}[ht]

\begin{center} 
\subfigure[$p=0.3$. $q=0.3$. $w = 2$.]{\includegraphics[width=1.7in, height=1.37in]{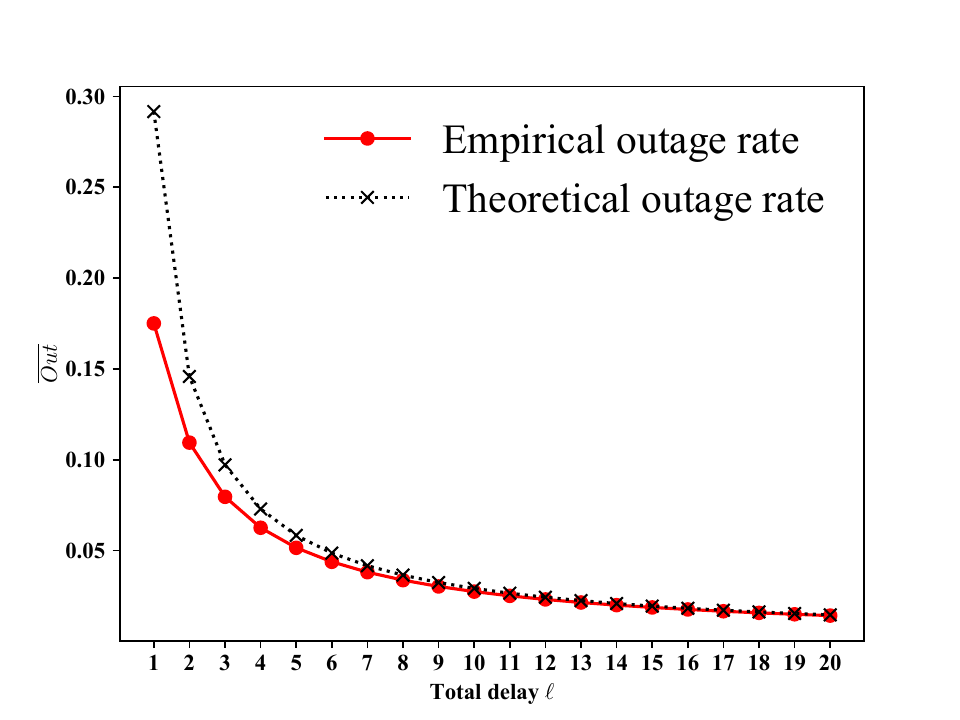}}
\subfigure[$p=0.6$. $q=0.15$. $w = 5$.]{\includegraphics[width=1.7in, height=1.37in]{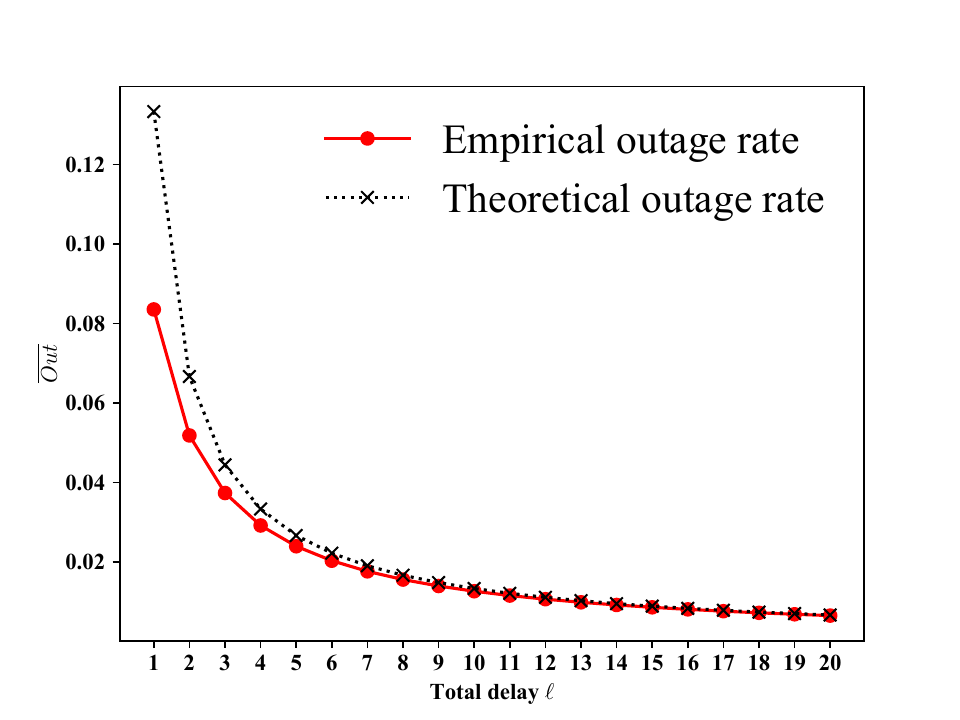}}
\subfigure[$p=0.9$. $q=0.3$. $w = 4$.]{\includegraphics[width=1.7in, height=1.37in]{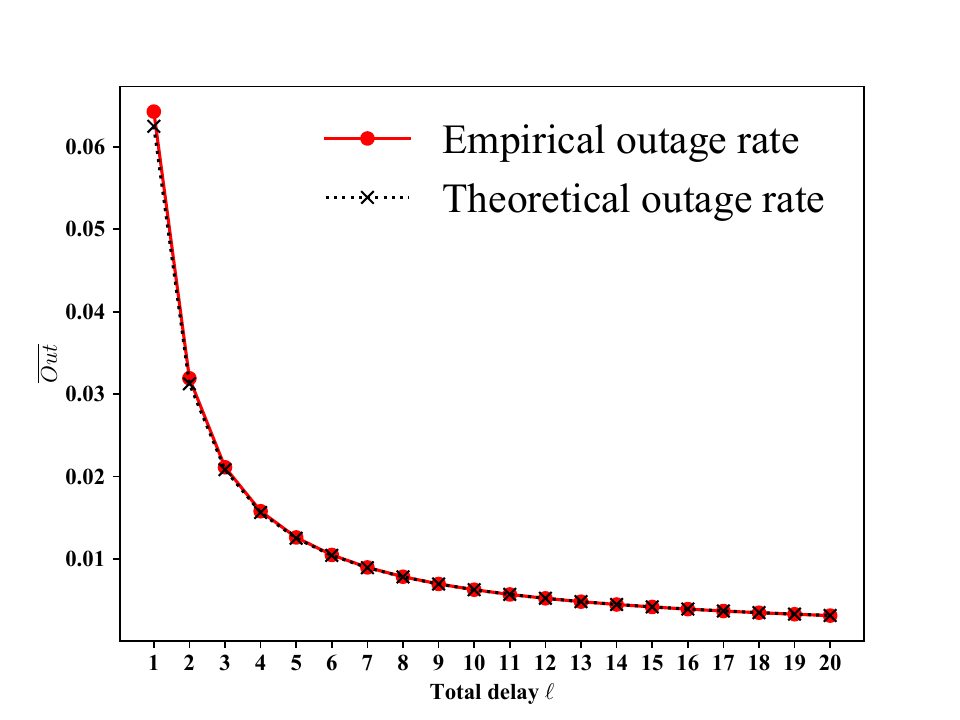}}
\subfigure[$p=0.98$. $q=0.14$. $w = 8$.]{\includegraphics[width=1.7in, 
height=1.37in]{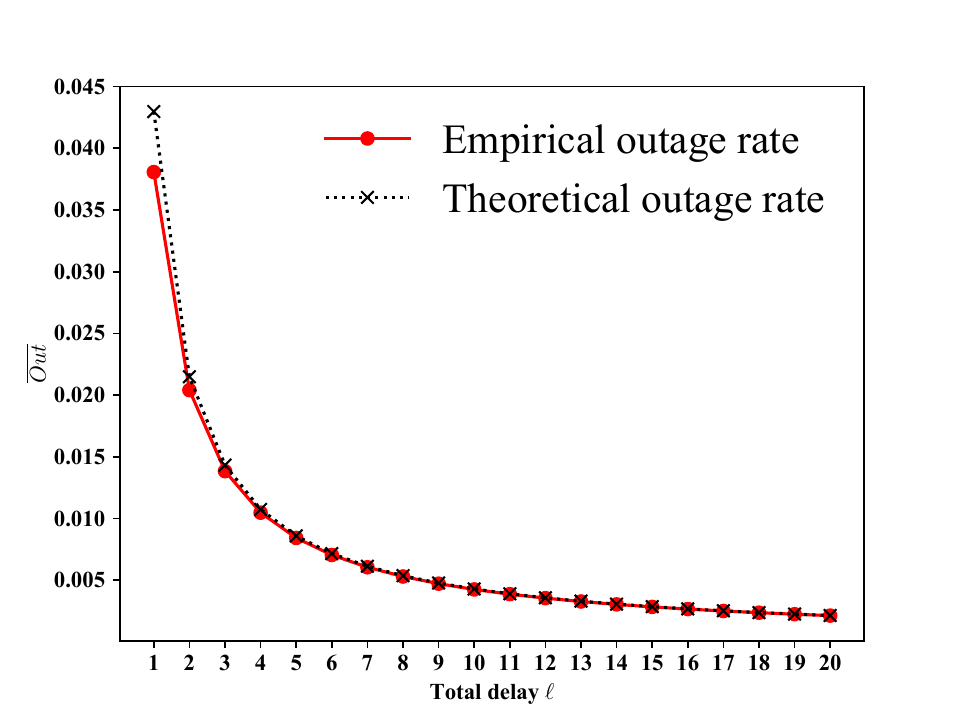}}
\end{center}
\caption{Model validation for single live video streaming client.}
\label{fig:timely_throughput_single_client_validation}
\end{figure}

\subsection{Problem Formulation}

We consider a system with $I$ real-time sensing clients, numbered as $n=1,2, \dots, I$, and $J$ live video streaming clients, numbered as $n=I+1, I+2, \dots, I+J$, with the total number of clients $N = I + J$. Real-time sensing clients want to have a low $\overline{AoI}_n$ while live video streaming clients want to maximize their timely-throughput, or equivalently want to have both a low $\overline{Out}_n$ and a low delay $\ell_n$.
Hence, we aim to minimize the network objective function for $I$ real-time sensing and $J$ live video streaming clients

\begin{align}
&\sum_{n=1}^I \alpha_n \cdot \overline{AoI}_n+\sum_{n=I+1}^{I+J}\beta_n \overline{Out}_n+\gamma_n \ell_n^2 \\
&\approx\sum_{n=1}^I \alpha_n \Big(\frac{1}{2}(\frac{\sigma_n^2}{\mu_n^2}+\frac{1}{\mu_n}) +\frac{1}{\lambda_n}-\frac{1}{2} \Big) + \nonumber \\
&\sum_{n=I+1}^{I+J} \beta_n \big(\frac{\sigma_n^2}{2 \ell_n} \big) + \gamma_n \ell_n^2,
\end{align}

with the values $\alpha_n, \beta_n,$ and $\gamma_n$ being weights chosen for each client. Although the system has two types of clients with very different behaviors and preferences, our framework of second-order optimization allows us to characterize the optimization problem as one that only involves the means and temporal variances of the delivery process of each client.

{\color{blue}A previous work \cite{hsieh20} has studied the above optimization problem when there are only live video streaming clients. However, it requires $\ell_n/ \hspace{0.5em}\overline{Out}_n = \ell_u/ \hspace{0.5em} \overline{Out}_u$ for any $n \neq u$. We note that our formulation does not require this condition. Hence, even for the special case when there are only live video streaming clients, our formulation generalizes the result of \cite{hsieh20}.}

\section{An Outer Bound of the Second-Order Capacity Region} \label{sec: capacity region}

In this section, we derive a necessary condition for the second-order delivery model $\{(\mu_n,\sigma_n^2)|1\leq n\leq N\}$ to be in the second-order capacity region.

\begin{theorem}\label{theorem:outer bound}
Given a second-order channel model $\{(m_S, v_S^2)|S\subseteq\{1,2,\dots,N\}\}$, if a second-order delivery model $\{(\mu_n,\sigma_n^2)|1\leq n\leq N\}$ is in the second-order capacity region, then the following needs to hold:
\begin{align}
    &\sum_{n\in S}\mu_n\leq m_S, \forall S\subseteq\{1,2,\dots,N\},\label{eq:necessary:mean}\\
    &\sum_{n=1}^N\mu_n=m_{\{1,2,\dots,N\}},\label{eq:necessary:total mean}\\
    &\sum_{n=1}^{N}\sqrt{\sigma_n^2}\geq \sqrt{v_{\{1,2,\dots,N\}}^2},\label{eq:necessary:variance}\\
    &\mu_n\geq 0, \forall n.\label{eq:necessary:non-negative}
\end{align}
\end{theorem}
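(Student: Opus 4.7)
\medskip

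\noindent\textbf{Proof proposal.} The plan is to exploit two observations about any admissible scheduling policy. First, in every time slot at most one client receives a packet, and that client must have an ON channel; hence for every subset $S$,
\[
\sum_{n\in S} Z_n(t)\;\le\;X_S(t),
\]
with the strong equality $\sum_{n=1}^{N} Z_n(t) = X_{\{1,\dots,N\}}(t)$ when $S$ is the full set, because by assumption the AP serves some client whenever at least one channel is ON. Second, $Z_n(t)\in\{0,1\}$, so $\mu_n\ge 0$ is immediate, giving (\ref{eq:necessary:non-negative}).

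For (\ref{eq:necessary:mean}) and (\ref{eq:necessary:total mean}), I would sum the pointwise (in)equalities over $t=1,\dots,T$, divide by $T$, and take $T\to\infty$. By the definitions of $\mu_n$ and $m_S$, the subset inequality yields $\sum_{n\in S}\mu_n\le m_S$, and the full-set identity yields the equality $\sum_{n=1}^{N}\mu_n = m_{\{1,\dots,N\}}$.

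The substantive step is (\ref{eq:necessary:variance}). Using the full-set identity $\sum_{n=1}^{N} Z_n(t) = X_{\{1,\dots,N\}}(t)$ together with $\sum_n \mu_n = m_{\{1,\dots,N\}}$, I would write
\[
\sum_{n=1}^{N}\frac{\sum_{t=1}^{T} Z_n(t)-T\mu_n}{\sqrt{T}}
\;=\;
\frac{\sum_{t=1}^{T} X_{\{1,\dots,N\}}(t)-T\, m_{\{1,\dots,N\}}}{\sqrt{T}}.
\]
Then I would apply Minkowski's inequality (the triangle inequality for the $L^2$ norm) to the left-hand side:
\[
\left(E\!\left[\Big(\sum_{n=1}^{N} U_n^T\Big)^{\!2}\right]\right)^{\!1/2}
\;\le\;
\sum_{n=1}^{N}\bigl(E[(U_n^T)^2]\bigr)^{1/2},
\]
where $U_n^T:=(\sum_{t=1}^{T} Z_n(t)-T\mu_n)/\sqrt{T}$. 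Passing to the limit $T\to\infty$, the left-hand side tends to $\sqrt{v_{\{1,\dots,N\}}^2}$ and each summand on the right tends to $\sqrt{\sigma_n^2}$ (using the $\le\sigma_n^2$ bound in the capacity-region definition on the right), yielding (\ref{eq:necessary:variance}).

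The main obstacle I anticipate is the limiting step in (\ref{eq:necessary:variance}): one must justify that the $L^2$-norm limit on the left really equals $\sqrt{v_{\{1,\dots,N\}}^2}$ and that the right-hand bound persists in the limit despite the capacity-region definition only providing a $\le\sigma_n^2$ (rather than equality) upper bound on the asymptotic variance. This should be handled by noting that Minkowski's inequality is valid for every finite $T$, so one can take $\limsup$ on the right and $\lim$ on the left without any subtlety beyond applying the definitions.
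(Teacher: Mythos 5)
Your proposal is correct and follows essentially the same route as the paper: the mean constraints come from the pointwise relations $\sum_{n\in S}Z_n(t)\le X_S(t)$ and $\sum_{n=1}^N Z_n(t)=X_{\{1,\dots,N\}}(t)$, and the variance constraint is the $L^2$ triangle inequality applied to $\sum_n \hat Z_n = \hat X_{\{1,\dots,N\}}$ --- the paper just writes Minkowski out explicitly by squaring $\sum_n\sqrt{\sigma_n^2}$ and bounding the cross terms with Cauchy--Schwarz. Your remark that the right-hand side only needs the one-sided bound $E[\hat Z_n^2]\le\sigma_n^2$ from the capacity-region definition is a correct (and slightly more careful) reading than the paper's own equality $\sigma_n^2=E[\hat Z_n^2]$.
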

\begin{proof}
We first establish (\ref{eq:necessary:mean}). The AP can transmit a packet to a client $n$ at time $t$ only if the client has an ON channel, that is, $X_n(t)=1$. Moreover, the AP can transmit to at most one client in each time slot. Hence, we have $\sum_{n\in S}Z_n(t)\leq X_S(t)$ under any scheduling policy. This gives us
\begin{align}
    &\sum_{n\in S}\mu_n=\lim_{T\rightarrow\infty}\frac{\sum_{n\in S}\sum_{t=1}^TZ_n(t)}{T}\nonumber\\
    \leq&\lim_{T\rightarrow\infty}\frac{\sum_{t=1}^TX_S(t)}{T}=m_S, \forall S\subseteq\{1,2,\dots, N\}.
\end{align}

We can similarly establish (\ref{eq:necessary:total mean}) by noting that $\sum_{n=1}^NZ_n(t)= X_{\{1,2,\dots,N\}}(t)$, since the AP always transmits one packet as long as at least one client has an ON channel.

Finally, we establish (\ref{eq:necessary:variance}). Let $\hat{X}_S$ be the random variable $\lim_{T\rightarrow\infty}\frac{\sum_{t=1}^TX_S(t)-Tm_S}{\sqrt{T}}$ and $\hat{Z}_n$ be the random variable $\lim_{T\rightarrow\infty}\frac{\sum_{t=1}^TZ_n(t)-T\mu_n}{\sqrt{T}}$. Since $\sum_{n=1}^NZ_n(t)= X_{\{1,2,\dots,N\}}(t)$ and (\ref{eq:necessary:total mean}), {\color{blue}$\sum_{n=1}^N\hat{Z}_n$ and $\hat{X}_{\{1,2,\dots,N\}}$ have the same distribution.}
 We then have
\begin{align}
    &(\sum_{n=1}^N\sqrt{\sigma_n^2})^2=(\sum_{n=1}^N\sqrt{E[\hat{Z}_n^2]})^2\nonumber\\
    =&\sum_{n=1}^NE[\hat{Z}_n^2]+2\sum_{n\neq u}\sqrt{E[\hat{Z}_n^2]E[\hat{Z}_u^2]}\nonumber\\
    \geq&\sum_{n=1}^NE[\hat{Z}_n^2]+2\sum_{n\neq u}E[\hat{Z}_n\hat{Z}_u]\quad(\mbox{Cauchy-Schwarz inequality})\nonumber\\
    =&E[(\sum_{n=1}^N\hat{Z}_n)^2]=E[\hat{X}_{\{1,2,\dots,N\}}^2]=v_{\{1,2,\dots,N\}}^2.
\end{align}
This completes the proof.
\end{proof}

\section{Scheduling Policy with Tight Inner Bound}\label{sec: scheduling policy}

In this section, we derive a sufficient condition for the second-order delivery model $\{(\mu_n,\sigma_n^2)|1\leq n\leq N\}$ to be in the second-order capacity region. We also propose a simple scheduling policy that delivers the desirable second-order delivery models as long as they satisfy the sufficient condition. We state the sufficient condition as follows:

\begin{theorem} \label{theorem:inner bound}
Given a second-order channel model $\{(m_S, v_S^2)|S\subseteq\{1,2,\dots,N\}\}$, a second-order delivery model $\{(\mu_n,\sigma_n^2)|1\leq n\leq N\}$ is in the second-order capacity region if

\begin{align}
    &\sum_{n\in S}\mu_n< m_S, \forall S\subsetneq\{1,2,\dots,N\},\label{eq:sufficient:mean}\\
    &\sum_{n=1}^N\mu_n=m_{\{1,2,\dots,N\}},\label{eq:sufficient:total mean}\\
    &\sum_{n=1}^{N}\sqrt{\sigma_n^2}\geq \sqrt{v_{\{1,2,\dots,N\}}^2},\label{eq:sufficient:variance}\\
    &\mu_n\geq 0, \sigma_n^2>0 \forall n.\label{eq:sufficient:non-negative}
\end{align}
$
\Box
$
\end{theorem}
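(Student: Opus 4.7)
The plan is to construct an explicit scheduling policy and show that the resulting delivery process attains any target $(\mu_n,\sigma_n^2)$ satisfying (\ref{eq:sufficient:mean})--(\ref{eq:sufficient:non-negative}). First I would reduce to the Cauchy--Schwarz equality case $\sum_n\sqrt{\sigma_n^2}=\sqrt{v_{\{1,\dots,N\}}^2}$: since the capacity region only requires the attained variance to be at most $\sigma_n^2$, shrinking each target $\sigma_n^2$ to saturate (\ref{eq:sufficient:variance}) is without loss of generality. Define the normalized weights $r_n:=\sqrt{\sigma_n^2}/\sqrt{v_{\{1,\dots,N\}}^2}$, so that $\sum_n r_n=1$. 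The task becomes the simultaneous matching of the means $\mu_n$ and of the asymptotic correlations $\hat Z_n\approx r_n\hat X_{\{1,\dots,N\}}$.

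For the policy itself I would maintain a real-valued debt $D_n(t+1)=D_n(t)+\mu_n-Z_n(t)$, initialized at $0$. At every slot with $X_{\{1,\dots,N\}}(t)=1$, the AP serves the ON client that maximizes $D_n(t)$, with ties (and possibly a small exploration layer) resolved by a lottery drawn from the distribution $\{r_n\}$. The mean matching follows from a two-step argument. Conditions (\ref{eq:sufficient:mean})--(\ref{eq:sufficient:total mean}) produce, through a Hall-type / max-flow feasibility argument, a stationary randomized reference policy whose marginal service rates are exactly $\mu_n$. A Foster--Lyapunov drift on $\tfrac12\sum_n D_n(t)^2$ then shows the proposed max-debt rule is at least as stable as this reference, yielding positive recurrence of the debt vector with bounded moments, hence $T^{-1}\sum_{t=1}^T Z_n(t)\to\mu_n$ almost surely and $D_n(T)=o(\sqrt T)$ in probability.

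The variance step is where the subtlety lies. Telescoping the debt update gives $\sum_{t=1}^T Z_n(t)-T\mu_n=D_n(0)-D_n(T)$, and summing over $n$ together with $\sum_n Z_n(t)=X_{\{1,\dots,N\}}(t)$ shows $\sum_n D_n(T)=-\sum_{t=1}^T(X_{\{1,\dots,N\}}(t)-m_{\{1,\dots,N\}})+O(1)$. To finish, one must establish the sharper coordinate-wise split
\[
D_n(T)=-\,r_n\sum_{t=1}^T\bigl(X_{\{1,\dots,N\}}(t)-m_{\{1,\dots,N\}}\bigr)+o(\sqrt T).
\]
Dividing by $\sqrt T$ and invoking the Markov central limit theorem then yields $\hat Z_n=r_n\hat X_{\{1,\dots,N\}}$ in distribution, so $\sigma_n^2=r_n^2\,v_{\{1,\dots,N\}}^2$, exactly saturating the Cauchy--Schwarz equality case of Theorem~\ref{theorem:outer bound}.

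The main obstacle is justifying this coordinate-wise split rigorously. Max-weight stability on its own only gives $D_n(T)=O(\sqrt T)$ in distribution and does not by itself allocate the aggregate fluctuation $\hat X_{\{1,\dots,N\}}$ among the $N$ clients in the prescribed proportions. The $r_n$-weighted lottery is what should pin this down, but turning that intuition into a proof will require a joint functional central limit argument for the Markovian (channel, debt) process, or, alternatively, a coupling with an idealized ``oblivious'' policy that serves client $n$ independently with probability $r_n$ whenever $X_{\{1,\dots,N\}}(t)=1$: one would show that, thanks to strict slack in (\ref{eq:sufficient:mean}), the actual policy differs from the ideal by only $o(\sqrt T)$ reassigned slots, which are absorbed by the positive-recurrent debt and vanish after CLT scaling.
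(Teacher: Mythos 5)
Your overall target is the right one --- you correctly identify that the whole theorem reduces to forcing the coordinate-wise split $\hat Z_n = r_n\hat X_{\{1,\dots,N\}}$ with $r_n=\sqrt{\sigma_n^2}/\sum_u\sqrt{\sigma_u^2}$, and your reduction to the saturated case of \eqref{eq:sufficient:variance} is harmless. But the policy you propose cannot deliver that split, and this is not a technical loose end but a structural error. An \emph{unweighted} max-debt rule drives the raw debts $D_n(t)$ toward each other, so the positive-recurrence argument you sketch would pin $D_n(T)=\frac{1}{N}\sum_u D_u(T)+O_p(1)$, i.e., it allocates the aggregate fluctuation \emph{equally}, giving each client asymptotic variance $v^2_{\{1,\dots,N\}}/N^2$. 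That violates the target whenever some $\sqrt{\sigma_n^2}<\sqrt{v^2_{\{1,\dots,N\}}}/N$, which is perfectly consistent with \eqref{eq:sufficient:variance} (e.g.\ $N=2$, $\sqrt{\sigma_1^2}=0.1\sqrt{v^2}$, $\sqrt{\sigma_2^2}=0.9\sqrt{v^2}$). Neither of your two repair mechanisms closes this gap: tie-breaking by an $r_n$-lottery is a zero-measure perturbation that does not alter the first-order drift of the debt vector, and the ``oblivious'' coupling with a policy that serves client $n$ w.p.\ $r_n$ whenever the channel set is nonempty fails for a different reason --- independent randomization \emph{adds} temporal variance (the delivered count has variance $\approx r_n(1-r_n)m\,T + r_n^2 v^2 T$, strictly above $r_n^2v^2T=\sigma_n^2T$), which is exactly why stationary randomized policies are second-order suboptimal here.

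The missing idea is to put the weight into the scheduling index itself rather than into the tie-break: serve the ON client maximizing $d_n(t-1)/\sqrt{\sigma_n^2}$ (the paper's VWD policy). Then the natural Lyapunov function is $\frac12\sum_n\sqrt{\sigma_n^2}\bigl(d_n(t)/\sqrt{\sigma_n^2}-D(t)\bigr)^2$ with $D(t)=\sum_n d_n(t)/\sum_n\sqrt{\sigma_n^2}$, and a multi-step drift comparison (needed because the channels are Markov, not i.i.d.; the paper averages over a window $\mathbb{T}$ and applies the strict slack \eqref{eq:sufficient:mean} to the prefix sets of the deficit ordering) shows the \emph{normalized relative} deficits $d_n/\sqrt{\sigma_n^2}-D$ are positive recurrent. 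That, combined with $\sum_n d_n(T)=Tm_{\{1,\dots,N\}}-\sum_t X_{\{1,\dots,N\}}(t)$, is precisely the rigorous version of your coordinate-wise split: $d_n(T)=\sqrt{\sigma_n^2}\,D(T)+o(\sqrt{T})$, hence $E[\hat Z_n^2]=\sigma_n^2\,v^2_{\{1,\dots,N\}}/(\sum_u\sqrt{\sigma_u^2})^2\le\sigma_n^2$. Your mean-matching step (reference randomized policy plus quadratic drift) is fine in spirit, but as written your proposal stops exactly at the point where the theorem's actual content begins.
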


Before proving Theorem~\ref{theorem:inner bound}, we first discuss its implications. Comparing the conditions in Theorems~\ref{theorem:outer bound} and \ref{theorem:inner bound}, we note that the only difference is that the sufficient condition requires strict inequality for (\ref{eq:necessary:mean}) for all proper subsets. Hence, the sufficient condition describes an inner bound that is almost tight except on some boundaries.

We prove Theorem~\ref{theorem:inner bound} by proposing a scheduling that achieves every point in the inner bound. Given $\{(\mu_n,\sigma_n^2)|1\leq n\leq N\}$, define the \emph{deficit} of a client $n$ at time $t$ as $d_n(t) = t\mu_n-\sum_{\tau=1}^tZ_n(\tau)$. In each time slot $t$, the AP chooses the client with the largest $d_n(t-1)/\sqrt{\sigma_n^2}$ among those with ON channels and transmits a packet to the chosen client. We call this scheduling policy the \emph{variance-weighted-deficit} (VWD) policy. 

We now analyze the performance of the VWD policy. Let $D(t):=\sum_{n=1}^Nd_n(t)/\sum_{n=1}^N\sqrt{\sigma_n^2}$. We then have 
\begin{align}
    &\Delta d_n(t):=d_n(t)-d_n(t-1)=\mu_n-Z_n(t),\\
    &\Delta D(t):=D(t)-D(t-1)\nonumber\\
    =&\frac{\sum_{n=1}^N\mu_n-\sum_{n=1}^NZ_n(t)}{\sum_{n=1}^N\sqrt{\sigma_n^2}}\nonumber \\
    =&\frac{m_{\{1,2,\dots,N\}}-X_{\{1,2,\dots,N\}}(t)}{\sum_{n=1}^N\sqrt{\sigma_n^2}}.
\end{align}
Consider the Lyapunov function $L(t):=\frac{1}{2}\sum_{n=1}^N\sqrt{\sigma_n^2}\Big(\frac{d_n(t)}{\sqrt{\sigma_n^2}}-D(t)\Big)^2$. Let $H^t$ be the system history, {\color{blue}that is, all events of channels, packet generations/deliveries, and scheduling decisions,} up to time $t$. We can derive the expected one-step Lyapunov drift as
\begin{align}
    &\Delta (L(t)) := E[L(t) - L(t-1)|H^{t-1}]\notag \\ 
    =& E[\frac{1}{2} \sum_{n=1}^N \sqrt{\sigma_n^2}\Big(\frac{d_n(t)}{\sqrt{\sigma_n^2}}-D(t))\Big)^2\nonumber\\
    &-\frac{1}{2} \sum_{n=1}^N\sqrt{\sigma_n^2}\Big(\frac{d_n(t-1)}{\sqrt{\sigma_n^2}}-D(t-1)\Big)^2|H^{t-1}]\nonumber\\
    =&E[\sum_{n=1}^N\sqrt{\sigma_n^2}\Big(\frac{d_n(t-1)}{\sqrt{\sigma_n^2}}-D(t-1)\Big)\Big(\frac{\Delta d_n(t)}{\sqrt{\sigma_n^2}}-\Delta D(t)\Big)\nonumber\\
    &+\frac{1}{2} \sum_{n=1}^N \sqrt{\sigma_n^2}\Big(\frac{\Delta d_n(t)}{\sqrt{\sigma_n^2}}-\Delta D(t))\Big)^2|H^{t-1}]\nonumber\\
    \leq &B + E[\sum_{n=1}^N\Big(\frac{d_n(t-1)}{\sqrt{\sigma_n^2}}-D(t-1)\Big)\Delta d_n(t)\nonumber\\
    &-\sum_{n=1}^N\sqrt{\sigma_n^2}\Big(\frac{d_n(t-1)}{\sqrt{\sigma_n^2}}-D(t-1)\Big)\Delta D(t)|H^{t-1}]\nonumber\\
    =&B + E[\sum_{n=1}^N\Big(\frac{d_n(t-1)}{\sqrt{\sigma_n^2}}-D(t-1)\Big)\Delta d_n(t)|H^{t-1}], \label{eq:one-step-Lyapunov}
\end{align}
where $B$ is a bounded constant. The last two steps follow because {\color{blue}$|\Delta d_n(t)|\leq 1$ and $|\Delta D(t)|\leq \frac{1}{\sum_{n=1}^N\sqrt{\sigma_n^2}}$} are bounded and because $\sum_{n=1}^Nd_n(t-1)=\sum_{n=1}^N\sqrt{\sigma_n^2}D(t-1)$.

The VWD policy schedules the client with the largest $d_n(t-1)/\sqrt{\sigma_n^2}$, which is also the client with the largest $d_n(t-1)/\sqrt{\sigma_n^2} - D(t-1)$, among those with ON channels. Hence, under the VWD policy, the system can be modeled as a Markov process whose state consists of the channel states and $d_n(t-1)/\sqrt{\sigma_n^2} - D(t-1)$ of all clients. Further, the VWD policy is the policy that minimizes $E[\sum_{n=1}^N\Big(\frac{d_n(t-1)}{\sqrt{\sigma_n^2}}-D(t-1)\Big)\Delta d_n(t)|H^{t-1}]$ for all $t$. We first show that the Markov process is positive-recurrent.

\begin{lemma}
Assume that (\ref{eq:sufficient:mean}) -- (\ref{eq:sufficient:non-negative}) are satisfied. {\color{blue}Also assume that $\mu_n$ and $\sqrt{\sigma_n^2}$ are rational numbers for all $n$.} Then, under the VWD policy, the system-wide Markov process, whose state consists of the channel states and $d_n(t-1)/\sqrt{\sigma_n^2} - D(t-1)$ of all clients, is positive-recurrent.
\end{lemma}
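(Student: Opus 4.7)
My plan is to apply Foster's drift criterion to the joint Markov chain whose state consists of $(X_1(t),\dots,X_N(t))$ together with the rescaled deficit gaps $c_n(t):=d_n(t)/\sqrt{\sigma_n^2}-D(t)$, using as Lyapunov function the $L(t)$ already introduced. The rationality assumption on $\mu_n$ and $\sqrt{\sigma_n^2}$ makes the $c_n(t)$ live on a common rational lattice with bounded denominators, so combined with the finite channel-state space the global state space is countable; irreducibility on the accessible class follows from the positive recurrence of the channel Markov chain and from the fact that $c_n$ changes by a bounded amount per slot. What remains is to establish a uniform negative drift outside a finite sublevel set of $L$.

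From (\ref{eq:one-step-Lyapunov}), the task reduces to showing that
\[
E\Big[\sum_{n=1}^N c_n(t-1)\,(\mu_n-Z_n(t))\,\Big|\,H^{t-1}\Big]\leq -\epsilon
\]
whenever $\max_n |c_n(t-1)|$ is sufficiently large. Since $\sum_n \sqrt{\sigma_n^2}\,c_n=0$ by definition of $D$, this is equivalent to $E[\sum_n c_n Z_n(t)\mid H^{t-1}]\geq \sum_n c_n\mu_n + \epsilon$. The VWD policy is exactly the work-conserving rule that pointwise maximizes $\sum_n c_n Z_n(t)$ in each slot (it schedules the ON client with the largest $c_n$), so VWD pointwise minimizes the one-step drift over all work-conserving policies. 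It therefore suffices to exhibit \emph{some} benchmark scheduling rule, evaluated over a single slot or a window of slots, that achieves the desired inequality.

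The main technical obstacle is constructing this benchmark rule from the strict inequalities (\ref{eq:sufficient:mean}). The idea is that because $\sum_{n\in S}\mu_n<m_S$ for every proper subset $S$, the rate vector $(\mu_1,\dots,\mu_N)$ lies in the interior of the first-order capacity region, so there exists $\delta>0$ and a family of stationary randomized policies whose long-run delivery rates can be tilted by any vector $(\delta_1,\dots,\delta_N)$ with $\sum_n\delta_n=0$, $|\delta_n|\leq\delta$, and a freely chosen sign pattern. Given a state with $\max_n|c_n|$ large, I would pick the sign pattern so that clients with large $c_n$ receive extra rate and clients with small $c_n$ receive less, producing a strictly positive gap bounded below by $\delta\bigl(\max_n c_n-\min_n c_n\bigr)$. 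The conditional expectation in the displayed inequality is taken under the one-step channel law given $X(t-1)$ rather than the stationary law, so I would convert this into a $k$-step Lyapunov inequality, choosing $k$ on the order of the channel mixing time so that the empirical channel distribution over the window is within $o(1)$ of its stationary distribution; the $O(k)$ error terms coming from boundedness of $\Delta d_n$ are dominated by the linear-in-$\max_n|c_n|$ gain from the sign-aligned tilt.

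Once this $k$-step negative drift is in place for all states with $\max_n|c_n|$ (and hence $L$) above a threshold, Foster's criterion applied to $L$ yields positive recurrence of the joint Markov chain, completing the lemma. Two details I would verify carefully are (i) that $L(t-1)$ large indeed forces some $|c_n(t-1)|$ to be large, which is immediate from $L=\tfrac{1}{2}\sum_n\sqrt{\sigma_n^2}c_n^2$ together with $\sigma_n^2>0$ from (\ref{eq:sufficient:non-negative}), and (ii) that the sublevel set $\{L\leq M\}$ is finite in the countable state space induced by the rationality assumption.
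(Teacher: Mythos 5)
Your overall architecture matches the paper's: a multi-step Foster--Lyapunov argument on $L(t)$, the observation that VWD minimizes the cross term in (\ref{eq:one-step-Lyapunov}) so that it suffices to exhibit a single benchmark policy with sufficiently negative drift, a window length tied to the channel mixing time exactly as in (\ref{eq:T-steps-mean}), and rationality of $\mu_n$ and $\sqrt{\sigma_n^2}$ supplying the countable, irreducible state space needed to invoke Foster--Lyapunov. Your smaller checks are also sound: since $\sum_n\sqrt{\sigma_n^2}\,c_n=0$ and $\sigma_n^2>0$, a large $L$ does force a large $\max_n c_n$, and your tilt gap $\delta(\max_n c_n-\min_n c_n)$ dominates $\max_n|c_n|$ up to a constant; you are also right that (\ref{eq:sufficient:total mean}) confines the rate vector to a hyperplane, so only tilts with $\sum_n\delta_n=0$ are admissible.

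The one place you genuinely diverge is the construction of the benchmark, and that is where the proposal has a gap. You assert that, because $(\mu_n)$ lies in the relative interior of the region, there is a family of stationary randomized policies whose window-averaged delivery rates, conditioned on any $H^\tau$, equal $\mu_n+\delta_n$ for an arbitrarily chosen sign pattern. That achievability statement is the inner bound of the first-order capacity region for this ON--OFF downlink; it is the crux of the lemma and cannot simply be invoked, because you must be able to compute the benchmark's conditional delivery rates from the channel statistics alone. The paper's proof is precisely the missing construction: it takes the static priority policy $\eta$ that ranks clients by $d_n(\tau)/\sqrt{\sigma_n^2}$, uses the identity $\sum_t Z_n(t)=\sum_t X_{\{1,\dots,n\}}(t)-\sum_t X_{\{1,\dots,n-1\}}(t)$ to express its deliveries purely through the channel processes, and then an Abel summation together with (\ref{eq:delta}) and (\ref{eq:T-steps-mean}) turns the drift into a telescoping sum of nonnegative deficit gaps, each multiplied by $-\delta/2$, giving the bound in (\ref{eq:eta drift}). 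That priority policy is exactly the extreme point of the tilting family you want (and time-sharing among priority orders is how one would prove the general achievability you invoke), so substituting this explicit benchmark for your abstract tilting step closes the argument; as written, the proposal outsources the hardest step to an unproven claim.
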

\begin{proof}
Due to (\ref{eq:sufficient:mean}), we can define 
\begin{equation}
    \delta := \min\{m_S-\sum_{n\in S}\mu_n|S\subsetneq\{1,2,\dots,N\}\}>0. \label{eq:delta}
\end{equation}
Further, since the channel of each client follows a positive-recurrent Markov process with finite states, there exists a finite number $\mathbb{T}$ such that
\begin{equation}
    \mathbb{T}m_S-\frac{\delta}{2}\leq E[\sum_{t=\tau+1}^{\tau+\mathbb{T}}X_S(t)|H^{\tau}]\leq \mathbb{T}m_S+\frac{\delta}{2}, \label{eq:T-steps-mean}
\end{equation}
for any $H^{\tau}$.

Let $L^V(t)$ and $\Delta d_n^V(t)$ be the values of $L(t)$ and $d_n(t)$ under the VWD policy. From (\ref{eq:one-step-Lyapunov}), we can bound the $\mathbb{T}$-step Lyapunov drift by
\begin{align}
    &E[L^V(\tau+\mathbb{T})-L^V(\tau)|H^\tau]\nonumber\\
    \leq &B\mathbb{T}+E[\sum_{t=\tau+1}^{\tau+\mathbb{T}}\sum_{n=1}^N\Big(\frac{d_n(t-1)}{\sqrt{\sigma_n^2}}-D(t-1)\Big)\Delta d_n^V(t)|H^{\tau}]\nonumber\\
    \leq &B\mathbb{T}+E[\sum_{t=\tau+1}^{\tau+\mathbb{T}}\sum_{n=1}^N\Big(\frac{d_n(t-1)}{\sqrt{\sigma_n^2}}-D(t-1)\Big)\Delta d_n^\eta(t)|H^{\tau}]\nonumber\\
    \leq &A+E[\sum_{n=1}^N\Big(\frac{d_n(\tau)}{\sqrt{\sigma_n^2}}-D(\tau)\Big)(\sum_{t=\tau+1}^{\tau+\mathbb{T}}\Delta d_n^\eta(t))|H^{\tau}], \label{eq:t steps drift}
\end{align}
for any other scheduling policy $\eta$, where $d_n^\eta(t)$ is the value of $d_n(t)$ under $\eta$ and $A$ is a bounded constant. The last inequality follows because $\mathbb{T}$, $|d_n(t)-d_n(\tau)|$, and $\Delta d_n(t)$ are all bounded for all $t\in [\tau+1, \tau+\mathbb{T}]$.

We now consider the scheduling policy $\eta$ that schedules the flow with the largest $d_n(\tau)/\sqrt{\sigma_n^2}$ among those with ON channels in all time slots $t\in [\tau+1, \tau+\mathbb{T}]$.

Without loss of generality, we assume that $d_1(\tau)/\sqrt{\sigma_1^2}\geq d_2(\tau)/\sqrt{\sigma_2^2}\geq\dots$. Under $\eta$, a client $n$ will be scheduled in time slot $t$ if it has an ON channel and all clients in $\{1,2,\dots, n-1\}$ have OFF channels, that is, $X_{\{1,2,\dots n\}}(t)=1$ and $X_{\{1,2,\dots n-1\}}(t)=0$. We hence have $\sum_{t=\tau+1}^{\tau+\mathbb{T}}Z_n(t)=\sum_{t=\tau+1}^{\tau+\mathbb{T}}X_{\{1,2,\dots n\}}(t)-\sum_{t=\tau+1}^{\tau+\mathbb{T}}X_{\{1,2,\dots n-1\}}(t)$. Therefore,
\begin{align}
    &E[\sum_{n=1}^N\Big(\frac{d_n(\tau)}{\sqrt{\sigma_n^2}}-D(\tau)\Big)(\sum_{t=\tau+1}^{\tau+\mathbb{T}}\Delta d_n^\eta(t))|H^{\tau}]\nonumber\\
    %=&E[\sum_{i=1}^N\Big(\frac{d_i(\tau)}{\sqrt{\sigma_i^2}}-D(\tau)\Big)(\mathbb{T}\mu_i-\sum_{t=\tau+1}^{\tau+\mathbb{T}}Z_i(t))|H^{\tau}]\nonumber\\
    =&E[\sum_{n=1}^{N-1}\Big(\frac{d_n(\tau)}{\sqrt{\sigma_n^2}}-\frac{d_{n+1}(\tau)}{\sqrt{\sigma_{n+1}^2}}\Big)(\mathbb{T}\sum_{u=1}^n\mu_u\nonumber\\
    &-\sum_{t=\tau+1}^{\tau+\mathbb{T}}X_{\{1,2,\dots,n\}}(t))+\Big(\frac{d_N(\tau)}{\sqrt{\sigma_N^2}}-D(\tau)\Big)\nonumber\\
    &\times(\mathbb{T}\sum_{u=1}^N\mu_u-\sum_{t=\tau+1}^{\tau+\mathbb{T}}X_{\{1,2,\dots,N\}}(t))|H^{\tau}]\nonumber\\
    \leq& \sum_{n=1}^{N-1}\Big(\frac{d_n(\tau)}{\sqrt{\sigma_n^2}}-\frac{d_{n+1}(\tau)}{\sqrt{\sigma_{n+1}^2}}\Big)(-\delta/2)%\nonumber\\
    +\Big(\frac{d_N(\tau)}{\sqrt{\sigma_N^2}}-D(\tau)\Big)(-\delta/2)\nonumber\\
    =&\Big(\frac{d_1(\tau)}{\sqrt{\sigma_1^2}}-D(\tau)\Big)(-\delta/2), \label{eq:eta drift}
\end{align}
where the inequality holds due to (\ref{eq:sufficient:total mean}), (\ref{eq:delta}), and (\ref{eq:T-steps-mean}).

Combining (\ref{eq:t steps drift}) and (\ref{eq:eta drift}), and we have
\begin{equation}
    E[L^V(\tau+\mathbb{T})-L^V(\tau)|H^\tau]<-\delta,
\end{equation}
if $\max_n\Big(\frac{d_n(\tau)}{\sqrt{\sigma_n^2}}-D(\tau)\Big)>2(A/\delta+1)$, and
\begin{equation}
    E[L^V(\tau+\mathbb{T})-L^V(\tau)|H^\tau]\leq A,
\end{equation}
if $\max_n\Big(\frac{d_n(\tau)}{\sqrt{\sigma_n^2}}-D(\tau)\Big)\leq 2(A/\delta+1)$. Recall that $\sum_n \Big(\frac{d_n(\tau-1)}{\sqrt{\sigma_n^2}}-D(\tau-1)\Big)=0$ and the channel of each client follows a Markov process with finite states. Hence, all states of the system with $\max_n\Big(\frac{d_n(\tau)}{\sqrt{\sigma_n^2}}-D(\tau)\Big)\leq 2(A/\delta+1)$ belong to a finite set of states. By the Foster-Lyapunov Theorem, the system-wide Markov process is positive-recurrent {\color{blue}if it is irreducible. Since the channel state of a client is irreducible and $d_n(t-1)/\sqrt{\sigma_n^2} - D(t-1)$ is a rational number that decreases every time a packet is delivered for client $n$ and increases every time a packet is delivered for a client other than $n$, it is trivial to show that the system-wide Markov process is irreducible. This completes the proof.} 
\end{proof}

We now show that the VWD policy delivers all desirable second-order delivery models that satisfy the sufficient conditions (\ref{eq:sufficient:mean}) -- (\ref{eq:sufficient:non-negative}), and thereby establishing Theorem~\ref{theorem:inner bound}.

\begin{theorem}\label{thm: approach solution}
Assume that (\ref{eq:sufficient:mean}) -- (\ref{eq:sufficient:non-negative}) are satisfied. Then, under the VWD policy, $\lim_{T\rightarrow\infty}\frac{\sum_{t=1}^TZ_n(t)}{T}=\mu_n$ and $E[(\lim_{T\rightarrow\infty}\frac{\sum_{t=1}^TZ_n(t)-T\mu_n}{\sqrt{T}})^2]\leq\sigma_n^2,\forall n$.
\end{theorem}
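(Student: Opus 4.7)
The plan is to leverage the positive recurrence established in the preceding lemma and to write the sums of interest in terms of the deficits. Since $\sum_{\tau=1}^T Z_n(\tau) - T\mu_n = -d_n(T)$, both claims reduce to controlling $d_n(T)$. Define $\tilde{d}_n(t) := d_n(t)/\sqrt{\sigma_n^2} - D(t)$. The previous lemma says the augmented state $(X_n(t),\tilde{d}_n(t))_n$ forms a positive-recurrent Markov chain. Meanwhile, iterating the recursion for $D(t)$ and using $\sum_n Z_n(t)=X_{\{1,\ldots,N\}}(t)$ gives
\[
D(T) \;=\; \frac{T m_{\{1,\ldots,N\}} - \sum_{\tau=1}^T X_{\{1,\ldots,N\}}(\tau)}{\sum_n \sqrt{\sigma_n^2}},
\]
a functional of the channel process alone. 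Thus $d_n(T) = \sqrt{\sigma_n^2}\bigl(\tilde{d}_n(T)+D(T)\bigr)$ cleanly splits into a stabilized piece $\tilde{d}_n(T)$ and a channel-only piece $D(T)$.

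For the mean claim, positive recurrence gives $\tilde{d}_n(T)/T \to 0$ almost surely, and the strong law of large numbers for the Markov channel gives $D(T)/T \to 0$ almost surely, so $d_n(T)/T \to 0$ and $\sum_{\tau=1}^T Z_n(\tau)/T \to \mu_n$. For the variance claim, the same decomposition yields
\[
\frac{\sum_{\tau=1}^T Z_n(\tau) - T\mu_n}{\sqrt T} \;=\; -\sqrt{\sigma_n^2}\,\frac{\tilde{d}_n(T)+D(T)}{\sqrt T}.
\]
The Markov central limit theorem makes $D(T)/\sqrt T$ converge, both in distribution and in second moment, to a zero-mean Gaussian of variance $v_{\{1,\ldots,N\}}^2/(\sum_n\sqrt{\sigma_n^2})^2$. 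I would then show $\tilde{d}_n(T)/\sqrt T \to 0$ in $L^2$ by upgrading the $\mathbb T$-step negative drift in (\ref{eq:t steps drift})--(\ref{eq:eta drift}) into a stationary second-moment bound $E_\pi[\tilde{d}_n^2]<\infty$. Combining these and invoking the sufficient condition (\ref{eq:sufficient:variance}) gives the limiting variance
\[
\sigma_n^2 \cdot \frac{v_{\{1,\ldots,N\}}^2}{\bigl(\sum_n\sqrt{\sigma_n^2}\bigr)^2} \;\le\; \sigma_n^2,
\]
which is exactly the claim.

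The main obstacle is upgrading the Foster--Lyapunov negative-drift estimate to a stationary second-moment bound on $\tilde{d}_n$, since only this lets us pass to the limit inside the expectation and couple cleanly with the channel's CLT. The natural route is to observe that the $\mathbb T$-step drift established in the previous lemma is in fact \emph{linear} in $\max_n\tilde{d}_n$ while single-step increments are uniformly bounded; this is the standard hypothesis for Hajek-type moment bounds, which yield exponential tails for $\tilde{d}_n$ in stationarity and hence uniform integrability of $(\tilde{d}_n(T)/\sqrt T)^2$. Once that is in hand, the rest of the argument is routine.
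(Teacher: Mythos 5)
Your proposal is correct and follows essentially the same route as the paper: the same decomposition $d_n(T)=\sqrt{\sigma_n^2}\,(\tilde d_n(T)+D(T))$ with $D(T)$ a functional of the channel process alone, the strong law for the mean claim, and the Markov CLT for $D(T)/\sqrt{T}$ combined with constraint (\ref{eq:sufficient:variance}) for the variance claim. The one place you go beyond the paper is the $L^2$ control of $\tilde d_n(T)/\sqrt{T}$: the paper simply asserts from positive recurrence that $\bigl(d_n(T)/\sqrt{\sigma_n^2}-D(T)\bigr)/\sqrt{T}\to 0$ and then passes the limit inside the expectation without comment, whereas you correctly identify that this requires a uniform second-moment (or uniform integrability) bound on $\tilde d_n$ in stationarity and propose obtaining it via a Hajek-type argument from the $\mathbb{T}$-step negative drift with bounded increments. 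That is the right tool and it closes a step the paper leaves implicit; the rest of your argument matches the paper's.
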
 
\begin{proof}

Since the system-wide Markov process is positive recurrent under the VWD policy, we have: \begin{align}
    \lim_{T\to\infty} \frac{d_n(T)/\sqrt{\sigma_n^2}-D(T)}{T} \to 0, \forall n, \label{f:positive recurrent for mean} \\
    \lim_{T\to\infty} \frac{d_n(T)/\sqrt{\sigma_n^2}-D(T)}{\sqrt{T}} \to 0, \forall n. \label{f:positive recurrent for var}
\end{align}
%By (\ref{f:positive recurrent for mean}) and (\ref{f:positive recurrent for var}), we will show the delivery process under LVWD policy approaches to the optimal solutions $[\Bar{X}_i^*]$ and $[\sigma_i^*]$. 

First, we show that $\lim_{T\rightarrow\infty}\frac{\sum_{t=1}^TZ_n(t)}{T}=\mu_n,\forall n$.
Recall that $d_n(t) = t\mu_n - \sum_{\tau=1}^tZ_n(\tau)$ and $D(t) = {\sum_{n=1}^N d_n(t)}/{\sum_{n=1}^N \sqrt{\sigma_n^2}}$.
By (\ref{eq:sufficient:total mean}), we have:
\begin{align}
     &\lim_{T \to \infty} \frac{D(T)}{T} = \lim_{T \to \infty} \frac{\sum_{n=1}^N T\mu_n -\sum_{t=1}^T\sum_{n=1}^N  Z_n(t)}{T\sum_{n=1}^N \sqrt{\sigma_n^2}} \notag \\
     =&\lim_{T\to \infty} \frac{Tm_{\{1,2,\dots,N\}} - \sum_{t=1}^TX_{\{1,2,\dots,N\}}}{{T\sum_{n=1}^N \sqrt{\sigma_n^2}}} = 0. \label{f:proving mean converge for D}
\end{align}
Hence, by (\ref{f:positive recurrent for mean}), we have $\lim_{T\to\infty} \frac{d_n(T)}{T}=\mu_n-\lim_{T\to\infty} \frac{\sum_{t=1}^TZ_n(t)}{T}=0$, for all $n$.

Next, we show that $E[(\lim_{T\rightarrow\infty}\frac{\sum_{t=1}^TZ_n(t)-T\mu_n}{\sqrt{T}})^2]\leq\sigma_n^2,\forall n$. We have, by (\ref{eq:sufficient:variance}),
\begin{align}
    E[(\lim_{T\rightarrow\infty}\frac{D(T)}{\sqrt{T}})^2]=\frac{v^2_{\{1,2,\dots,N\}}}{(\sum_{n=1}^N\sqrt{\sigma_n^2})^2}\leq 1,
\end{align}

and, hence,
\begin{align}
    &E[(\lim_{T\rightarrow\infty}\frac{\sum_{t=1}^TZ_n(t)-T\mu_n}{\sqrt{T}})^2] = E[(\lim_{T\rightarrow\infty}\frac{d_n(T)}{\sqrt{T}})^2]\notag\\
    =&\sigma_n^2E[(\lim_{T\rightarrow\infty}\frac{D(T)}{\sqrt{T}})^2]\leq \sigma_n^2.
\end{align}
\end{proof}

\begin{figure*}[hbt!]
\begin{center} 
\subfigure[N = 5 Clients.]{\includegraphics[width=0.3\textwidth, height=1.8in]{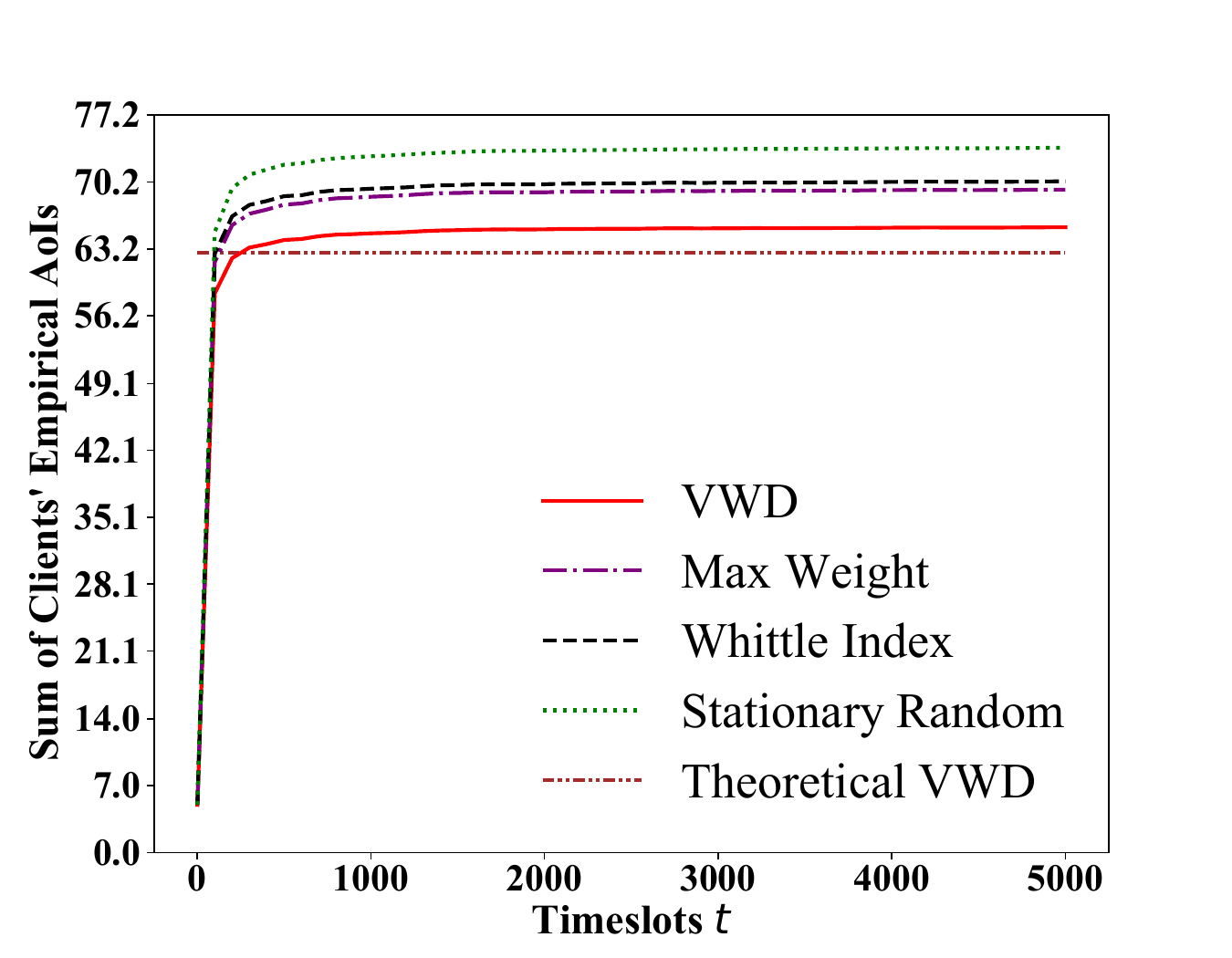}}
\subfigure[N = 10 Clients.]{\includegraphics[width=0.3\textwidth, height=1.8in]{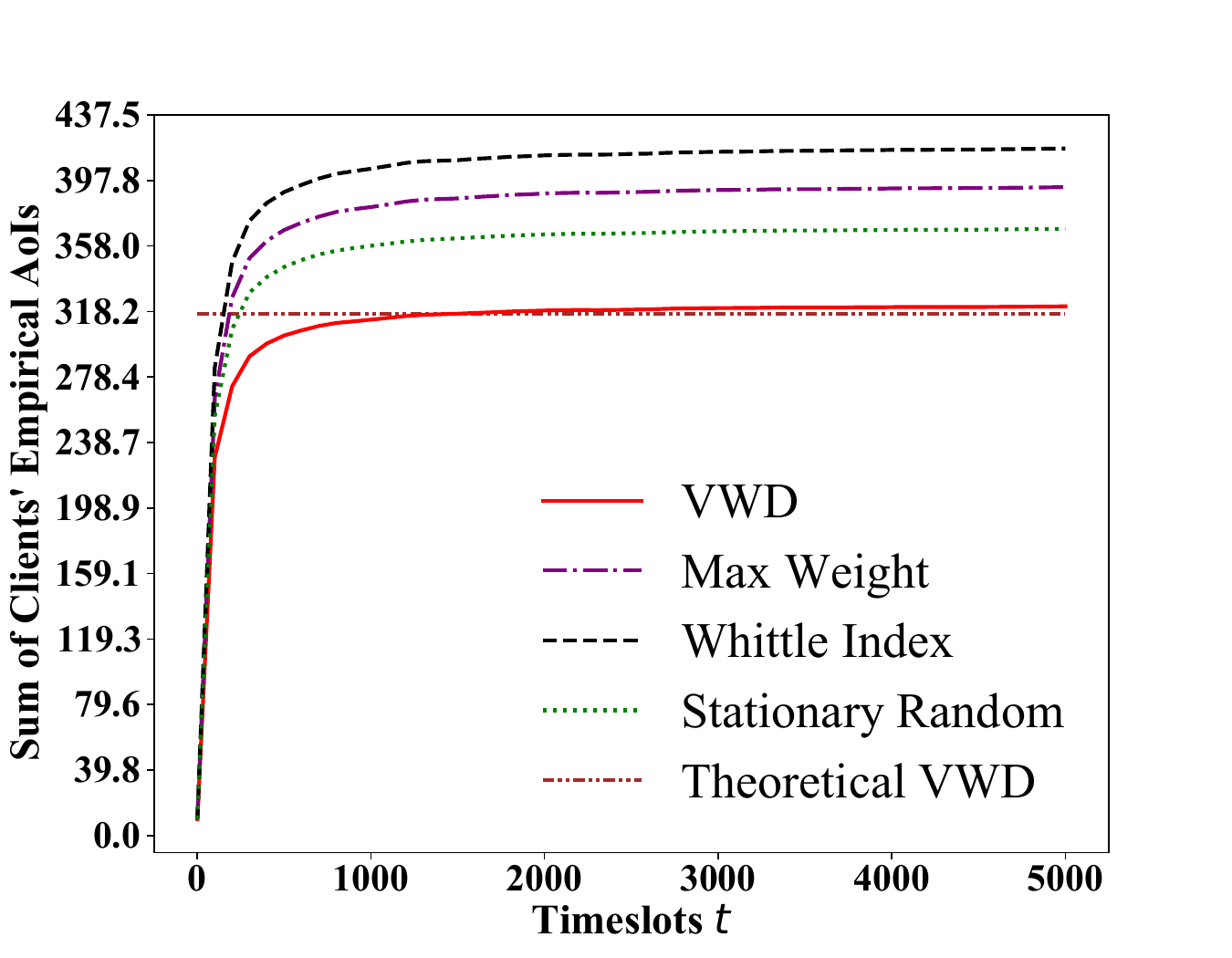}}
\subfigure[N = 20 Clients.]{\includegraphics[width=0.3\textwidth, height=1.8in]{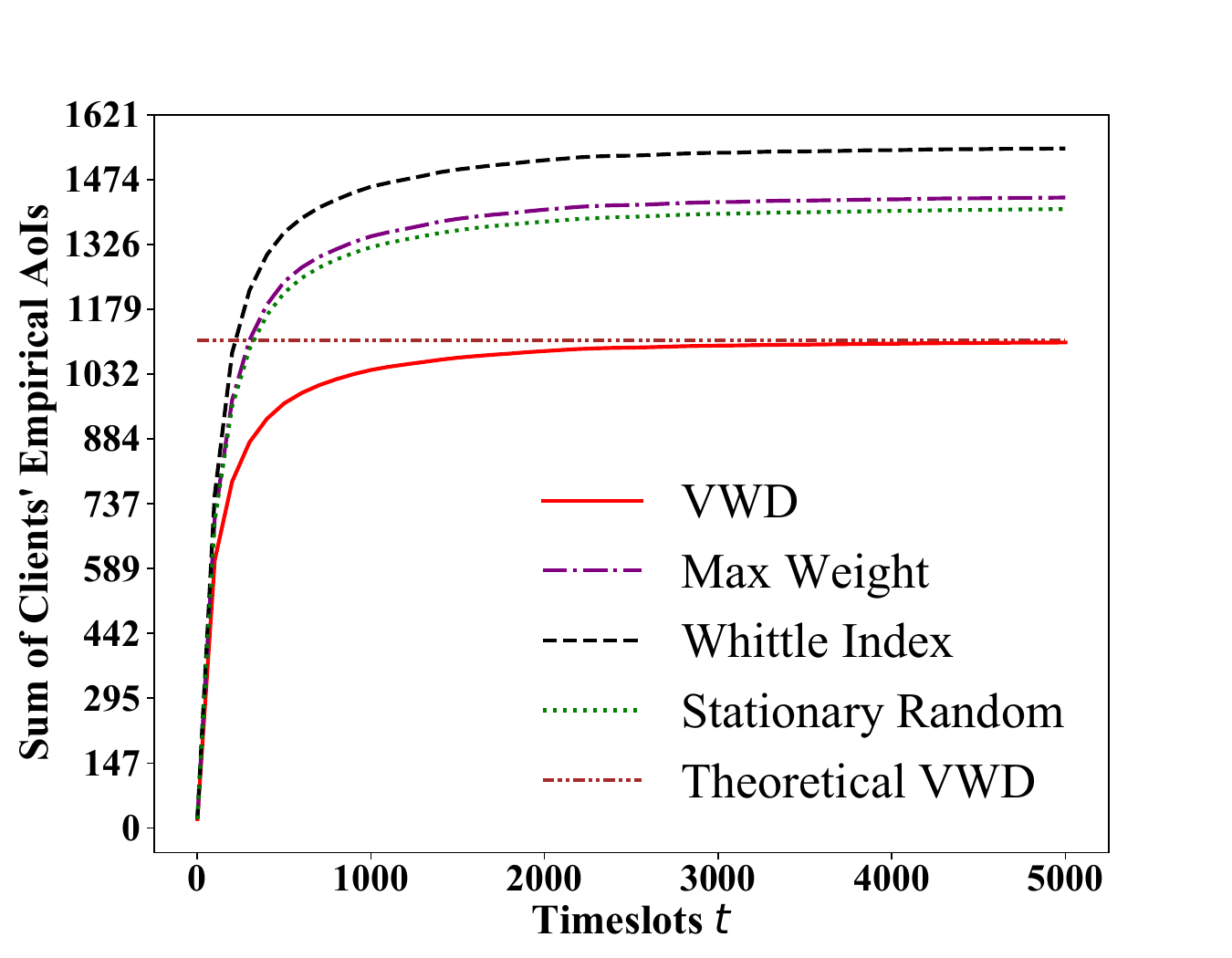}}
\end{center}
\caption{Total non-weighted empirical age of information (AoI) for real-time sensing clients.}
\label{fig:empirical_aoi_sum}
\end{figure*}

We conclude this section by leveraging theorems~\ref{theorem:inner bound} and~\ref{thm: approach solution} to solve the second-order optimization problem. The optimization problem can be written as
\begin{align} \label{eq:objective_function}
    \min & \sum_{n=1}^NF_n(\mu_n,\sigma_n^2) \nonumber \\
    \mbox{s.t. } & \mbox{(\ref{eq:sufficient:mean}) -- (\ref{eq:sufficient:non-negative})}.
\end{align}
The condition (\ref{eq:sufficient:mean}) involves strict inequalities, which cannot be used by standard optimization solvers. We change (\ref{eq:sufficient:mean}) to $\sum_{n\in S}\mu_n\leq m_S-\delta$, where $\delta$ is a small positive number. After the change, the optimization problem can be directly solved by standard solvers to find the optimal $\{\mu_n, \sigma_n^2|1\leq n\leq N\}$. After finding the optimal $\{\mu_n, \sigma_n^2|1\leq n\leq N\}$, one can use the VWD policy to attain the optimal network performance for $N$ wireless clients.

\section{Simulation Results} \label{sec:simulation}

\begin{figure*}[hbt!]
\begin{center} 
\subfigure[N = 5 Clients.]{\includegraphics[width=0.3\textwidth, height=1.8in]{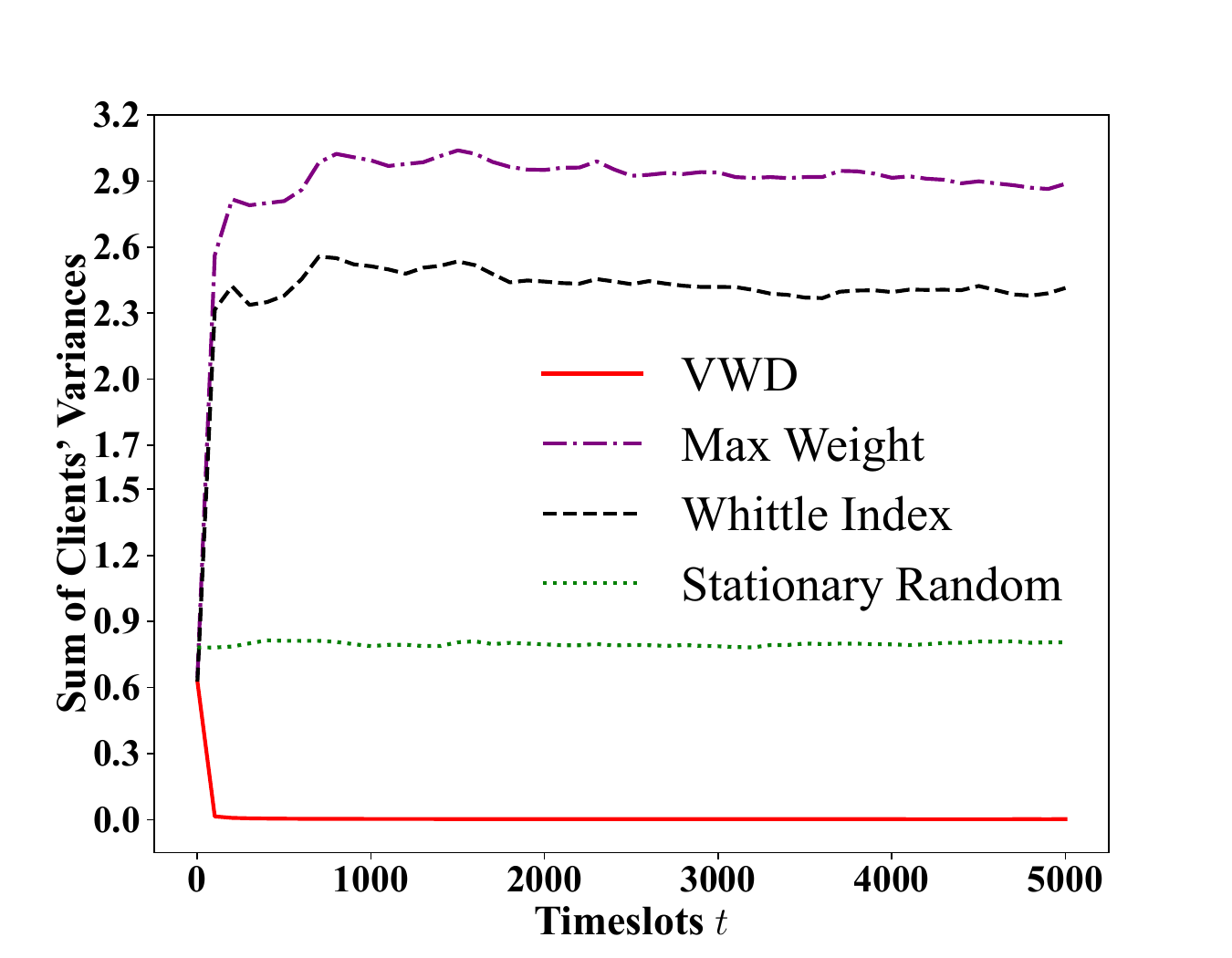}}
\subfigure[N = 10 Clients.]{\includegraphics[width=0.3\textwidth, height=1.8in]{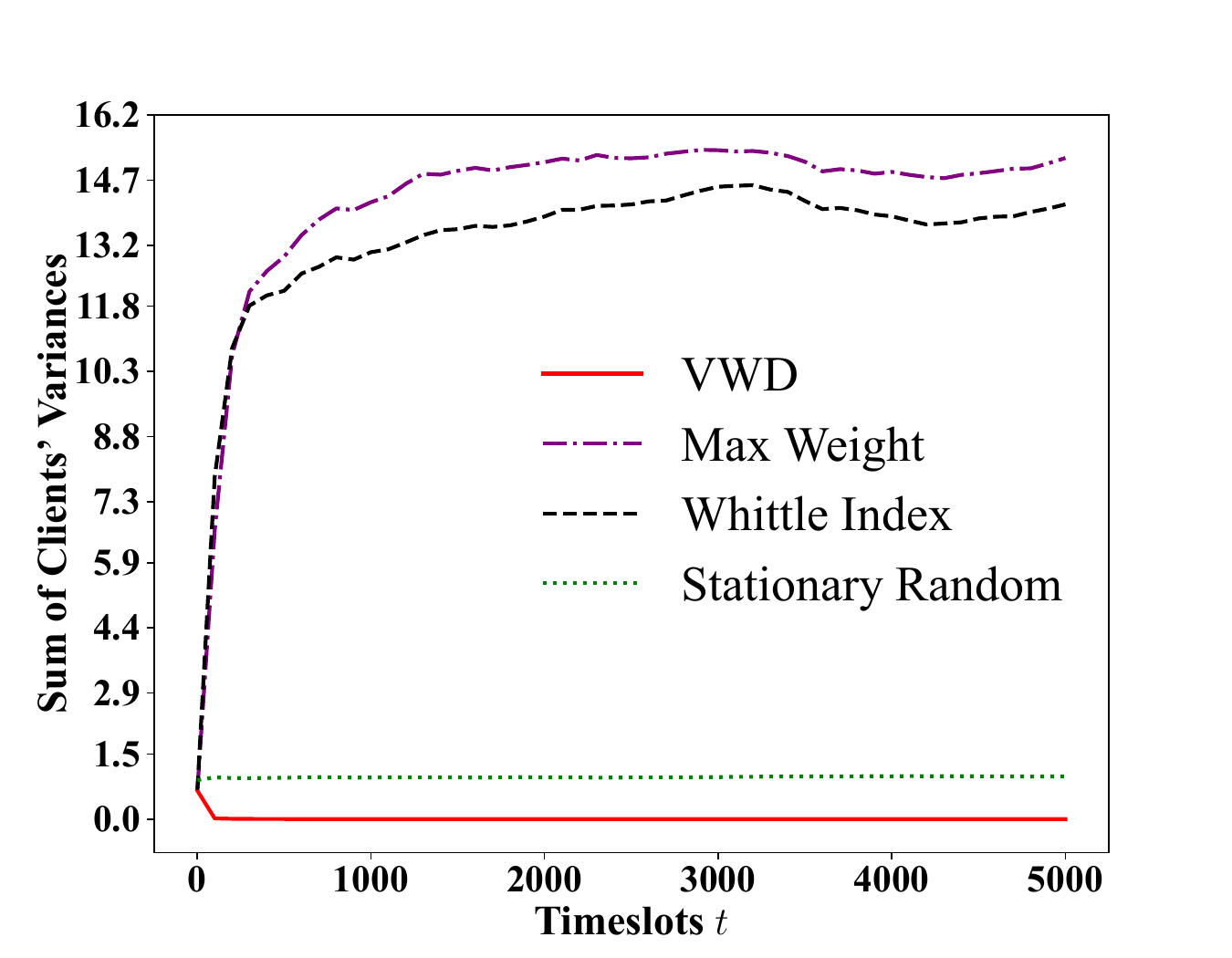}}
\subfigure[N = 20 Clients.]{\includegraphics[width=0.3\textwidth, height=1.8in]{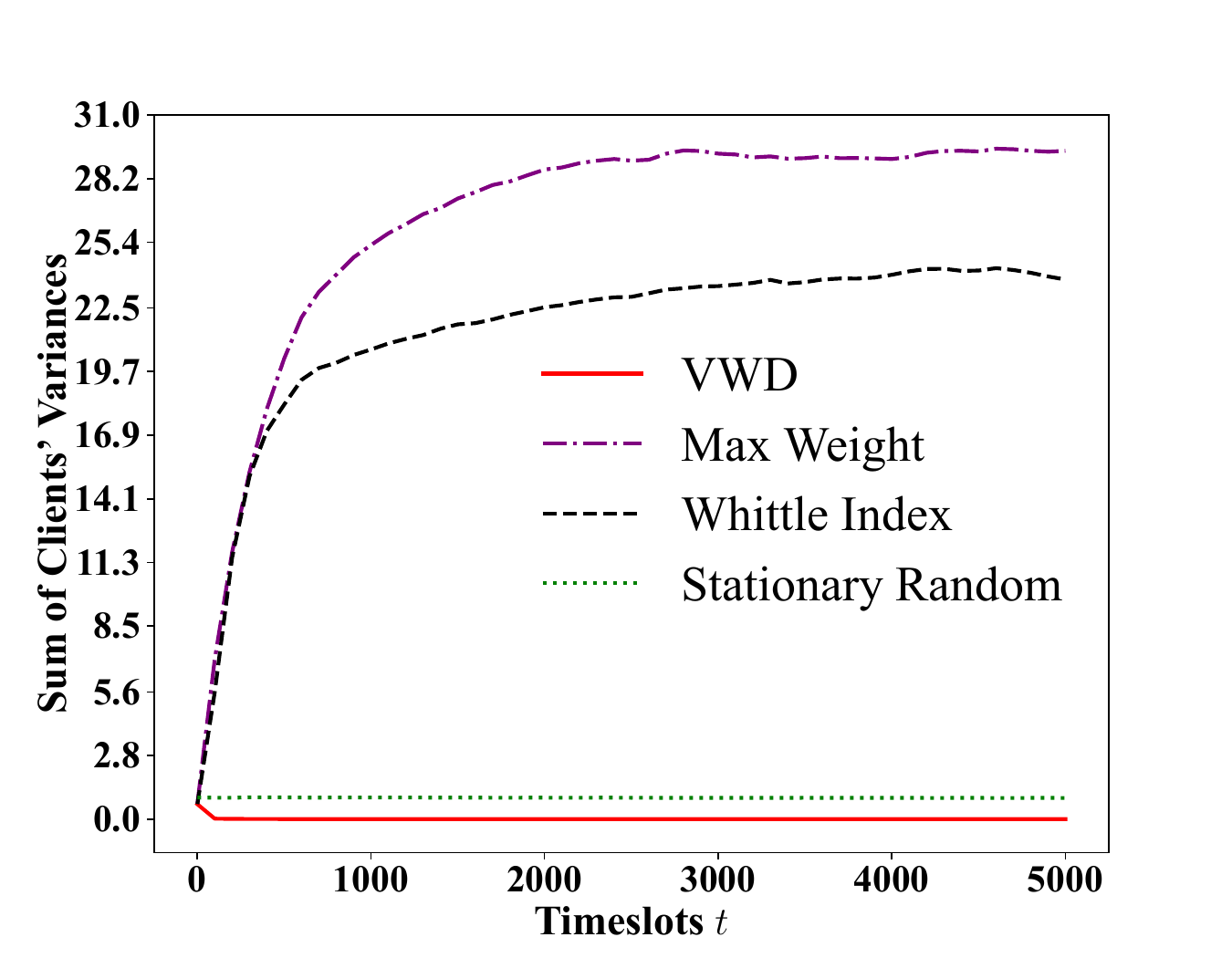}}
\end{center}
\caption{Empirical variance of all real-time sensing clients.}
\label{fig:variance_policies}
\end{figure*}

In this section, we present the simulation results for the proposed scheduler VWD for three different cases, one where all clients are real-time sensing clients, one where all clients are live video streaming clients, and one where both kinds of clients are present.
We compare our policy, VWD, against baseline policies designed for either AoI minimization or timely-throughput maximization.

\subsection{Real-Time Sensing Clients Optimization}  \label{subsection:aoi_min}

The objective is to minimize the system-wide AoI of $N = I$ real-time sensing clients, $\sum_n \alpha_n \overline{AoI}_n$, where $\alpha_n$ is the weight of client $n$. The system model is the one discussed in Section~\ref{sec: aoi}. Each client has a Gilbert-Elliott channel with transition probabilities $p_n$ and $q_n$. In each time slot, each client $n$ generates a new packet with probability $\lambda_n$. VWD is evaluated against three recently designed scheduling policies on the AoI minimization problem. We provide a description of each policy, along with modifications needed to fit the testing setting.

\begin{itemize}
\vspace{0.5em}
    \item \textbf{Whittle index policy}: This policy is based on the Whittle index policy by Hsu in \cite{Hsu18}. Under our setting, the policy calculates an index for ON clients based on their AoIs as $W_n(t) = \frac{AoI_n^2(t)}{2} - \frac{AoI_n(t)}{2} + \frac{AoI_n(t)}{q_n/(p_n+q_n)}$, and then schedules the ON client with the largest index. Hsu in \cite{Hsu18} has shown that $W_n(t)$ is indeed the Whittle index of a client when the channel is i.i.d., i.e., $p_n+q_n=1$, and $\lambda_n=1$.
    \vspace{0.5em}
    \item \textbf{Stationary randomized policy}: This policy calculates a weight $\mu_n$ for each client. In each time slot, it randomly picks an ON client, with the probability of picking $n$ being proportional to $\mu_n$. In the setting of Kadota and Modiano \cite{kadota2019minimizing}, it has been shown that, when $\mu_n$ is properly chosen, this policy achieves an approximation ratio of four in terms of total weighted AoI. In our setting, we choose $\mu_n$ to be the optimal $\mu_n$ from solving (\ref{eq:objective_function}). 
    \vspace{0.5em}
    \item \textbf{Max weight policy} \cite{kadota2019minimizing}: This policy schedules the ON client with the largest $(AoI_n(t) - z_n(t))/\mu_n$. In the setting of Kadota and Modiano \cite{kadota2019minimizing}, $z_n(t)$ is the time since client $n$ generates the latest packet. It has been shown that the total weighted AoI under this policy is no larger than that under the stationary randomized policy, and therefore this policy also achieves an approximation ratio of four. In our setting, the AP does not know when each client generates a new packet. Hence, we choose $z_n(t)$ to be $\frac{1}{\lambda_n}$, which is the expected time since client $n$ generates the latest packet.
    \vspace{0.5em}
\end{itemize}

\begin{figure*}[hbt!]
\begin{center} 
\subfigure[N = 5 Clients.]{\includegraphics[width=0.3\textwidth, height=1.8in]{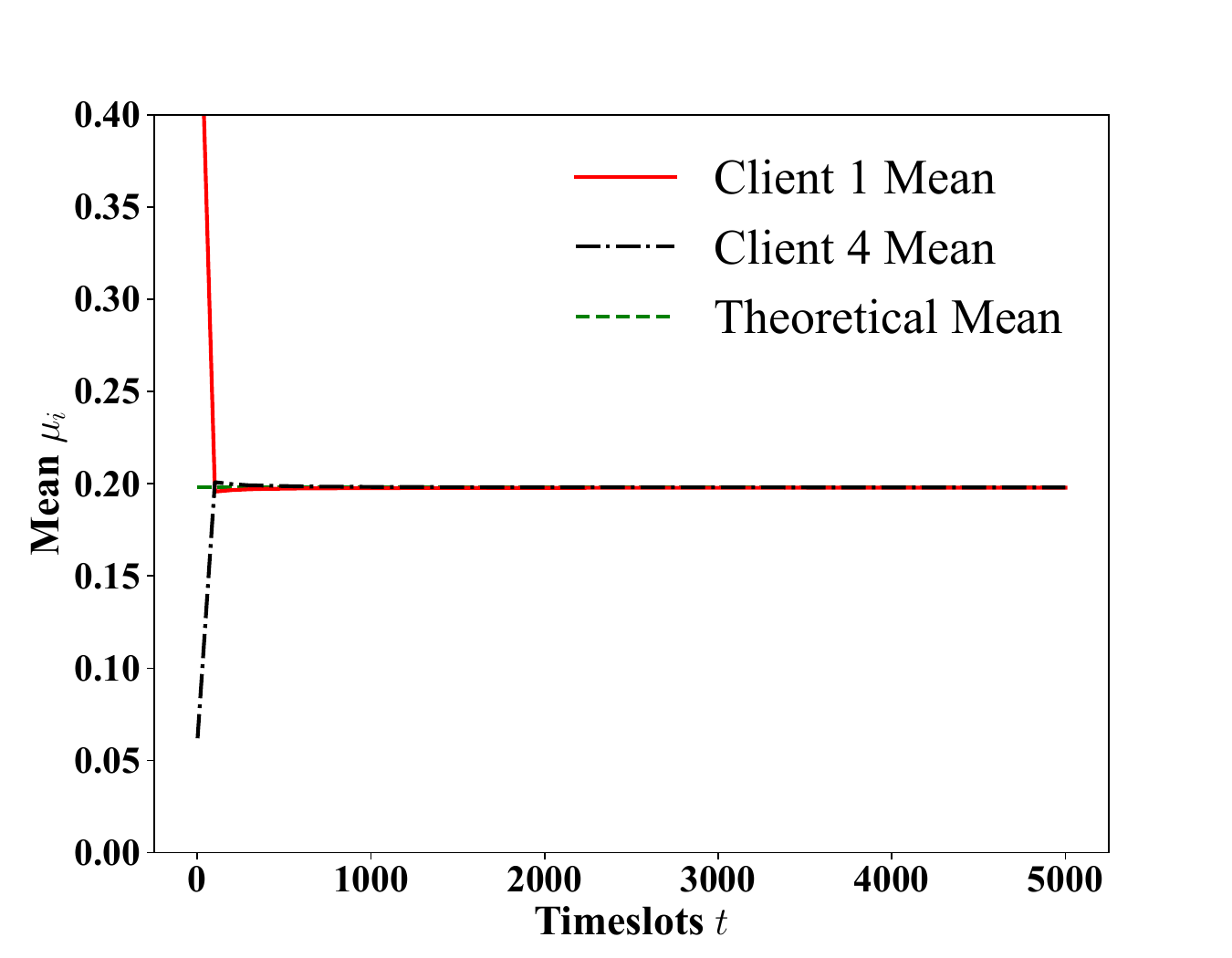}}
\subfigure[N = 10 Clients.]{\includegraphics[width=0.3\textwidth, height=1.8in]{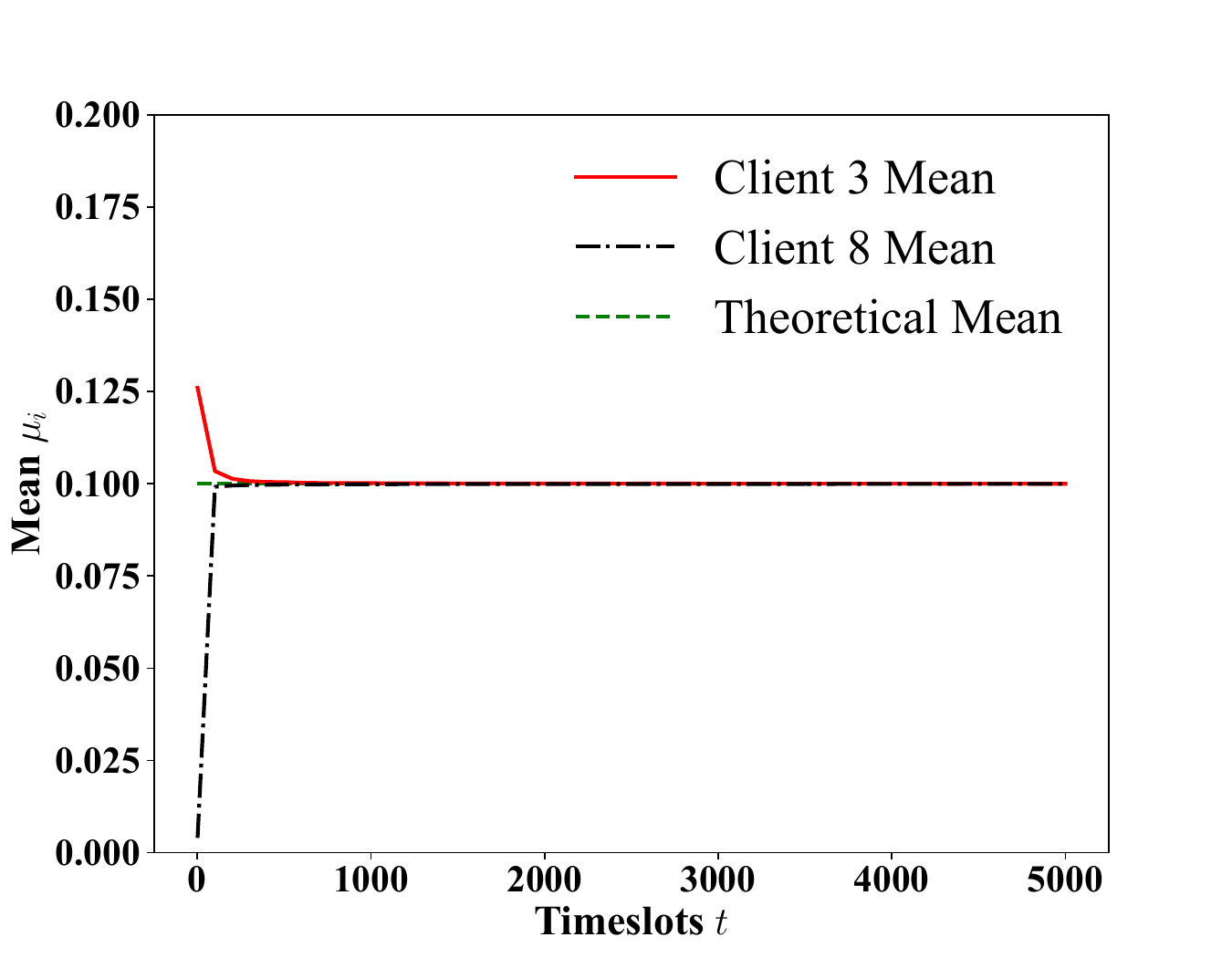}}
\subfigure[N = 20 Clients.]{\includegraphics[width=0.3\textwidth, height=1.8in]{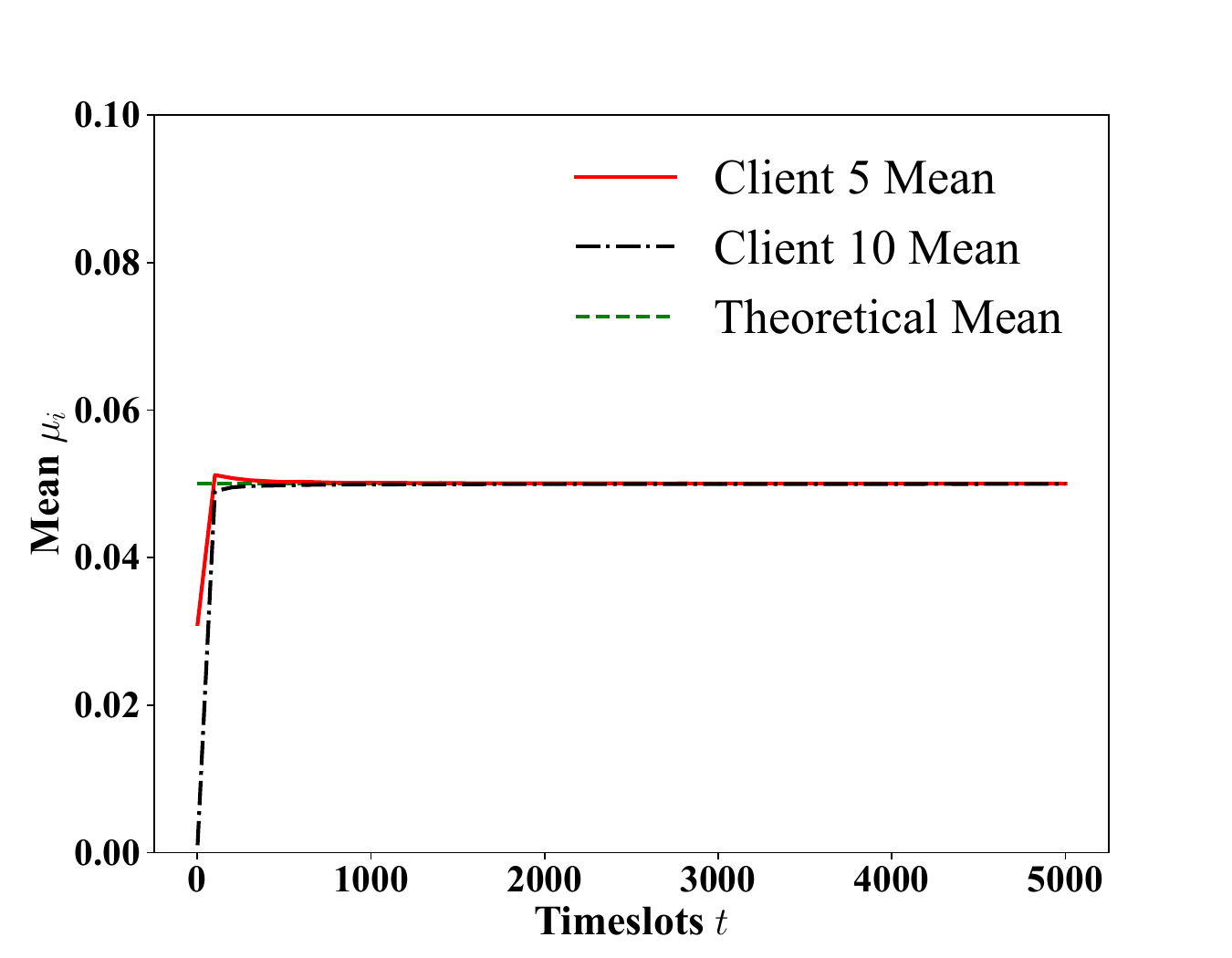}}
\end{center}
\caption{Mean convergence of two randomly selected real-time sensing clients.}
\label{fig:mean_convergence}
\end{figure*}

\begin{figure*}[hbt!]
\begin{center} 
\subfigure[N = 5 Clients.]{\includegraphics[width=0.3\textwidth, height=1.8in]{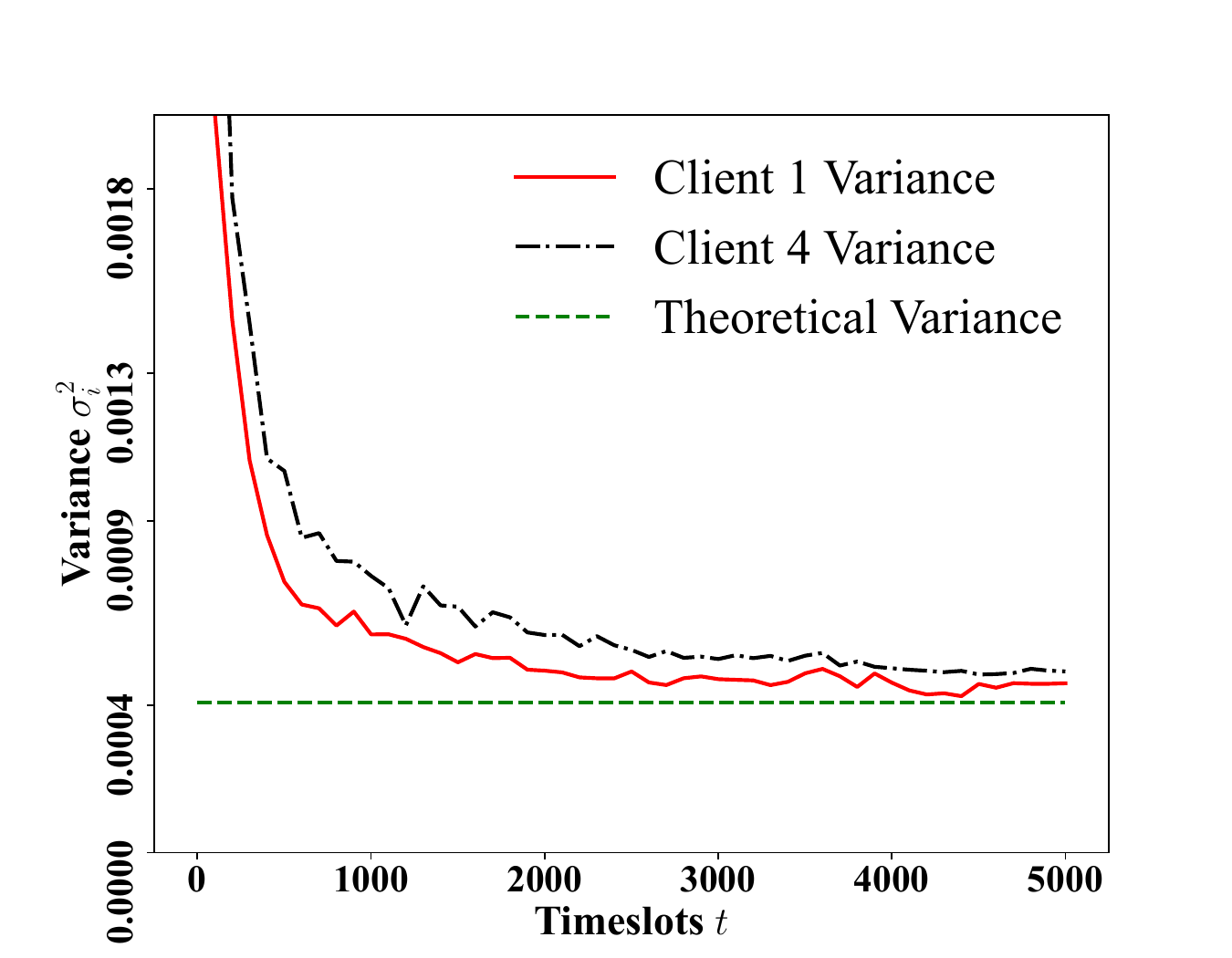}}
\subfigure[N = 10 Clients.]{\includegraphics[width=0.3\textwidth, height=1.8in]{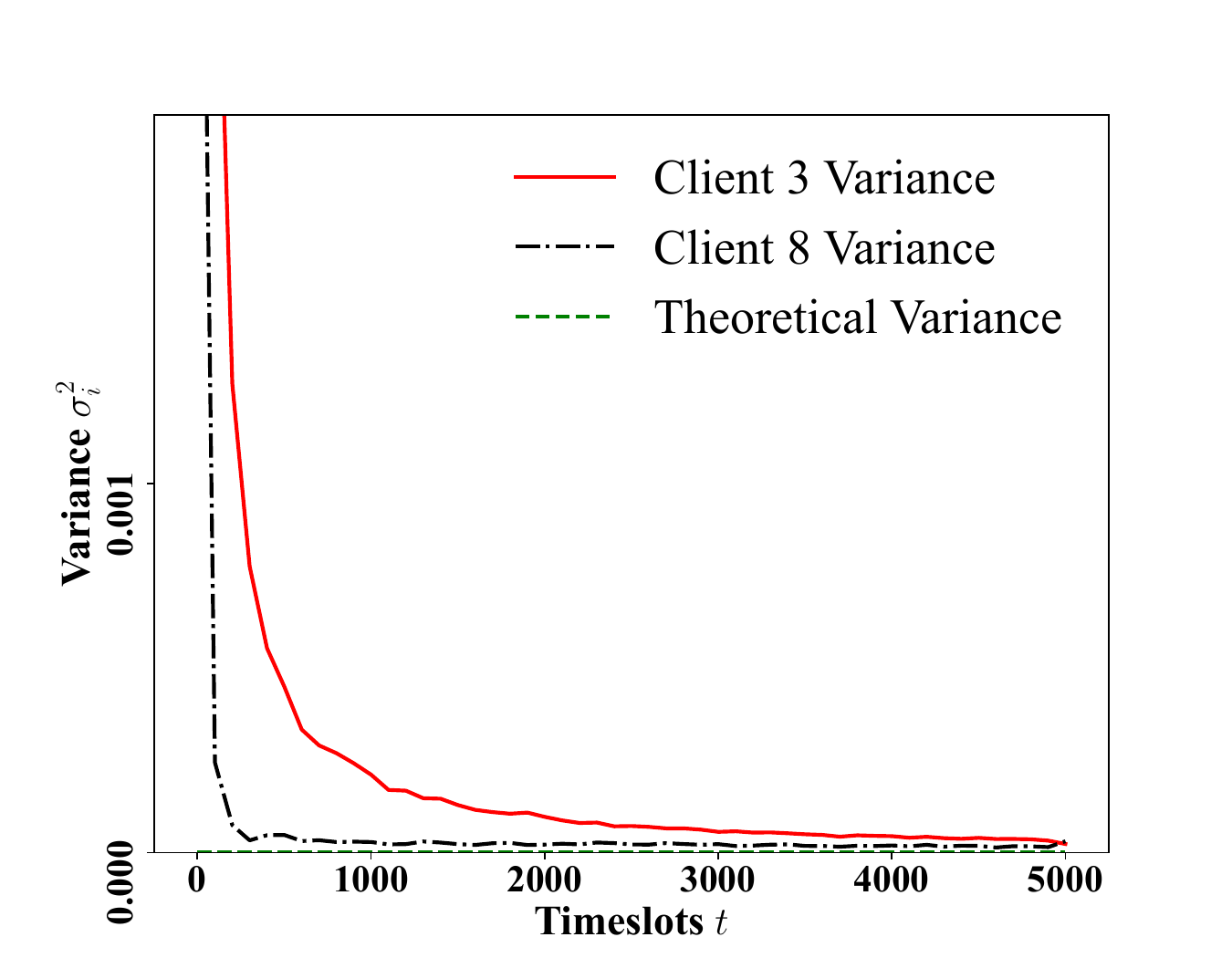}}
\subfigure[N = 20 Clients.]{\includegraphics[width=0.3\textwidth, height=1.8in]{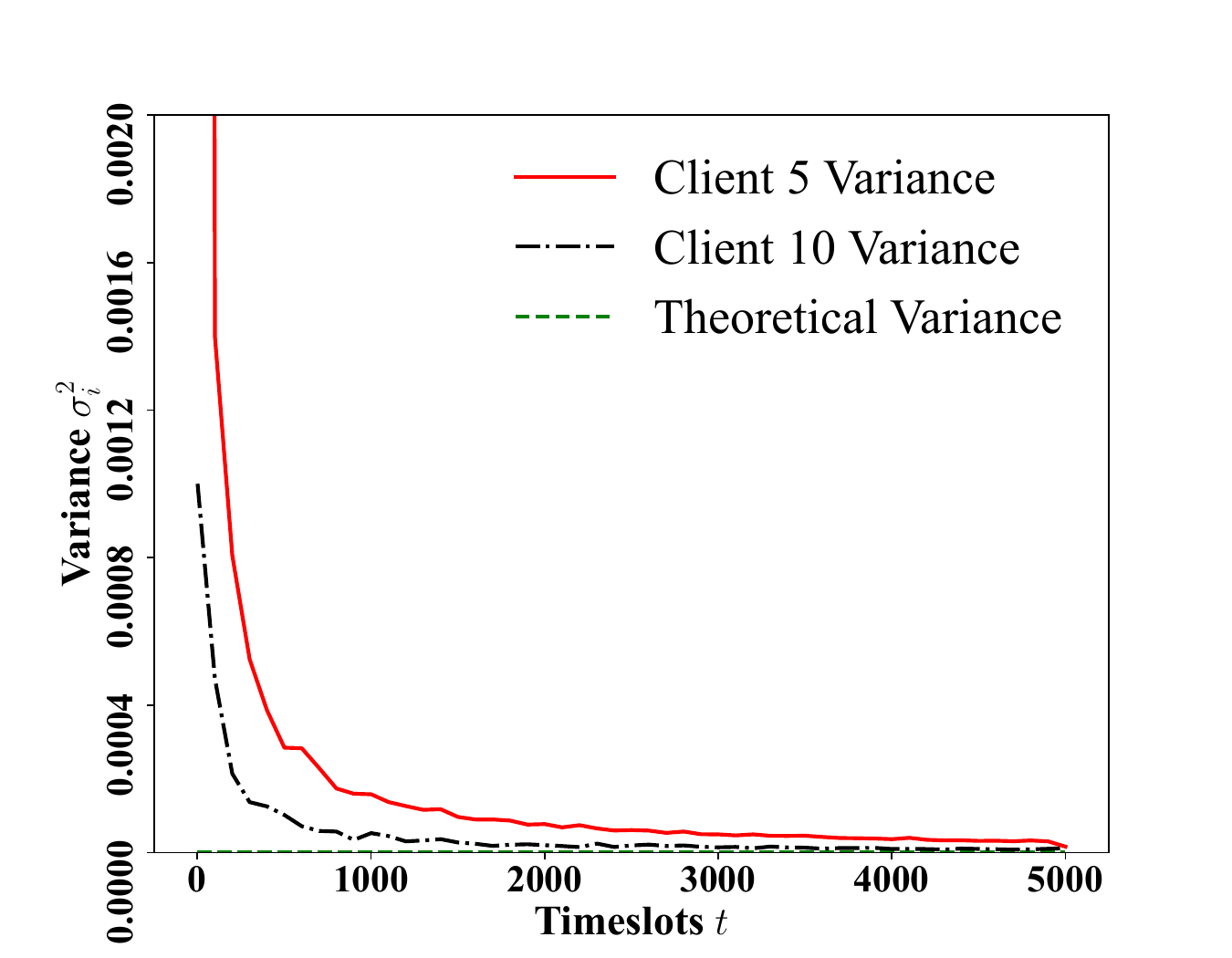}}
\end{center}
\caption{Variance convergence of two randomly selected real-time sensing clients.}
\label{fig:var_convergence}
\end{figure*}

\begin{figure*}[hbt!]
\begin{center} 
\subfigure[N = 5 Clients.]{\includegraphics[width=0.3\textwidth, height=1.8in]{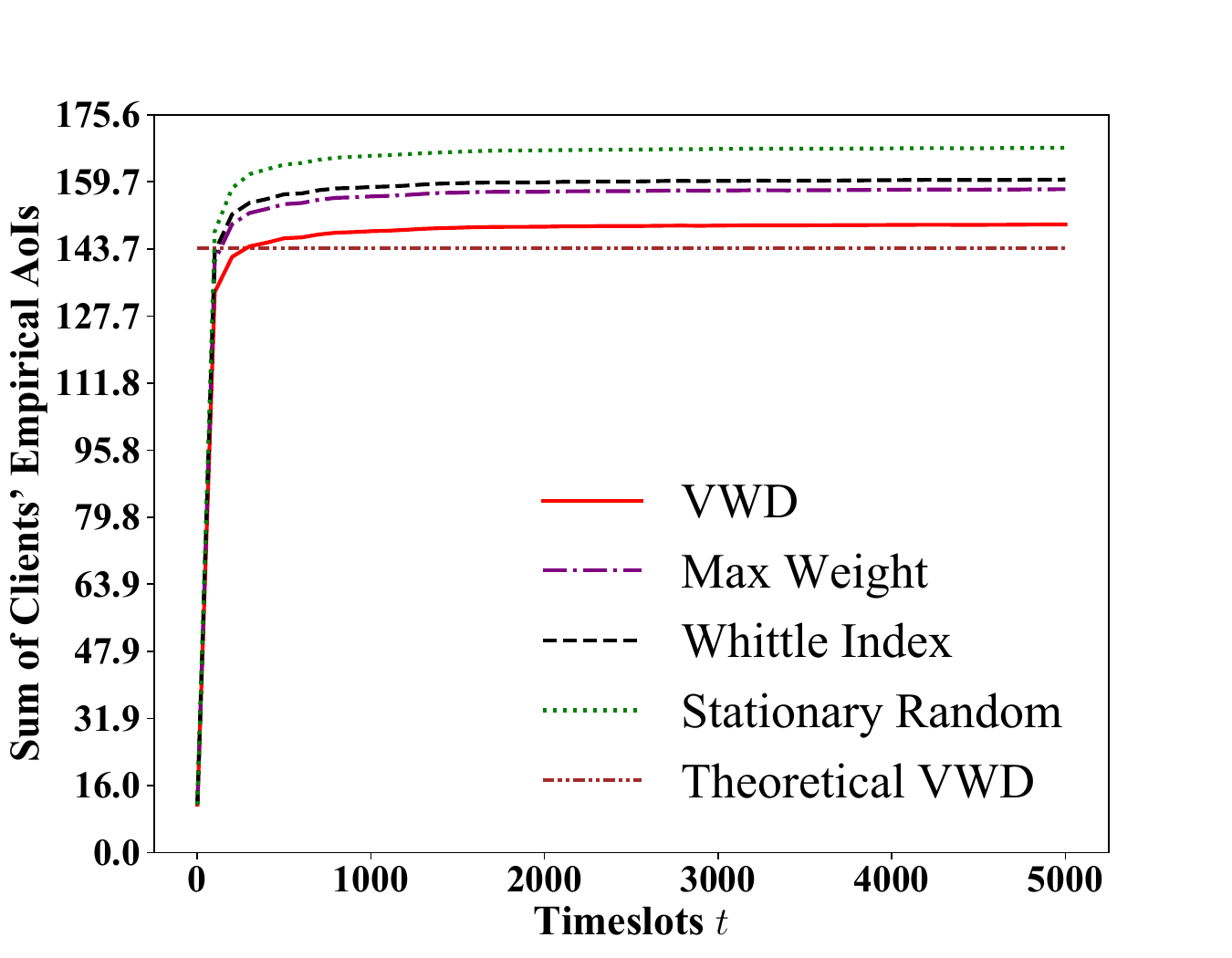}}
\subfigure[N = 10 Clients.]{\includegraphics[width=0.3\textwidth, height=1.8in]{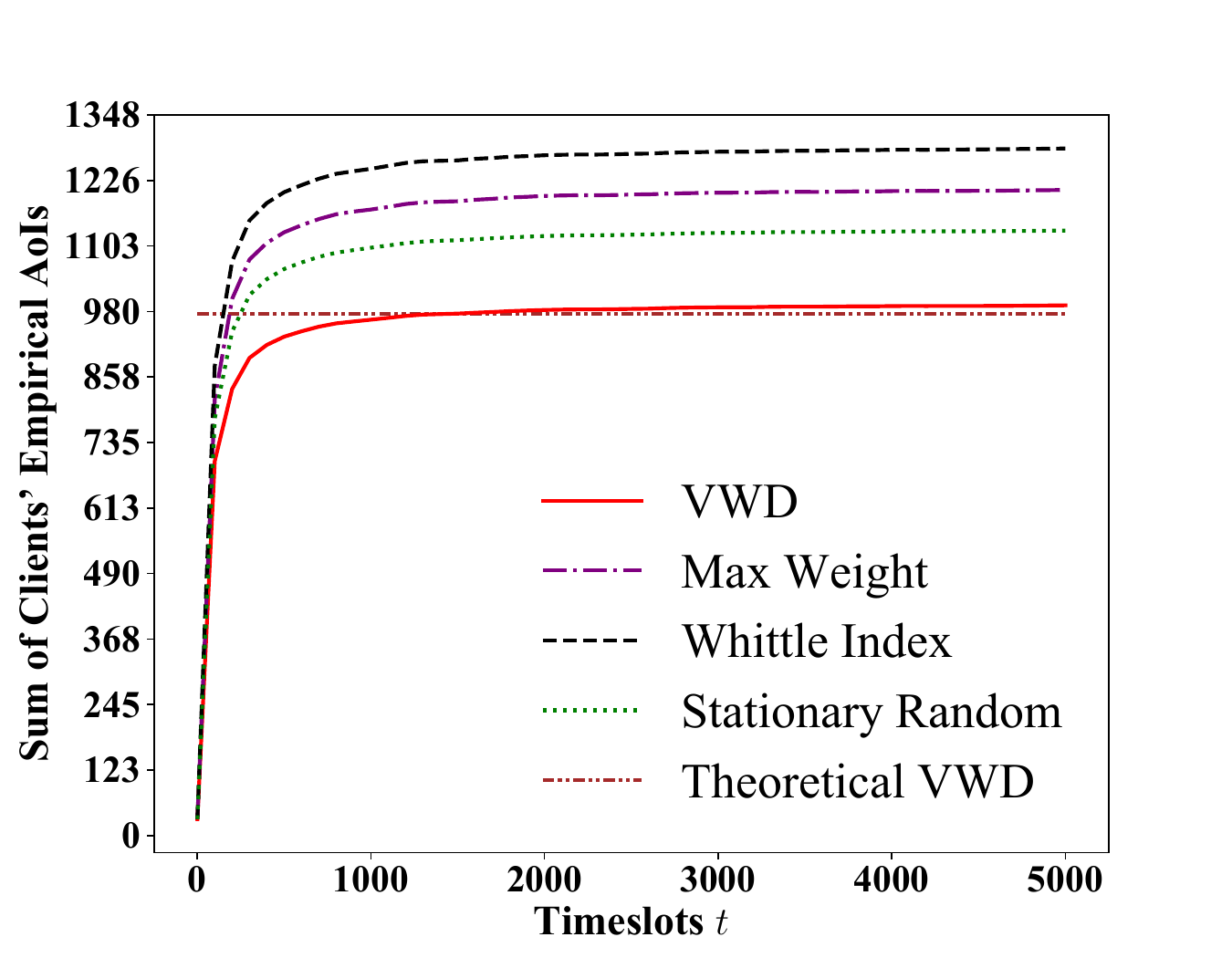}}
\subfigure[N = 20 Clients.]{\includegraphics[width=0.3\textwidth, height=1.8in]{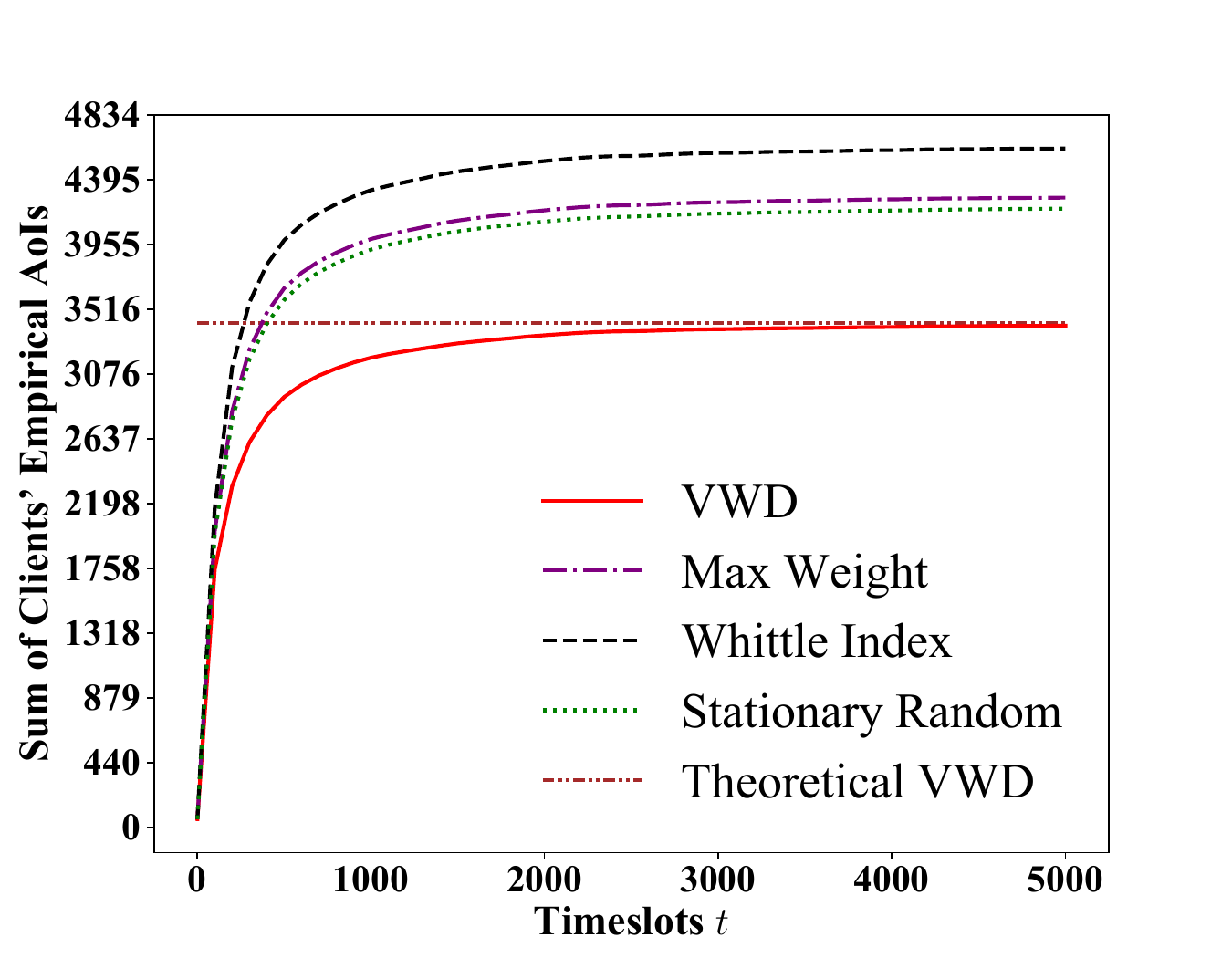}}
\end{center}
\caption{Total weighted empirical age of information (AoI) for real-time sensing clients.}
\label{fig:weighted_aoi}
\end{figure*}

We consider three different systems, each with 5, 10, and 20 real-time sensing clients, respectively. For each system, $p_n$ and $q_n$ are randomly chosen from the range $(0.05, 0.95)$, and $\{\lambda_n\}$ is randomly chosen from $(\frac{0.1}{N}, \frac{1}{N})$. After determining the values of $p_n$, $q_n$ and $\lambda_n$, we generate $1000$ independent traces of channels and packet arrivals. The performance of each policy is the average over these $1000$ independent traces. We consider both the non-weighted case, i.e., $\alpha_n\equiv 1, \forall n$, and the weighted case. In addition to the evaluated policies, we also include the numerical solutions from solving the problem \eqref{eq:objective_function}, which is referred to as the Theoretical VWD.

\vspace{0.5em}

\subsubsection{Non-Weighted Clients}

Fig.~\ref{fig:empirical_aoi_sum} shows the average total AoI for different network sizes $N = I = \{5, 10, 20\}$ when $\alpha_n = 1$. It can be observed that VWD achieves the smallest total AoI in all systems. VWD's superiority becomes more significant as $N$ increases. It can also be observed that the empirical AoI under VWD becomes virtually the same as the theoretical AoI based on the solution to (\ref{eq:objective_function}) as the number of clients $N$ increases. 
%The differences between the empirical AoI under VWD and the theoretical one are $10.7\%$, $7.8\%$, and $6.1\%$ for $N=5, 10, 20$, respectively.

To understand why VWD performs better than the other three policies, we evaluate the total empirical variance under each policy. Specifically, let $d_n(t)$ be the total number of packet deliveries for client $n$ from time 1 to time $t$. The empirical variance of a client $n$ at time $t$ is defined as the variance of $\frac{d_n(t)}{\sqrt{t}}$ across all $1000$ independent runs. The total empirical variance is then the sum of the empirical variances of all clients. Fig.~\ref{fig:variance_policies} shows that VWD has much smaller variances than the other three policies. The ability to properly control variance enables VWD to achieve small AoIs.

We also evaluate the convergence time of VWD. For each system, we randomly select two clients and plot their empirical means, i.e., the average of $\frac{d_n(t)}{t}$ across all independent runs, and empirical variances. Since the objective is to minimize the non-weighted sum of AoIs, the optimal solution to (\ref{eq:objective_function}) has $\mu_n=\mu_u$ and $\sigma_n^2=\sigma_u^2$ for all $n\neq u$. We call the optimal $\mu_n$ and $\sigma_n^2$ obtained from solving (\ref{eq:objective_function}) the theoretical mean and the theoretical variance, respectively. The results are shown in Figs.~\ref{fig:mean_convergence} and \ref{fig:var_convergence}. It can be observed that both the empirical means and the empirical variances of clients indeed converge to their respective theoretical values. The empirical means converges to the theoretical ones very fast. On the other hand, it takes up to $355$ slots for the empirical variances to be within $0.001$ from the theoretical variances. Convergence time may be the reason why empirical AoI is larger than the theoretical one for $N=5$.

\vspace{0.5em}

\subsubsection{Weighted Clients}

We now present the results for the weighted real-time sensing clients. The weights $\alpha_1, \alpha_2,\dots$ are randomly chosen from the range $(1,5)$ and independently from each other. All other parameters are the same as in the non-weighted case. Fig.~\ref{fig:weighted_aoi} shows results for network sizes $N = I = \{5, 10, 20\}$.
VWD still outperforms other policies for all tested systems. Similar to the non-weighted case, it can be observed that the superiority of VWD becomes more significant, and the empirical VWD and the theoretical VWD performance becomes virtually the same with more clients in the system. 

\vspace{0.5em}
\subsubsection{I.I.D. Channels and Predictable Packet Generation}

\begin{figure*}[hbt!]
\begin{center} 
\subfigure[N = 5 Clients.]{\includegraphics[width=0.3\textwidth, height=1.8in]{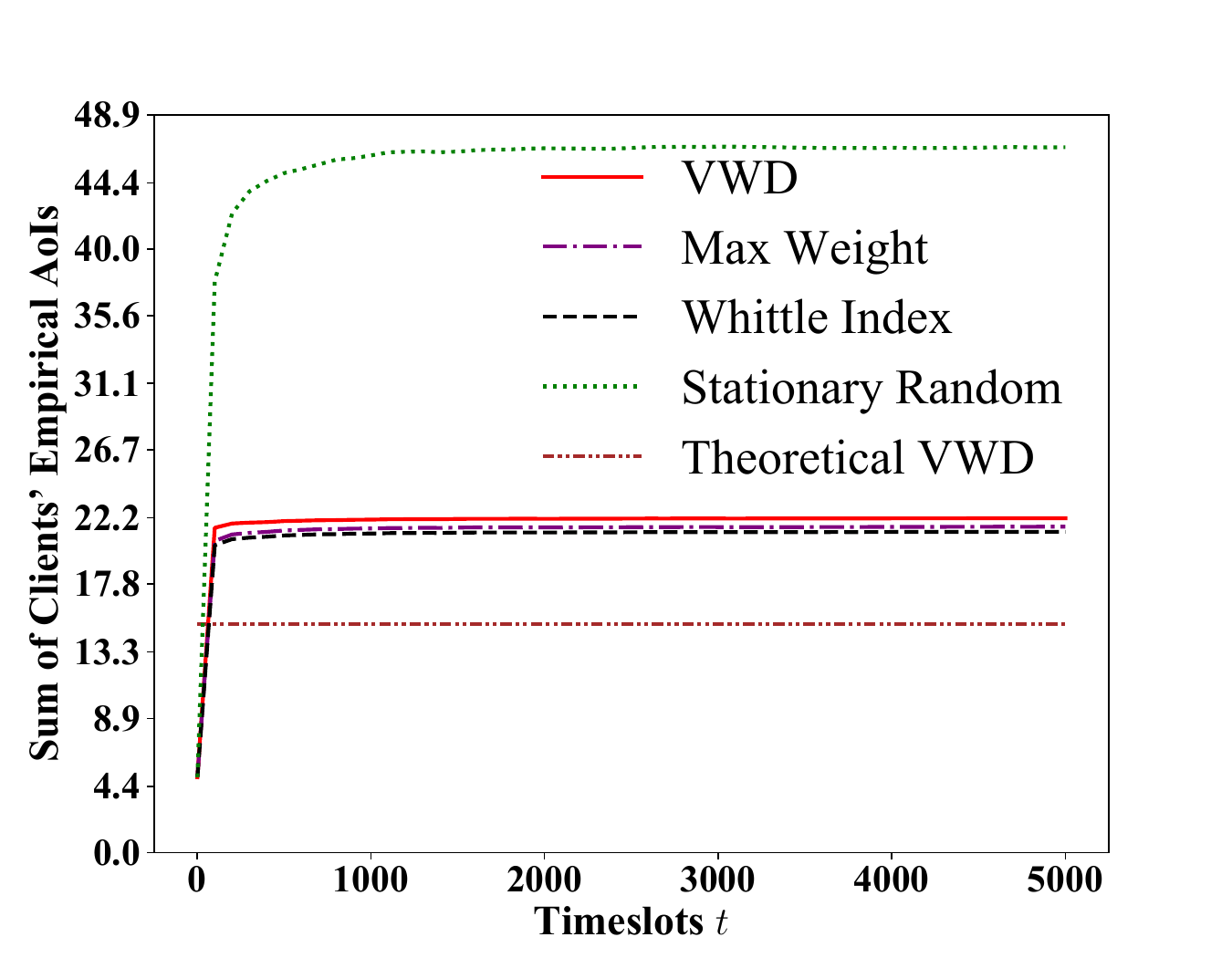}}
\subfigure[N = 10 Clients.]{\includegraphics[width=0.3\textwidth, height=1.8in]{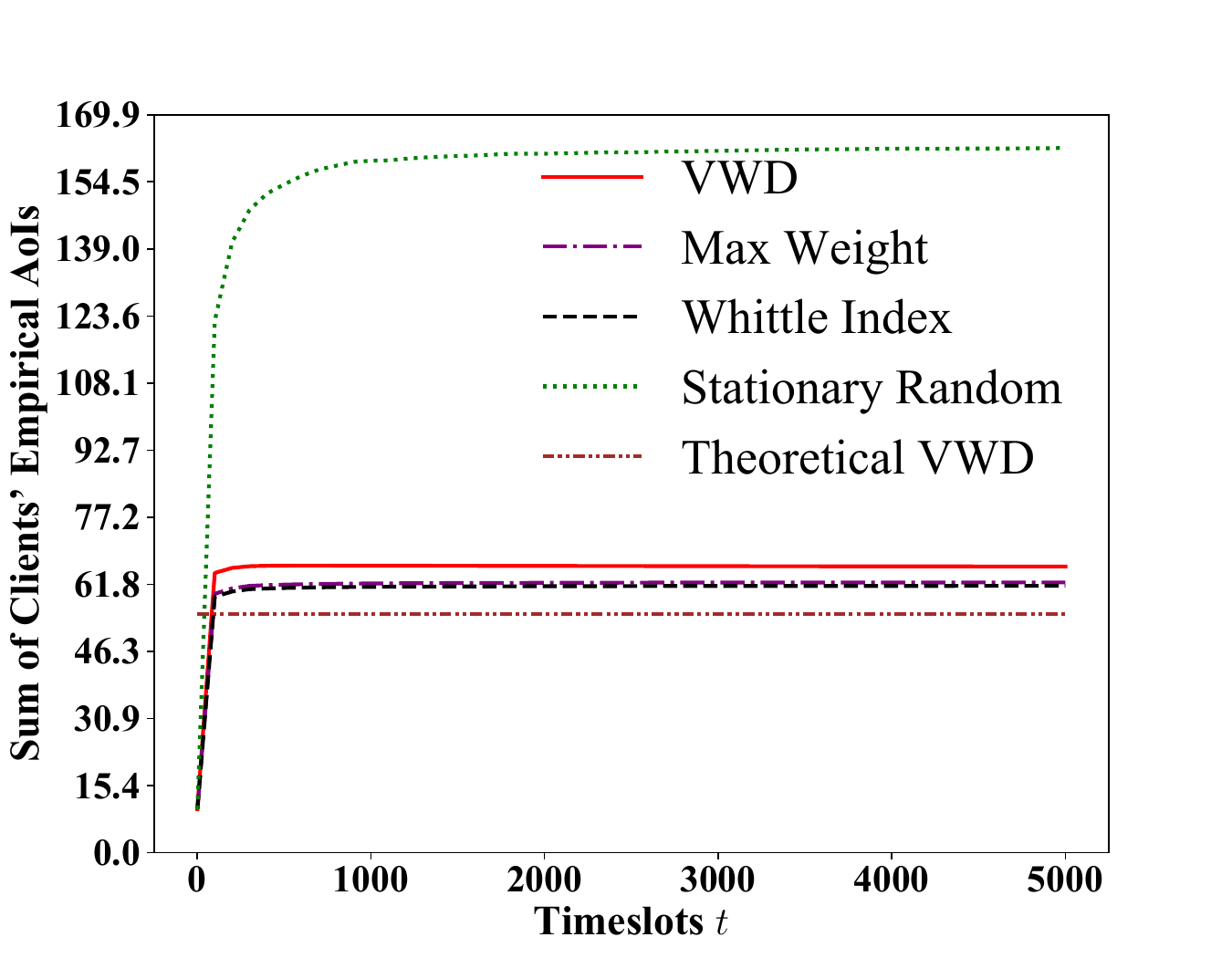}}
\subfigure[N = 20 Clients.]{\includegraphics[width=0.3\textwidth, height=1.8in]{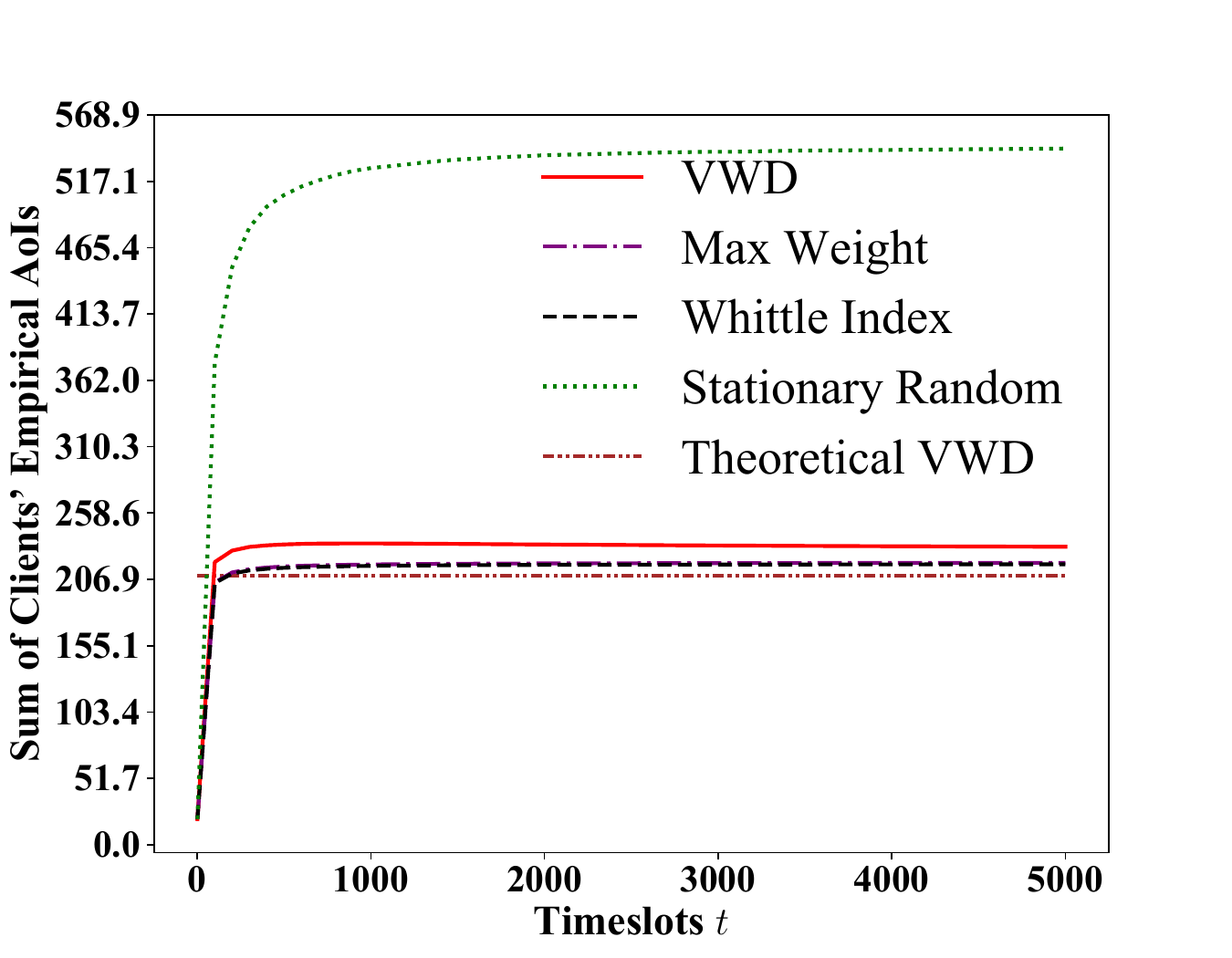}}
\end{center}
\caption{Total empirical age of information (AoI) with i.i.d. channels and predictable packet generation.}
\label{fig:iid_aoi}
\end{figure*}

{\color{blue}
As noted above, all three baseline policies, namely, Whittle index, stationary randomized, and max weight, were all developed for the special case when the channels are i.i.d. and the controller knows the packet generation times. Under our model, we can create such a scenario by making $p_n+q_n=1$ and $\lambda_n=1$, for all $n$. We have evaluated such scenarios for 5, 10, and 20 clients. We choose $p_n$ randomly from the range $[0.05, 0.95]$ and set $q_n=1-p_n$ and $\lambda_n=\alpha_n=1$. The simulation results are shown in Fig.~\ref{fig:iid_aoi}. We note that the Whittle index and max weight policies perform slightly better than VWD because they are specifically designed for this setting. Still, VWD performs very close to these two policies and the difference between VWD and these two policies is less than $6\%$ in all three systems.}

\subsection{Live Video Streaming Clients Optimization} \label{subsection:drop_min}

In this section, the objective is to maximize the timely-throughput of live video streaming clients, or equivalently minimize the outage rate for $N=J$ clients, $\sum_n \beta_n \overline{Out}_n + \gamma_n \ell_n^2$.
Each client also has a Gilbert-Elliott channel with transition probabilities $p_n$ and $q_n$.

We compare our policy, VWD, against two other policies on this problem. We first provide a description of the policies we compare against.
 \begin{itemize}
 \vspace{0.5em}
    \item \textbf{Weighted Largest Deficit (WLD):} This policy was introduced by Hsieh and Hou in \cite{hsieh20}. The policy considers clients in the ON channel state, and picks the client with the largest $(\mu_n t - \sum_{\tau=1}^{t} z(\tau)) / \ell_n$ at time slot $t$.
    \vspace{0.5em}
    \item \textbf{Delivery Based Largest Debt First (DBLDF):} Similar to the WLD policy, the DBLDF policy consider clients in the ON channel state, and schedules one client with the largest $(\mu_n t - \sum_{\tau=1}^{t} z(\tau))$ at time slot $t$.
    \vspace{0.5em}
\end{itemize}

We consider three systems, each with 5, 10, and 20 live video streaming clients, respectively. For each client, we set each of the clients' period values to be $w_n = 1/(N+1)$ to satisfy the means' constraints~\eqref{eq:sufficient:mean}--\eqref{eq:sufficient:total mean}.  In addition, we choose the clients' channel parameters $p_n,q_n$ so that the system is operating in the heavy-traffic regime. More specifically, the heavy-traffic regime has $\sum_n 1/w_n= 1-\prod_{n\in S}\frac{p_n}{p_n+q_n}$. 
After determining the $w_n, p_n$ and $q_n$ values, we obtain the $\sigma_n^2$ values from solving the problem  $\sum_n \beta_n \overline{Out}_n + \gamma_n \ell_n^2$ given the constraints~\eqref{eq:sufficient:variance} -- \eqref{eq:sufficient:non-negative}. We run the simulations for $20$ independent runs each with $10^9$ timeslots, and plot the empirical outage rates given parameters $\beta_n$ and $\gamma_n$ for each client $n$. The performance of each policy is the average of the $20$ independent runs. 
Finally, we also include the solution to the problem~\eqref{eq:objective_function} given the constraints~\eqref{eq:sufficient:mean} -- \eqref{eq:sufficient:non-negative} for temporal variance values, and refer to them as the theoretical VWD in the figures.

\begin{figure*}[t]
\begin{center} 
\subfigure[N = 5 Clients.]{\includegraphics[width=0.3\textwidth, height=1.8in]{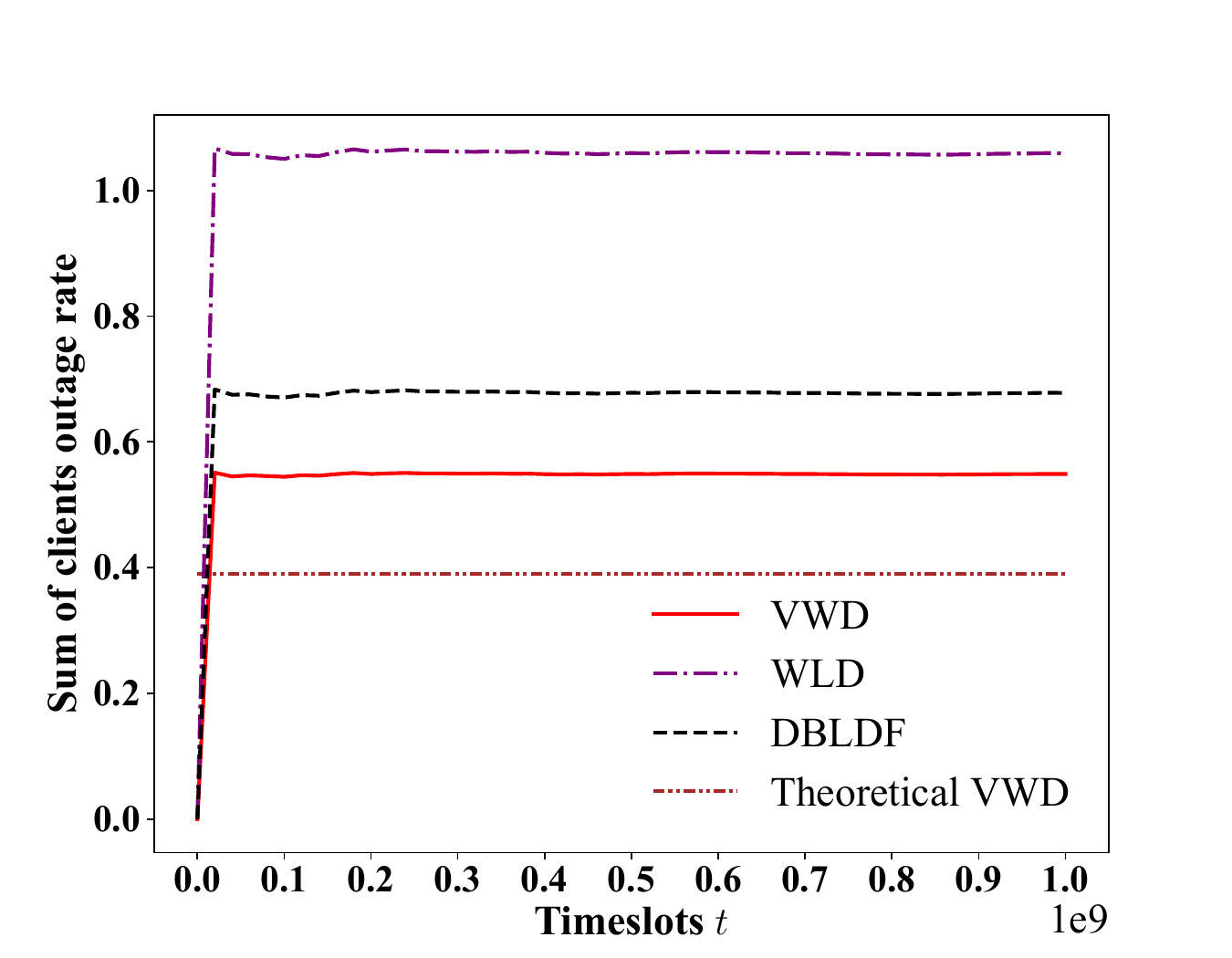}}
\subfigure[N = 10 Clients.]{\includegraphics[width=0.3\textwidth, height=1.8in]{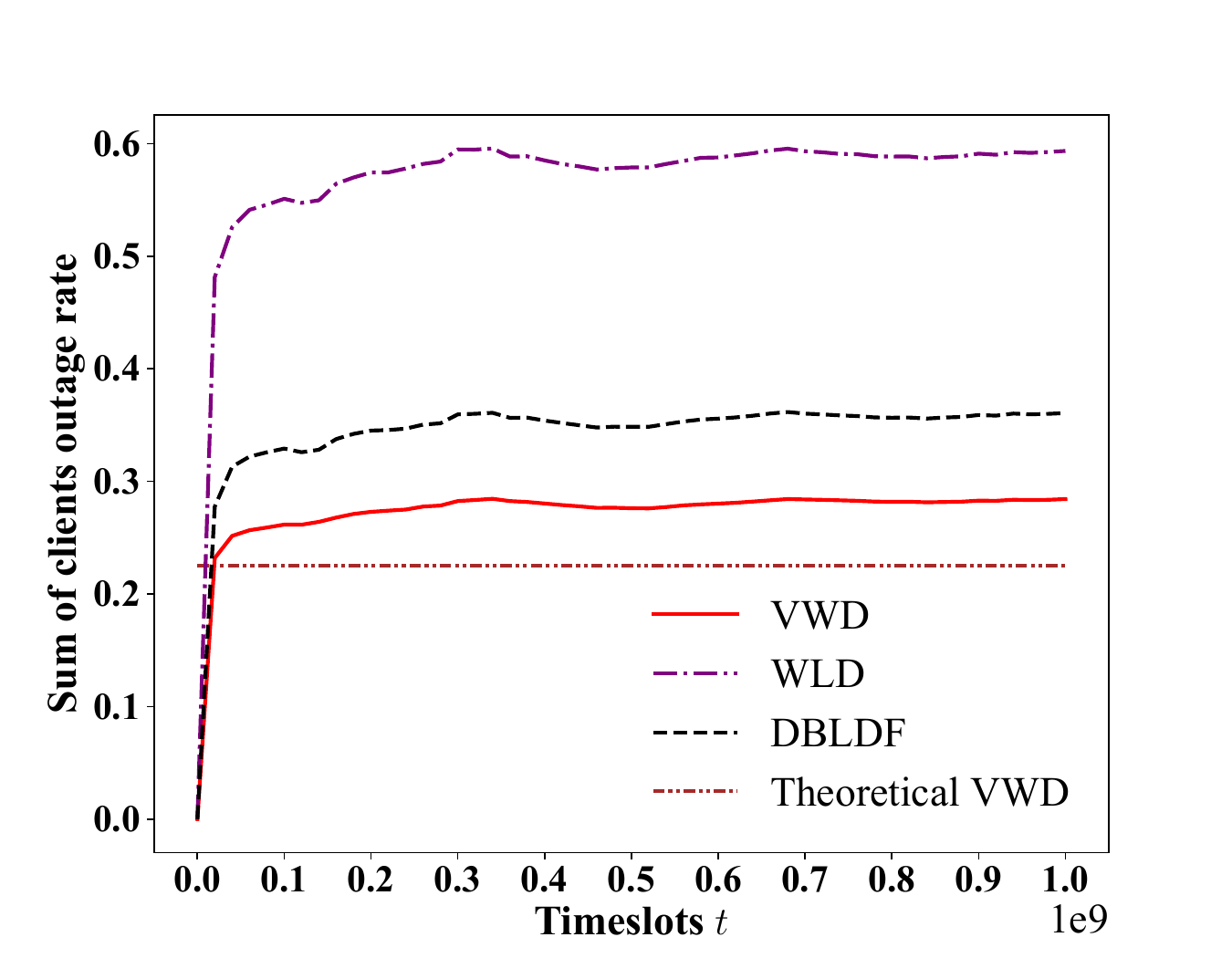}}
\subfigure[N = 20 Clients.]{\includegraphics[width=0.3\textwidth, height=1.8in]{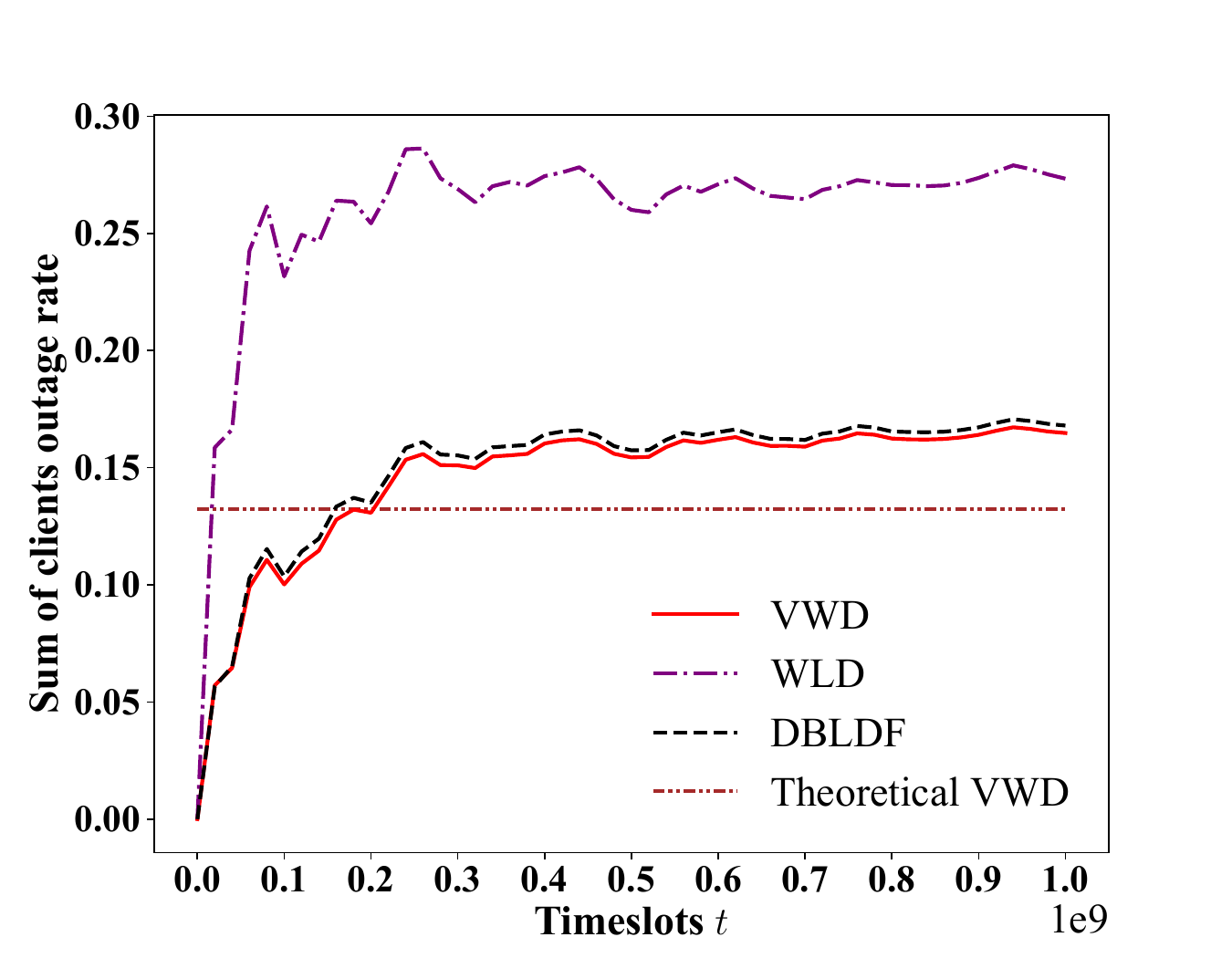}}
\end{center}
\caption{Sum of live video streaming clients' weighted outage rate with fixed delays.}
\label{fig:packet_drop_rate_results}
\end{figure*}

\vspace{0.5em}
\subsubsection{Fixed Delay Values}

We first test the policies for the case when the delay values $\ell_n$ are known in advance. The objective is to minimize the weighted outage rate $\sum_n \beta_n \overline{Out}_n$ given constraints~\eqref{eq:sufficient:mean} -- \eqref{eq:sufficient:non-negative} with $\beta_n = \ell^2_n$ and $\gamma_n = 0$.
For the 5 and 20 clients' systems, the delay value for the first client was selected as $\ell_1 = 10$, with an increment of $10$ for each subsequent client. For the $10$ clients' setup, the first client's delay value was set to $\ell_1 = 15$ with an increment of 10.

Fig.~\ref{fig:packet_drop_rate_results} shows the averaged weighted outage rate for different network size $N=\{5, 10, 20\}$.
 From the figure, it is shown that VWD has the lowest empirical system-wide outage rate of all considered policies. We observe also that VWD performance is close to the theoretical value, with the performance gap decreasing as the number of clients increases.

 \vspace{0.5em}
\subsubsection{Configurable Delay Values} 
\begin{figure*}[t]
\begin{center} 
\subfigure[N = 5 Clients.]{\includegraphics[width=0.3\textwidth, height=1.8in]{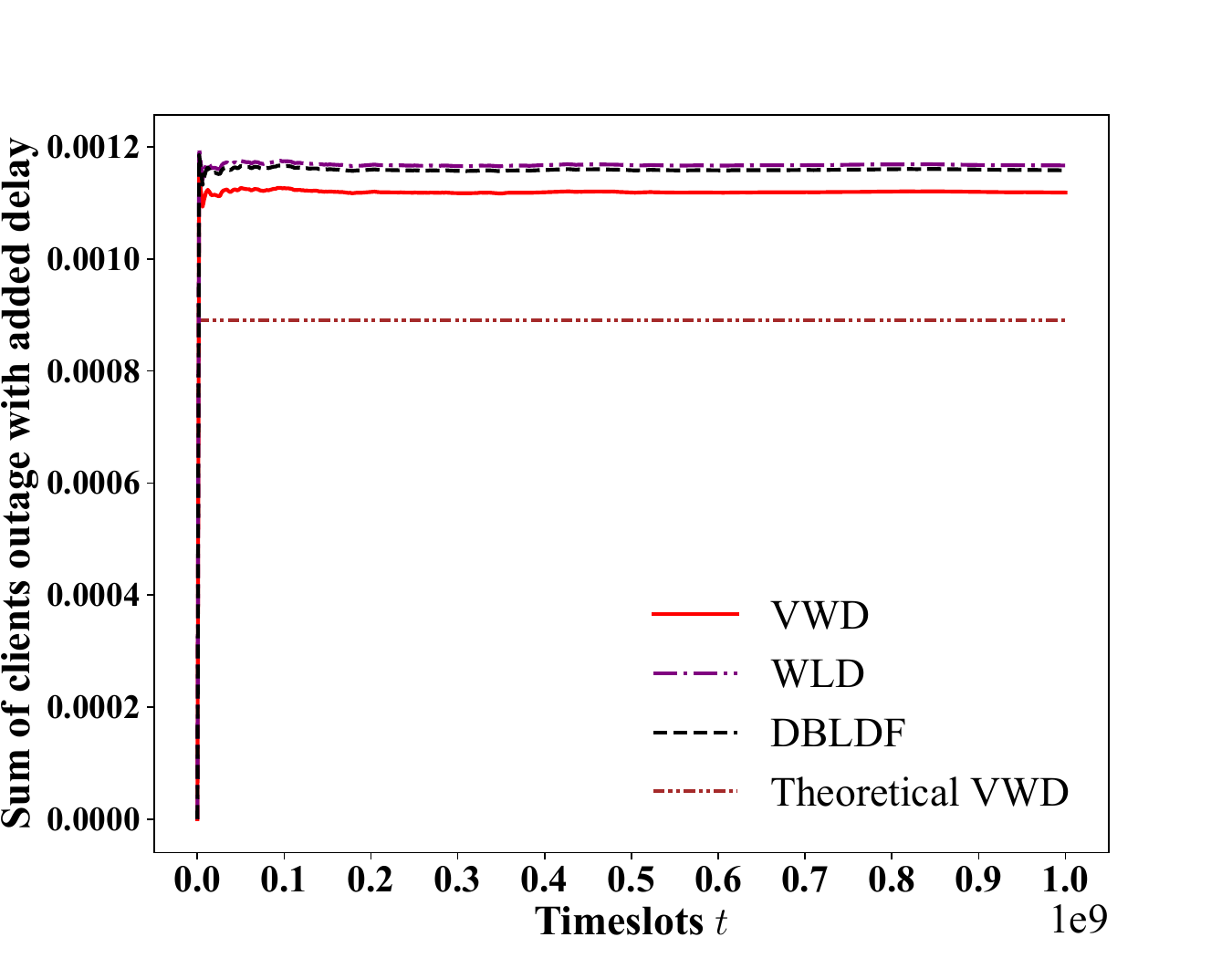}}
\subfigure[N = 10 Clients.]{\includegraphics[width=0.3\textwidth, height=1.8in]{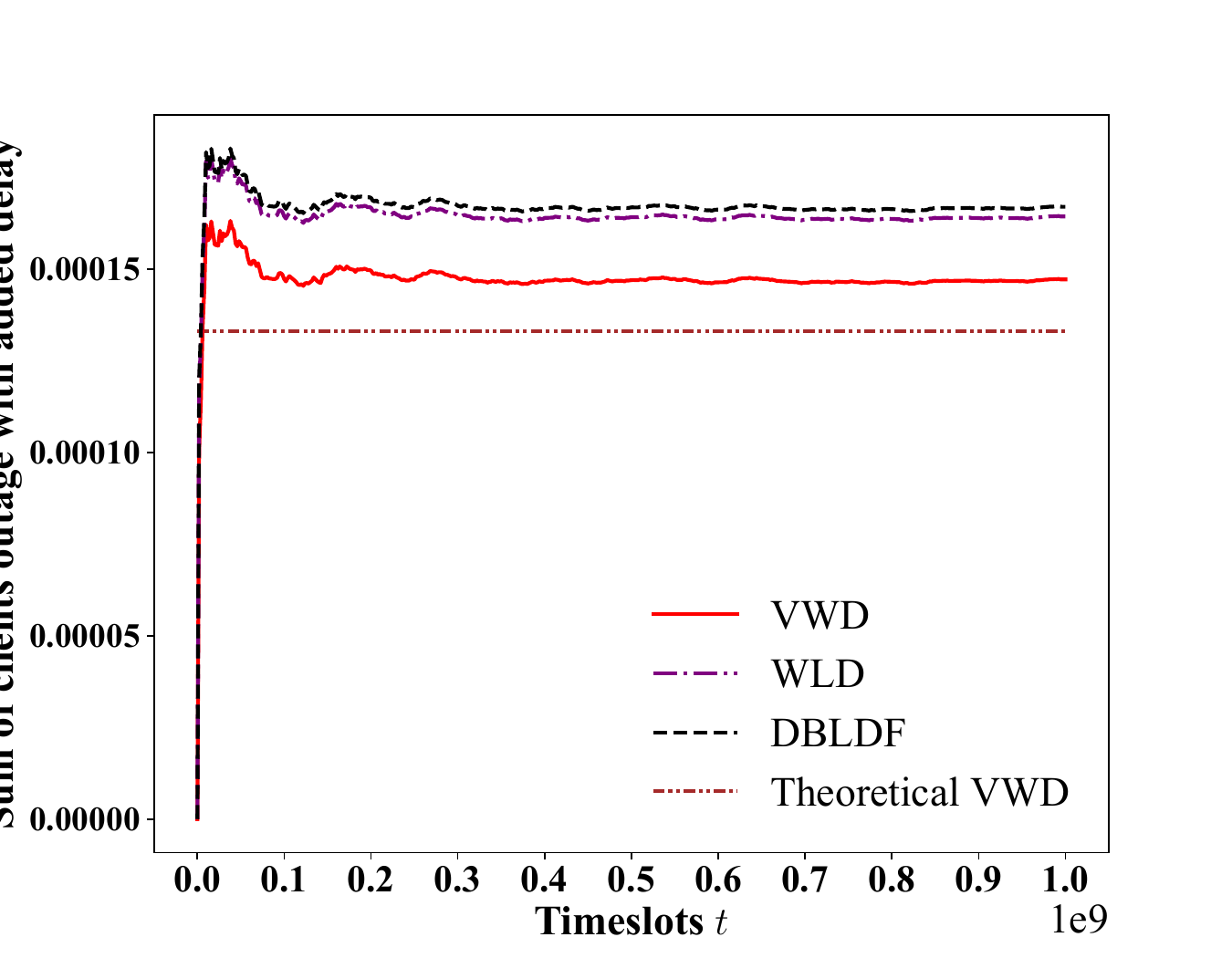}}
\subfigure[N = 20 Clients.]{\includegraphics[width=0.3\textwidth, height=1.8in]{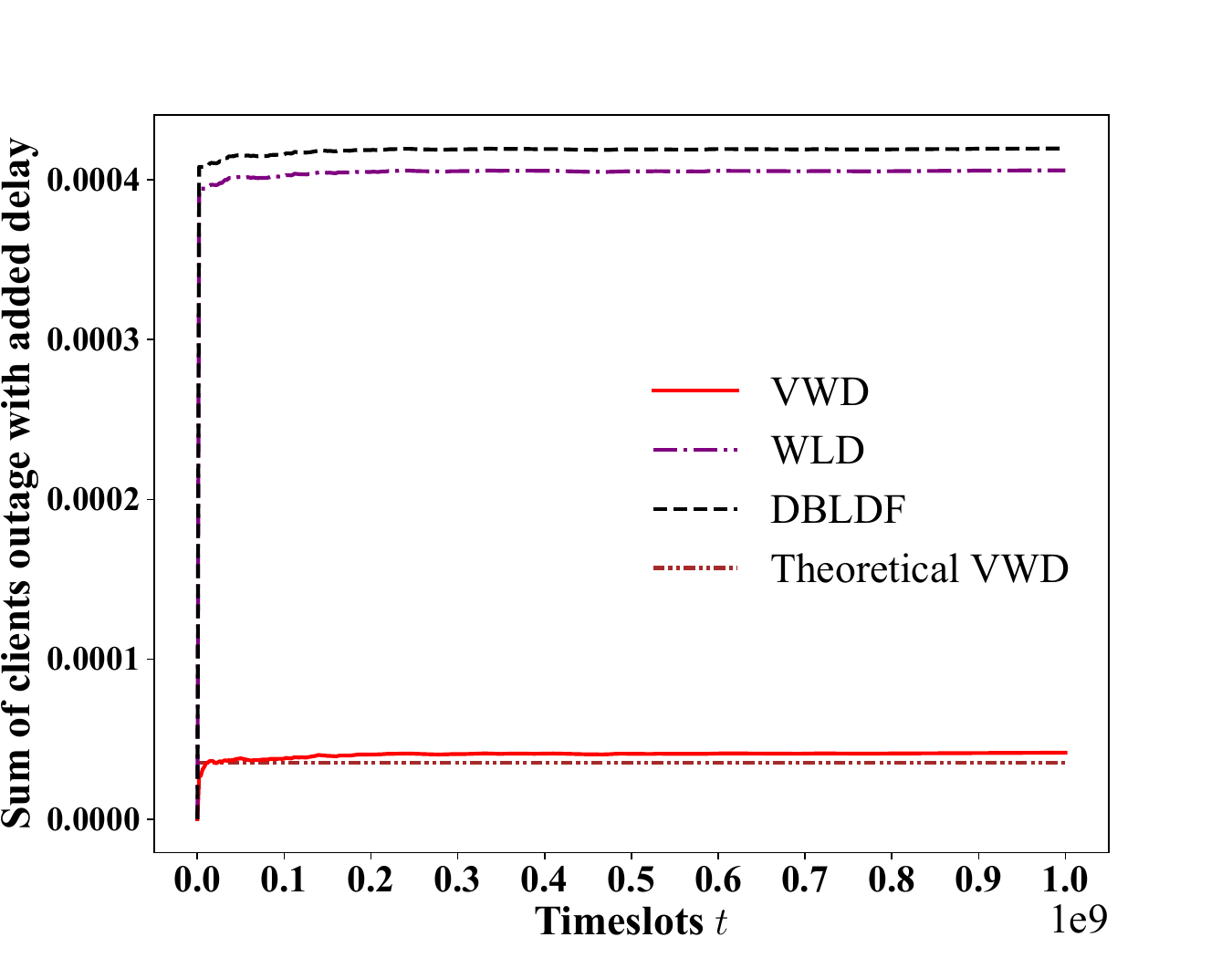}}
\end{center}
\caption{Sum of live video streaming clients' outage rate with added configurable delays.}
\label{fig:added_delay_results}
\end{figure*}

We now show the policies' performance for the case when the delay values $\ell_n$ must be set by the centralized server for each live video streaming client. 
The objective here is to minimize $\sum_n \beta_n \overline{Out}_n + \gamma_n \ell_n^2$ given constraints~\eqref{eq:sufficient:variance} -- \eqref{eq:sufficient:non-negative}.
Here, the parameters are chosen as $\beta_n = 1 \forall n$, and $\gamma_n$ is chosen from the range $[10^{-7}, 10^{-13}]$ for each client $n$.
After solving the network optimization problem and obtaining the temporal variance and delay values, we sum the obtained configurable delay values and refer to it as $\ell_{tot}$. In order to ensure we have approximately the same total delay for other policies, we allocate delay values using $\ell_{tot}$.
For WLD, we solve the problem ~\eqref{eq:objective_function} with delay values picked according to $\frac{\sigma_n}{ \sigma_{tot}} \cdot \ell_{tot}$ for client $n$.
For DBLDF, we set the delay values per client as $\ell_{tot}/N$. 

We provide the averaged results in Fig.~\ref{fig:added_delay_results} for all three systems with 5,10, and 20 clients. It can be seen that VWD performs the best in terms of system-wide average outage rate compared to other policies. Additionally, VWD performance gap between the empirical and theoretical VWD values decreases with a more clients in the system. The other baselines, WLD and DBLDF, have a lower perforamnce for all three systems.

\subsection{System with Both Kinds of Clients} \label{subsec:aoi_outage_results}

\begin{figure*}[t]
\begin{center} 
\subfigure[N = 6 Clients.]{\includegraphics[width=0.3\textwidth, height=1.8in]{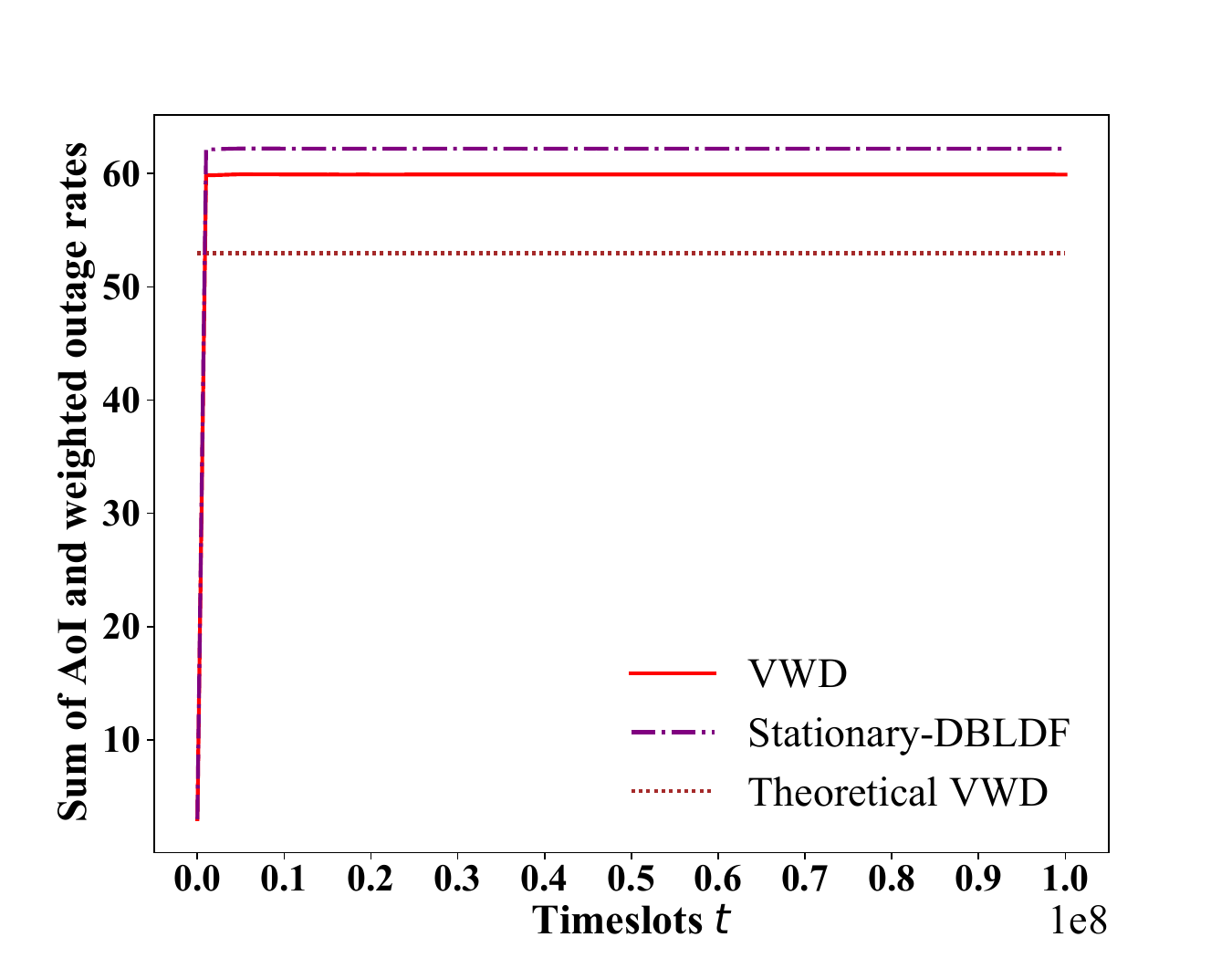}}
\subfigure[N = 10 Clients.]{\includegraphics[width=0.3\textwidth, height=1.8in]{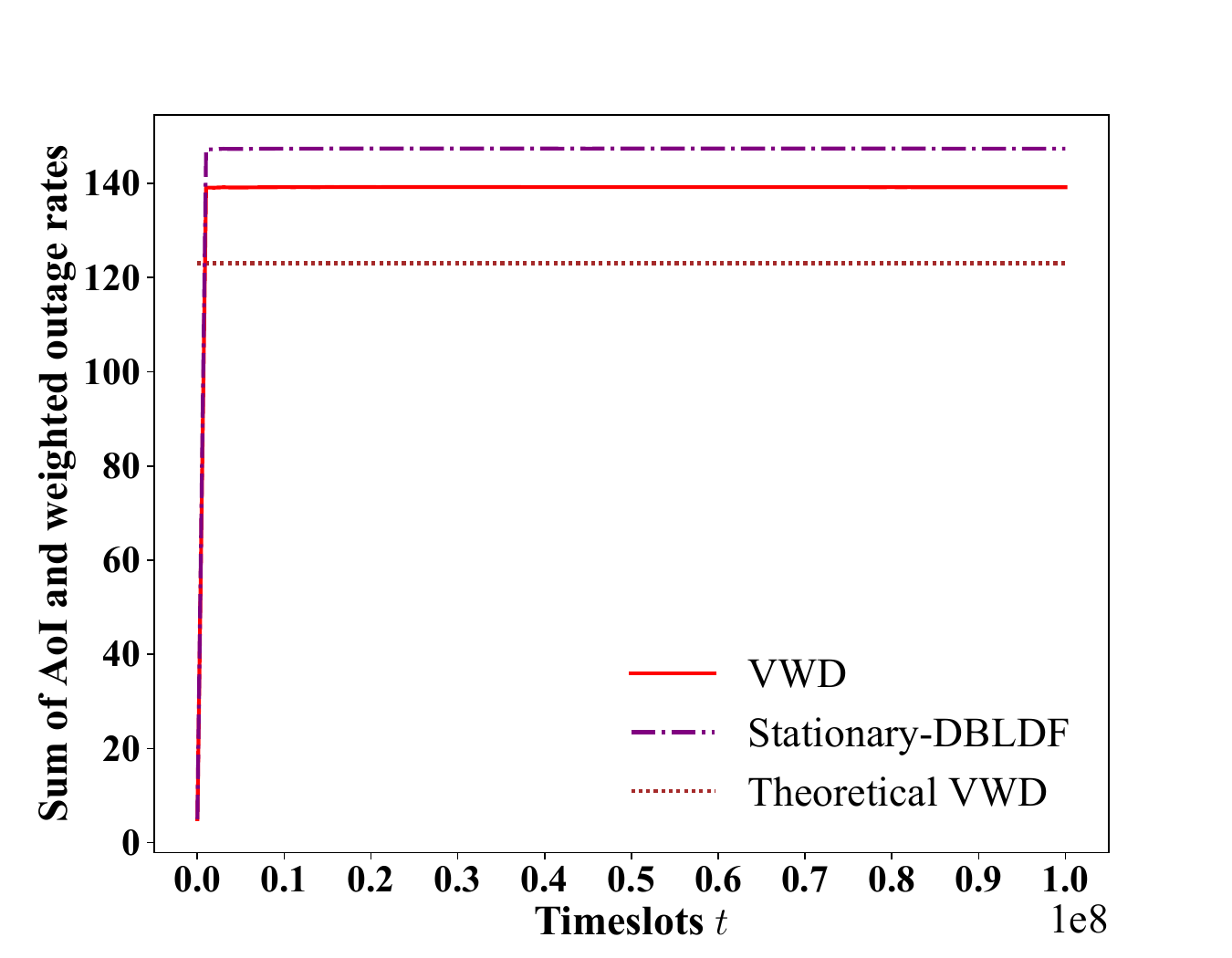}}
\subfigure[N = 20 Clients.]{\includegraphics[width=0.3\textwidth, height=1.8in]{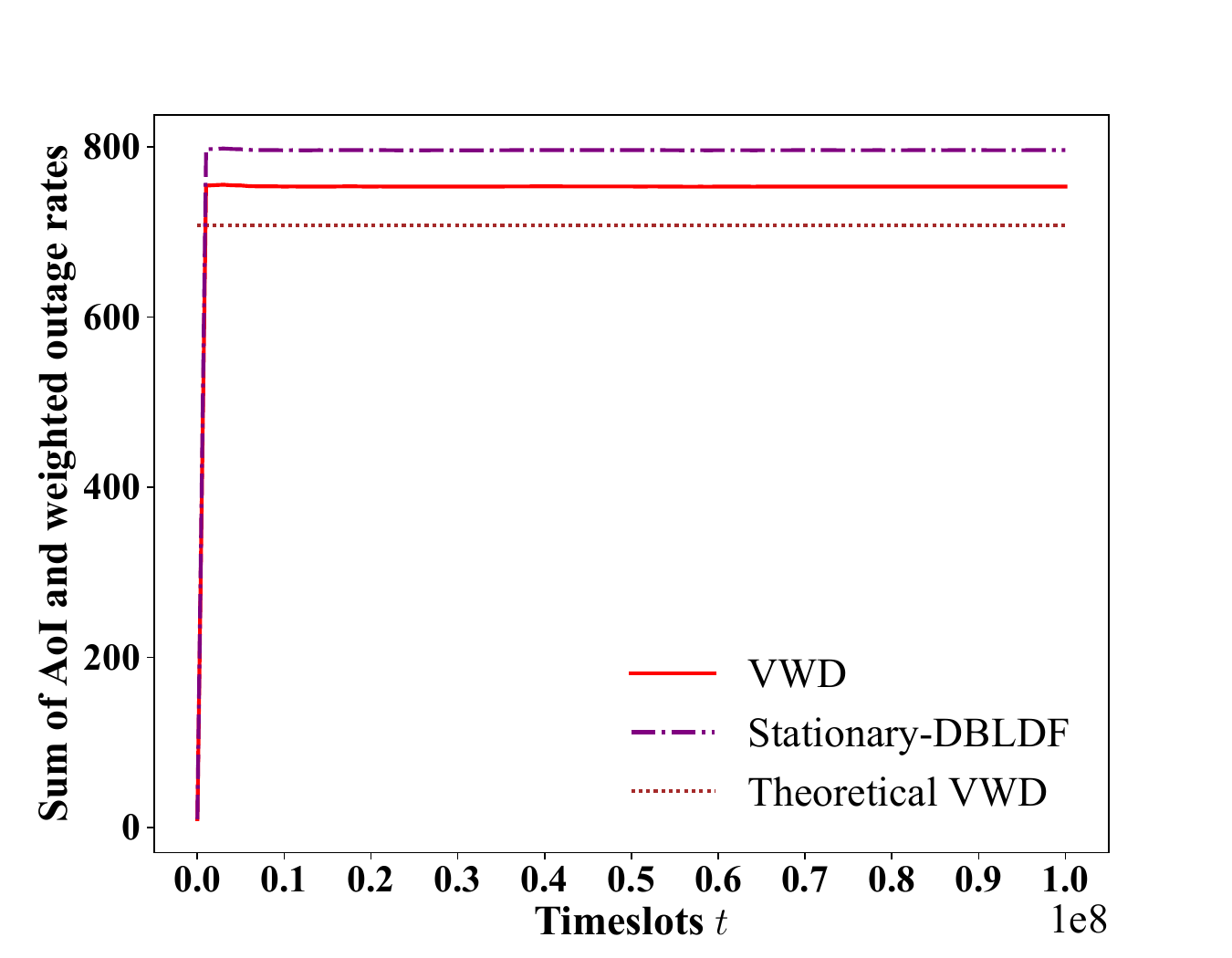}}
\end{center}
\caption{Sum of real-time sensing clients' AoI and live video streaming clients' weighted outage rate.}
\label{Figure:regime_1}
\end{figure*}

For the last case, we evaluate our policy, VWD, on the unsolved optimization problem where we minimize $I$ real-time sensing clients' $\overline{AoI}_n$, and $J$ live video streaming client maximize their timely-throughput or equivalently, minimize their outage rate. The network objective function is then to minimize the sum of real-time sensing AoIs and live video streaming outage rates $\big( \sum_{n=1}^{I} \overline{AoI}_n + \sum_{n=I+1}^{I+J} \beta_n \overline{Out}_n + \gamma_n \ell_n^2\big)$.

We emphasize that optimizing different clients' objective functions was not studied in prior works. For this reason, we can only evaluate VWD against a policy we designed and call stationary-DBLDF. 
The stationary-DBLDF policy combines both the stationary random from the real-time sensing clients' case, and the DBLDF policy from the live video streaming clients' case. 
We combine these two policies together since they were the best-performing baselines in the real-time sensing and live video streaming cases.
In odd timeslots, the stationary-DBLDF policy picks an ON client according to the stationary random policy described in Section~\ref{subsection:aoi_min}. In even timeslots, the policy picks a client in the ON channel state according to the DBLDF policy described in Section~\ref{subsection:drop_min}.

We consider three systems with $6, 10, $ and $20$ clients. In each system, the first half of clients are real-time sensing clients, while the second half are live video streaming clients, i.e. $I = J = N/2$.
The $I$ real-time sensing clients $\{\lambda_n\}$ values were randomly chosen from the range $(\frac{0.01}{N}, \frac{0.1}{N})$. For the $J$ live video streaming clients, their respective packet arrival rate was set to $w_n = 1/(N+1)$. For the real-time sensing clients, we solve for the mean values $\mu_n$ such that the sum of client means is equal to the channel mean $\sum_{n=1}^I \mu_n - \sum_{n=I+1}^{I+J} w_n = m_s$. 
We run the simulations for $20$ independent runs each with $10^8$ timeslots, and plot the sum of empirical AoI and outage rates for the three considered network systems. The plotted performance of the two policies, VWD and stationary-random, is the average of $20$ independent runs.

We test the two policies, VWD and stationary-random, with $\alpha_n = 1$ for all real-time sensing clients $n=1,2,\hdots,I$.
For live video streaming clients, we set $\beta_n = \ell_n^2$ and $\gamma_n = 0$ for all $n = I+1, I+2,\hdots, I+J$.
For the $6$ and $20$ clients' systems, the first live video streaming delay values $\ell_{n=I+1}$ was set to $10$ with an increment of $10$. For the $10$ clients' system, the first live video streaming client's delay $\ell_{n=I+1}$ was set to 15 with an increment of $10$ for each subsequent live video streaming client.

The averaged results are shown in Fig.~\ref{Figure:regime_1} along with the theoretical VWD value obtained by plugging in mean and temporal variance values into Eq.~\eqref{eq:objective_function}. In this setting, VWD outperforms the stationary-DBLDF policy in all three clients' systems. Moreover, VWD's empirical performance gap compared to the stationary-random policy also increases as the number of clients $N$ increases.

\section{Related Works} \label{sec:related}
%Scheduling
There have been many works on scheduling in wireless networks for minimizing AoI. In~\cite{tripathi17}, Tripathi and Moharir schedule over multiple orthogonal channels and propose Max-Age Matching and Iterative Max-Age Scheduling, which they show to be asymptotically optimal.
Hsu, Modiano and Duan~\cite{Hsu17} studied the problem of scheduling updates for multiple clients where the updates arrive i.i.d. Bernoulli, and formulate the Markov decision process (MDP) and prove structural results and finite-state approximations. In~\cite{Hsu18}, Hsu follows up this work by showing that a Whittle index policy can achieve near optimal performance with much lower complexity.
Sun et al.~\cite{Sun18} studied scheduling for multiple flows over multiple servers, and show that maximum age first (MAF)-type policies are nearly optimal for i.i.d. servers.
In~\cite{Talak20}, Talak, Karaman and Modiano study scheduling a set of links in a wireless network under general interference constraints.
The optimization of AoI and timely-throughput were studied in~\cite{Kadota18a,Lu18}. All of these works assume i.i.d. channels while our work focuses on different Gilbert-Elliot channels per client.

%Non i.i.d. models
There have been a limited number of works on Markov channel and source models related to AoI. In the recent work~\cite{pan20}, Pan et al. study scheduling a single source and choosing between a Gilbert-Elliott channel and a deterministic lower rate channel.
Buyukates and Ulukus~\cite{buyukates20} study the age-optimal policy for a system where the server is a Gilbert-Elliott model and one where the sampler follows a Gilbert-Elliott model.
In~\cite{nguyen19}, Nguyen et al. analyze the Peak Age of Information (PAoI) of a two-state Markov channel with differing cases of channel state information (CSI) knowledge.
Kam et al.~\cite{kam18} study the remote estimation of a Markov source, and they propose effective age metrics that capture the estimation error. Our work differs in that we focus on scheduling for multiple clients from a single AP over parallel non-i.i.d. channels. 
There have been some recent efforts on studying short-term performance through Brownian motion approximation~\cite{hsieh16,hou17,hsieh20,guo21}. However, each of these listed works is limited to a specific channel model and a specific application. 

% scheduling with deadlines
Recent works proposed scheduling policies for maximizing timely-throughput over wireless networks. Hou et al. proposed a model for wireless networks with strict per-packet deadlines and studied the capacity of wireless networks to support timely-throughputs \cite{hou2009theory}.
Kim et al. \cite{kim2014} studied the multicast scheduling problem for traffic with hard deadlines over unreliable wireless channels and provided a greedy policy that maximizes immediate weighted sum throughput per timeslot.
Lu et al. \cite{srikant2016} considered ad hoc wireless networks with real-time traffic with hard scheduling deadlines, and demonstrated that their algorithm achieves the QoS requirements.
Talak and Modiano \cite{modiano2019} considered a single-server queuing system serving one packet and studied the tradeoff between AoI and packet delay rates, and showed that as the AoI approaches its minimum, the packet delay and its variance approach infinity. 
Additionally, Liu et al. \cite{srikant2021} studied the scheduling problem of scheduling in wireless networks under both packet deadline and power constraints.
Sun et al. \cite{sun2021} also considered minimizing weighted average AoI subject to timely throughput constraints and provided two scheduling policies that are close to the theoretical AoI bound.
All of these works considered the same delay value for all packets generated within a fixed time frame.

% timely throughput 

Another line of work studied timely-throughput for wireless networks with delays assigned per packet. 
 Singh and Kumar \cite{singh2015} described decentralized policies for maximizing throughput with deadline constraints by solving a Lagrangian dual of a Markov Decision Process (MDP).
Chen and Huang \cite{chen2018} studied a stochastic single-server multi-user system, and identified the timely-throughput improvement from predictive scheduling. Singh and Kumar \cite{singh2019} proposed an optimal scheduling policy for multihop wireless network serving multiple flows with hard packet deadlines.
Singh and Kumar \cite{singh2021} also studied the problem of maximizing the throughput of packets with hard end-to-end deadlines in multihop wireless networks. The studied model considered stochastic links and provided a decentralized network controller to maximize timely throughput of the network. The solution was obtained from the decomposition of the Lagrangian of a constrained MDP.
All the previous works solves a per-packet decomposition of an MDP that captures the system's evolution.

\section{Conclusion} \label{sec:conclusion}

In this paper, we presented a theoretical second-order framework for wireless network optimization. This framework captures the behaviors of all random processes by their second-order models, namely, their means and temporal variances. We analytically established a simple expression of the second-order capacity region of wireless networks. A new scheduling policy, VWD, was proposed and proved to achieve every interior point of the second-order capacity region. 
The framework utility is demonstrated by applying it to the problems of real-time sensing optimization, live video streaming optimization, and the problem of jointly optimizing a system with real-time sensing clients and live video streaming clients over Gilbert-Elliott channels. 

We derived closed-form expressions of second-order models for both Gilbert-Elliott channels, AoIs, and timely-throughput. Moreover, we formulated the objective function as an optimization problem over the means and temporal variances of delivery processes. The solution of this optimization problem can then be used as parameters for VWD. Simulation results show that VWD achieves better system-wide performance compared to the baselines in all cases. This result is significant when one considers that the baselines are limited to only minimizing AoI or maximizing timely throughput, while our general-purpose second-order policy achieves a better performance in all settings.

\bibliographystyle{IEEEtran}
\bibliography{reference}

\begin{IEEEbiography}[{\includegraphics[width=1in,height=1.25in,clip,keepaspectratio]{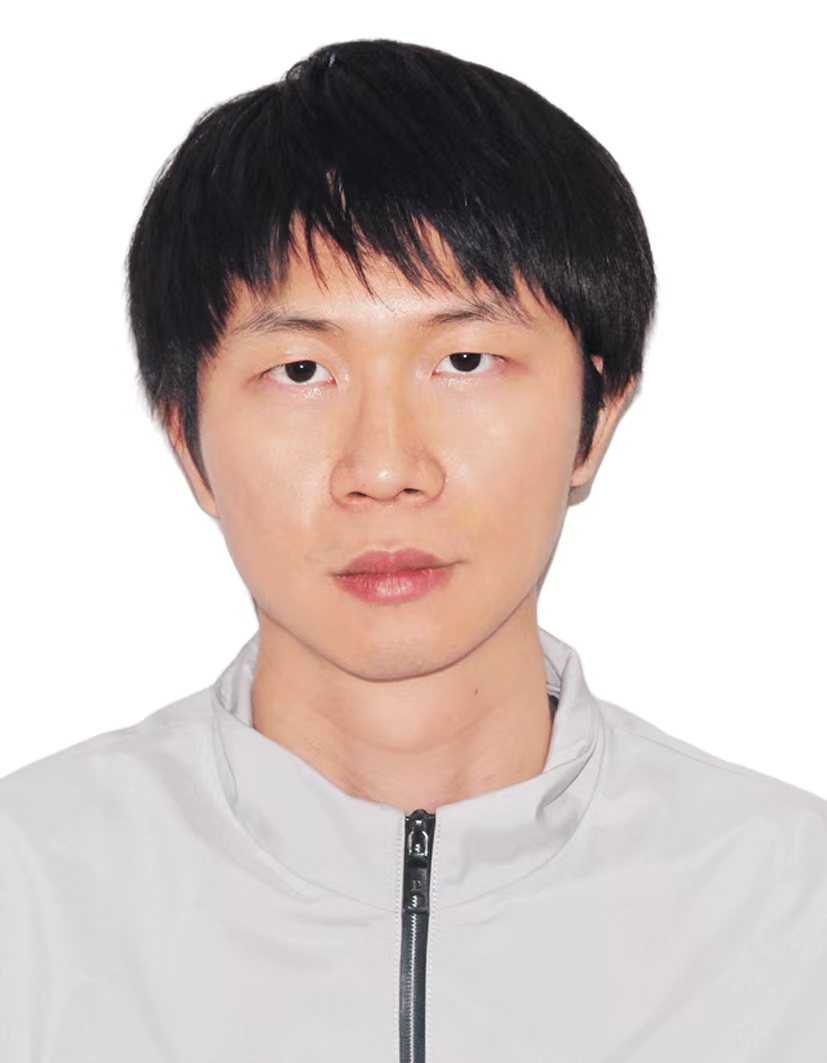}}]{Daojing Guo}
 received his Ph.D. from the Electrical and Computer Department of Texas A\&M University. His research interests include wireless networks, edge/cloud computing, and Vehicle-to-X. 
\end{IEEEbiography}

\vspace{11pt}

\begin{IEEEbiography}[{\includegraphics[width=1in,height=1.25in,clip,keepaspectratio]{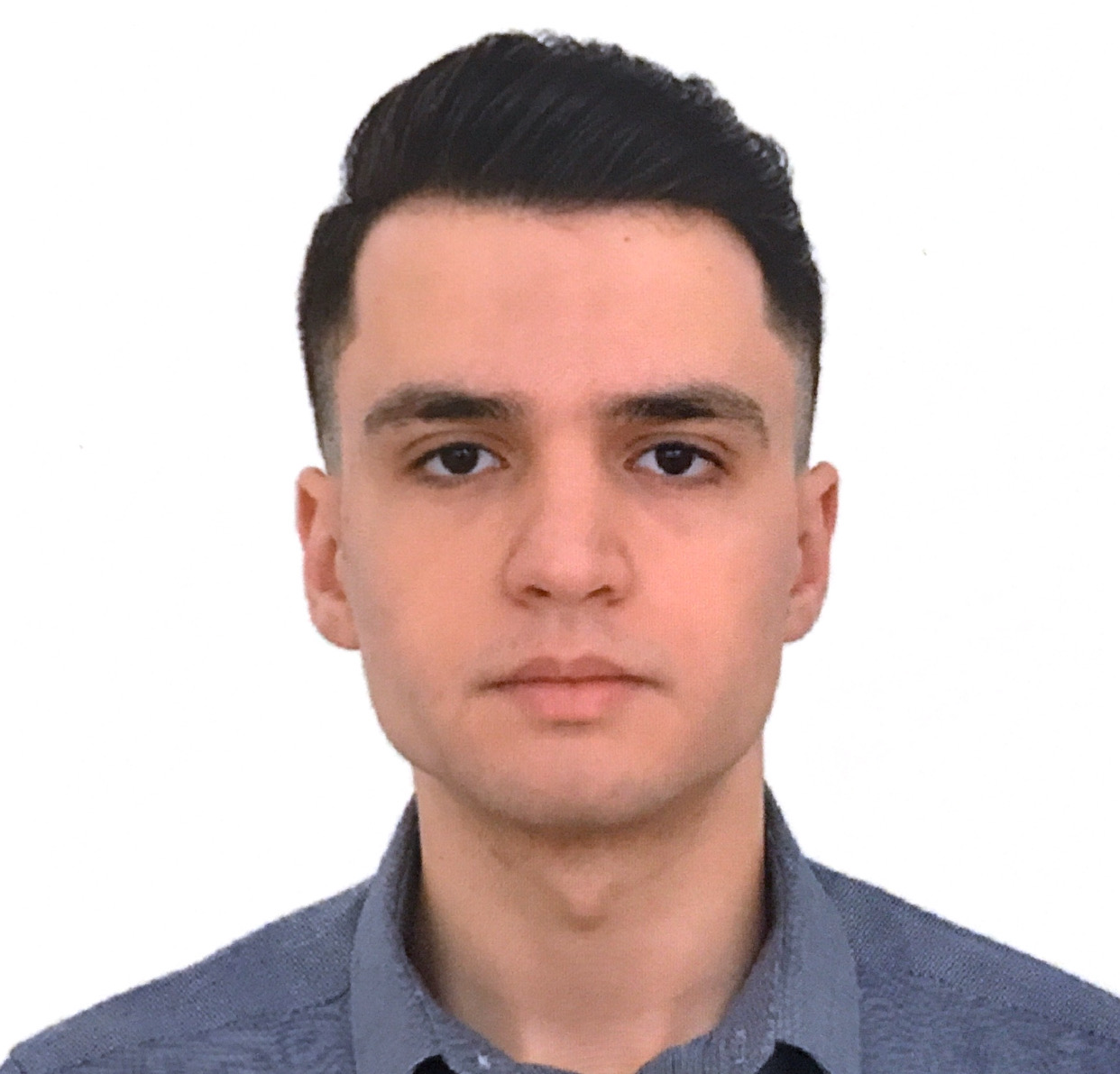}}]{Khaled Nakhleh} (Student Member, IEEE) received the M.S. degree in electrical engineering from Texas A\&M University, College Station, TX, where he is currently a Ph.D. electrical engineering student.
His research interests include reinforcement learning, distributed control, and multi-agent systems.
\end{IEEEbiography}

\vspace{11pt}

\begin{IEEEbiography}[{\includegraphics[width=1in,height=1.25in,clip,keepaspectratio]{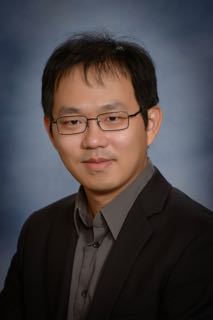}}]{I-Hong Hou}
(Senior Member, IEEE) is an Associate Professor in the ECE Department of the Texas A\&M University. He received his Ph.D. from the Computer Science Department of the University of Illinois at Urbana-Champaign. His research interests include wireless networks, edge/cloud computing, and reinforcement learning. His work has received the Best Paper Award from ACM MobiHoc 2017 and ACM MobiHoc 2020, and Best Student Paper Award from WiOpt 2017. He has also received the C.W. Gear Outstanding Graduate Student Award from the University of Illinois at Urbana-Champaign, and the Silver Prize in the Asian Pacific Mathematics Olympiad.
\end{IEEEbiography}

\vspace{11pt}

\begin{IEEEbiography}[{\includegraphics[width=1in,height=1.25in,clip,keepaspectratio]{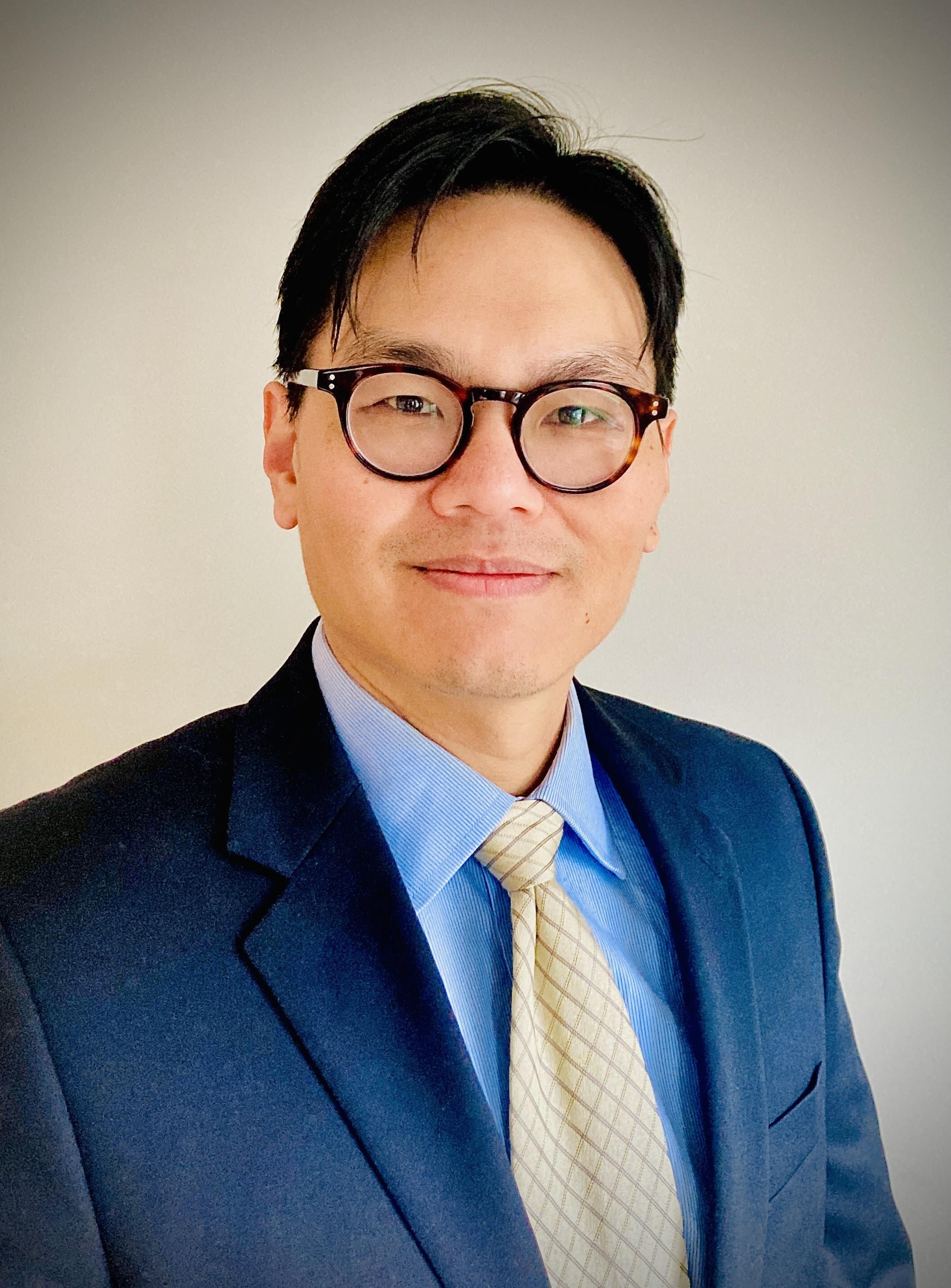}}]{Clement Kam}
(Senior Member, IEEE) received the BS degree in electrical and computer engineering from Cornell University, Ithaca, NY, in 2001, and the M.S. and Ph.D. degrees in electrical engineering from the University of California, San Diego, in 2006 and 2010, respectively. He is currently the Head of the Wireless Network Theory Section at the U.S. Naval Research Laboratory, Washington, DC. His research interests include ad hoc networks, cross-layer design, Age of Information, Semantics of Information, and reinforcement learning.
\end{IEEEbiography}

\vspace{11pt}

\begin{IEEEbiography}[{\includegraphics[width=1in,height=1.25in,clip,keepaspectratio]{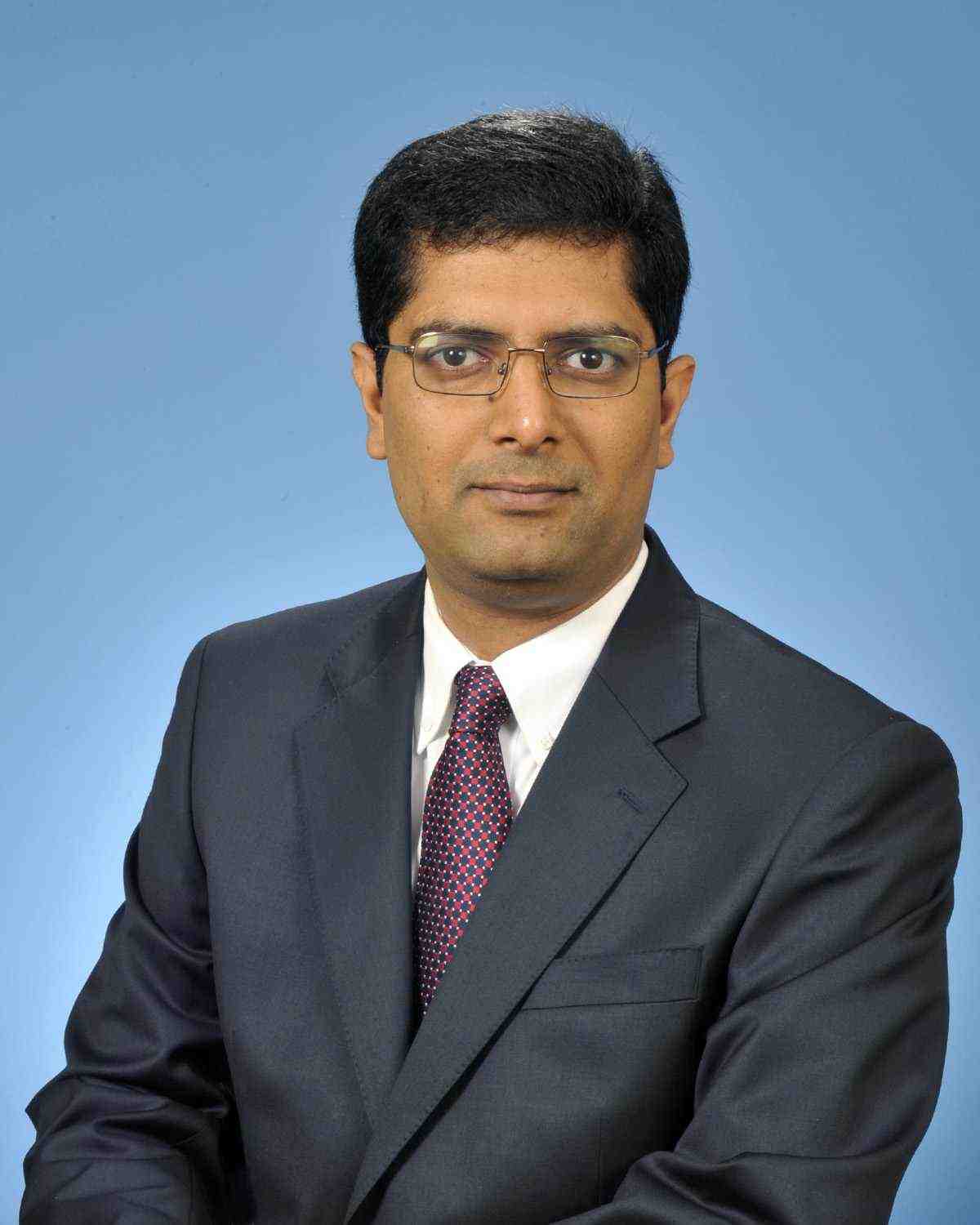}}]
{Sastry Kompella} (Fellow, IEEE) earned his Ph.D. in computer engineering from the Virginia Polytechnic Institute and State University in 2006. Currently, he serves as the CEO of Nexcepta, Inc., an advanced R\&D company providing cutting-edge technical solutions and mission-critical capabilities to the Department of Defense (DoD). Prior to this role, he held the position of Section Head for the Wireless Network Research Section within the Information Technology Division at the U.S. Naval Research Laboratory in Washington, DC, USA. He has authored over 200 (six award winning) publications, 14 NRL Reports, one patent, one book, and 5 book chapters. He currently serves as an associate editor for IEEE/ACM Transactions on Networking. His research interests encompass various aspects of wireless networks, including cognitive radio, dynamic spectrum access, and age of information.
\end{IEEEbiography}

\end{document}